\def\techreport{true}
\crefname{ineq}{inequality}{inequalities}
\definecolor{red}{HTML}{E51400}  
\definecolor{blue}{HTML}{0050EF} 
\definecolor{green}{HTML}{008A00} 
\definecolor{purple}{HTML}{AA00FF} 
\definecolor{dark-red}{rgb}{0.4, 0.15, 0.15}
\definecolor{dark-blue}{rgb}{0.15, 0.15, 0.4}
\definecolor{medium-red}{rgb}{0.5, 0, 0}
\definecolor{medium-blue}{rgb}{0, 0, 0.5}
\definecolor{light-red}{rgb}{0.7, 0, 0}
\definecolor{light-blue}{rgb}{0, 0, 0.7}
\tikzset{->, >=stealth, auto,
  font=\small,
  main node/.style={circle, draw, inner sep=0cm, minimum size=0.5cm,
    fill=white, align=center},
  seed node/.style={circle, draw, inner sep=0cm, minimum size=0.5cm,
    fill=green!20, align=center}
}
  \newcommand{\onlytech}[1]{\ignorespaces}
  \newcommand{\onlypaper}[1]{#1}
  \def\thm@space@setup{%
  \thm@preskip=0.75\topsep
  \thm@postskip=\thm@preskip 
  }
  \newcommand{\onlytech}[1]{#1}
  \newcommand{\onlypaper}[1]{\ignorespaces}
  \def\@copyrightspace{\relax}
\declaretheorem{theorem}
\declaretheorem{lemma}
\declaretheorem{definition}
\newcommand{\mynoindent}{\noindent}
\newcommand\argmax{\operatorname{arg\,max}}
\newcommand{\compilehidecomments}{false}
	\newcommand{\wei}[1]{}
	\newcommand{\john}[1]{}
	\newcommand{\yishi}[1]{}
  \newcommand{\wei}[1]{{\color{blue}  [\text{Wei:} #1]}}
  \newcommand{\john}[1]{{\color{green}  [\text{John:} #1]}}
  \newcommand{\yishi}[1]{{\color{red}  [\text{Yishi:} #1]}}
\newcommand{\rdelta}[1]{\big\lfloor{#1}\big\rfloor_\delta}
\newcommand{\bigrdelta}[1]{\Big\lfloor{#1}\Big\rfloor_\delta}
\newcommand{\rdeltav}[2]{\Big\lfloor{#1}\Big\rfloor_{#2}}
\newcommand{\indicator}[1]{\mathbb{I} ({#1})}
\newcommand{\edgep}[3]{{}p^{#3}_{#1,#2}}
\newcommand{\udelv}{u\backslash v}
\title{Boosting Information Spread: \\ An Algorithmic Approach}
\author{Yishi Lin, Wei Chen~\IEEEmembership{Member,~IEEE}, John C.S. Lui~\IEEEmembership{Fellow,~IEEE}
  \onlypaper{\thanks{Manuscript received.}}
  \thanks{Yishi Lin and John C.S. Lui are with the Department of Computer
    Science and Engineering, The Chinese University of Hong Kong, Hong Kong. (Emails: \{yslin,cslui\}@cse.cuhk.edu.hk)}
  \thanks{Wei Chen is with Microsoft Research, Beijing, China. (Email: weic@microsoft.com)}
  \onlytech{
  \thanks{This work has been submitted to the IEEE for possible publication. Copyright may be transferred without notice, after which this version may no longer be accessible.}
  }
  \onlypaper{
  \thanks{This research is supported in part by the GRF \#14630815.}
  }
}
\begin{document}

\ifx\techreport\undefined %
\setlength{\textfloatsep}{0.5\textfloatsep}
\setlength{\dbltextfloatsep}{0.5\dbltextfloatsep}
\setlength{\floatsep}{0.5\floatsep} %
\setlength{\dblfloatsep}{0.5\dblfloatsep}
\setlength{\abovedisplayskip}{0.5\abovedisplayskip}
\setlength{\belowdisplayskip}{0.5\belowdisplayskip} %
\captionsetup[figure]{skip=1pt}
\captionsetup[table]{skip=1pt}
\captionsetup[subfloat]{farskip=0pt,captionskip=0pt}
\fi
%
\maketitle

\begin{abstract}

The majority of influence maximization (IM) studies
  focus on targeting influential \textit{seeders}
  to trigger substantial information spread in social networks.
Motivated by the observation that incentives 
  could ``boost'' users so that they are more likely to be influenced by friends,
  we consider a new and complementary \textit{$k$-boosting problem} which aims
  at finding $k$ users to boost so to trigger a maximized ``boosted'' influence spread.
The $k$-boosting problem is \textit{different}
  from the IM problem because 
  boosted users behave differently from seeders:
  \textit{boosted} users are initially uninfluenced and
  we only increase their probability to be influenced.
Our work also \textit{complements} the IM studies 
  because we focus on triggering larger influence spread
  on the basis of given seeders.
Both the NP-hardness of the problem and the non-submodularity of the 
  objective function pose challenges to the $k$-boosting problem.
To tackle the problem on general graphs,
  we devise two efficient algorithms with the 
  \textit{data-dependent} approximation ratio.
To tackle the problem on bidirected trees,
  we present an efficient greedy algorithm 
  and a dynamic programming that is a fully
  polynomial-time approximation scheme.
Extensive experiments using real social networks
  and synthetic bidirected trees verify the efficiency and
  effectiveness of the proposed algorithms.
In particular, on general graphs, we show that
  boosting solutions returned by our algorithms achieves boosts of influence
  that are up to several times higher than those achieved by boosting intuitive
  solutions with no approximation guarantee.
We also explore the ``budget allocation'' problem experimentally,
  demonstrating the beneficial of allocating
  the budget to \textit{both} seeders and boosted users.

\begin{IEEEkeywords}
  Influence maximization, Information boosting, Social networks, Viral marketing
\end{IEEEkeywords}

\end{abstract}

\section{Introduction}
\label{sec:introduction}

\IEEEPARstart{W}{ith} the popularity of online social networks,
  \textit{viral marketing}, which is
  a marketing strategy to exploit online \textit{word-of-mouth} effects,
  has become a powerful tool for companies to promote sales.
In viral marketing campaigns,
  companies target influential users
  by offering free products or services
  with the hope of triggering a chain reaction of product adoption.
\textit{Initial adopters} or \textit{seeds} are often used
  interchangeably to refer to these targeted users.
Motivated by the need for effective viral marketing strategies,
  \textit{influence maximization} has become a fundamental research
  problem in the past decade.
The goal of influence maximization is usually to identify influential initial adopters~\cite{kempe2003maximizing,
  carnes2007maximizing,chen2009efficient,chen2010scalable,chen2010scalableLT,
  borgs2014maximizing,tang2014influence,tang2015influence}.

In practical marketing campaigns,
  companies often consider \textit{a mixture of multiple promotion strategies}.
Targeting influential users as initial adopters is one tactic,
  and we list some others as follows.
\begin{itemize}[leftmargin=*]
\item \textit{Customer incentive programs}:
Companies offer incentives such as
  coupons or product trials to attract potential customers.
Targeted customers are in general more likely to be
  influenced by their friends.
\item \textit{Social media advertising}:
Companies reach intended audiences via digital advertising. 
According to an advertising survey~\cite{nielsen2015global},
  owned online channels such as brand-managed sites are the second most trusted
  advertising formats, second only to recommendations from family and friends.
We believe that customers targeted by advertisements
  are more likely to follow their friends' purchases.
\item \textit{Referral marketing}:
Companies encourage customers to refer others to use the product by offering 
  rewards such as cash back.
In this case, targeted customers are more likely to influence their friends.
\end{itemize}
As one can see, these marketing strategies are able to ``{\em boost}'' the
  influence transferring through customers.
Furthermore, for companies, the cost of ``boosting'' a customer
  (e.g., the average redemption and distribution cost per coupon,
  or the advertising cost per customer)
  is much lower than the cost of nurturing an influential user as
  an initial adopter and a product evangelist.
Although identifying influential initial adopters have been actively studied,
  very little attention has been devoted to studying how to
  utilize incentive programs or other strategies
  to further increase the influence spread of initial adopters.

In this paper, we study the problem of finding $k$ boosted users
  so that when their friends adopt a product,
  they are more likely to make the purchase and 
  continue to influence others.
Motivated by the need for modeling boosted customers,
  we propose a novel \textit{influence boosting model}.
In our model,
  \textit{seed users} generate influence same as in the classical \textit{Independent
  Cascade} (IC) model.
In addition, we introduce the \textit{boosted user} as a new type of user.
They represent customers with incentives such as coupons.  
Boosted users are uninfluenced at the beginning of the influence propagation process,
However, they are more likely to be influenced by their friends 
  and further spread the influence to others.
In other words, they ``boost'' the influence transferring through them.
Under the influence boosting model, 
  we study \textit{how to boost the influence spread given initial adopters}.
More precisely, given initial adopters, 
  we are interested in identifying $k$ users among other users,
  so that the expected influence spread upon ``boosting'' them is maximized.
Because of the essential differences in behaviors between seed users and boosted users,
  our work is very different from influence maximization studies
  focusing on selecting seeds.

Our work also complements the studies of influence maximization problems. 
First, compared with nurturing an initial adopter,
  \textit{boosting} a potential customer usually incurs a lower cost.
For example,
  companies may need to offer free products to initial adopters,
  but only need to offer coupons to boost potential customers.
With both our methods that identify users to \textit{boost} and 
  influence maximization algorithms that select initial adopters,
  companies have more flexibility in allocating their marketing budgets.
Second, initial adopters are sometimes predetermined.
For example, they may be advocates of a particular brand or prominent bloggers 
  in the area.
In this case, our study suggests how to
  effectively utilize incentive programs or similar marketing strategies to
  take the influence spread to the next level.

\mynoindent\textbf{Contributions.}
\onlytech{
We study a novel problem of how to \textit{boost} the
  influence spread when the initial adopters are given.
}
We summarize our contributions as follows.

\begin{itemize}[leftmargin=*]
\item
We present the \textit{influence boosting model}, which
  integrates the idea of \textit{boosting} into the 
  \textit{Independent Cascade} model.
We formulate a \textit{$k$-boosting} problem that asks how to maximize the
  boost of influence spread under the \textit{influence boosting model}.
The $k$-boosting problem is NP-hard.
Computing the boost of influence spread is \#P-hard.
Moreover, the boost of influence spread does not possess the submodularity,
  meaning that the greedy algorithm does not provide performance guarantee.

\item
We present approximation algorithms \texttt{PRR-Boost} and
  \texttt{PRR-Boost-LB} for the $k$-boosting problem.
For the $k$-boosting problem on bidirected trees,
  we present a greedy algorithm \texttt{Greedy-Boost} based on
  a linear-time exact computation of the boost of influence spread
  and a fully polynomial-time approximation scheme (FPTAS)
  \texttt{DP-Boost} that returns near-optimal solutions.~\footnote{An FPTAS for a maximization problem is
    an algorithm that given any $\epsilon>0$, 
	it can approximate the optimal solution with a factor $1-\epsilon$,
  with running time polynomial to the input size and $1/\epsilon$.}
\texttt{DP-Boost} provides a benchmark for the greedy algorithm,
  at least on bi-directed trees,
  since it is very hard to find near optimal solutions in general cases.
Moreover, the algorithms on bidirected trees may be applicable to situations
where information cascades more or less follow a fixed tree architecture.

\item
We conduct extensive experiments using real social networks and synthetic bidirected trees. 
Experimental results show the efficiency and effectiveness of our proposed algorithms,
  and their superiority over intuitive baselines.
\end{itemize}

\mynoindent\textbf{Paper organization.}
\Cref{sec:background} provides background.
We describe the \textit{influence boosting model} and
  the $k$-boosting problem in \Cref{sec:problem_definition}.
We present building blocks of
  \texttt{PRR-Boost} and \texttt{PRR-Boost-LB} for the $k$-boosting problem
  in \Cref{sec:general},
  and the detailed algorithm design in \Cref{sec:algorithm}.
We present \texttt{Greedy-Boost} and \texttt{DP-Boost} for the $k$-boosting
  problem on bidirected trees in \Cref{sec:bidirected}.
We show experimental results in \Crefrange{sec:experiments}{sec:experiments_bid}.
\Cref{sec:conclusion} concludes the paper.

\section{Background and related work}\label{sec:background}

In this section, we provide backgrounds about
  influence maximization problems and related works.

\mynoindent\textbf{Classical influence maximization problems.}
Kempe et al.~\cite{kempe2003maximizing} first formulated the \textit{influence
  maximization} problem that asks to select a set $S$ of $k$ nodes
  so that the expected influence spread is maximized under a predetermined
  influence propagation model.
The \textit{Independent Cascade} (IC) model is one classical model that
  describes the influence diffusion process~\cite{kempe2003maximizing}.
Under the IC model, given a graph $G=(V,E)$, influence probabilities
  on edges and a set $S\subseteq V$ of \textit{seeds},
  the influence propagates as follows.
Initially, nodes in $S$ are activated.
Each newly activated node $u$ influences its neighbor $v$ with probability $p_{uv}$.
The influence spread of $S$ is the expected
  number of nodes activated at the end of the influence diffusion process.
Under the IC model,
  the influence maximization problem is NP-hard~\cite{kempe2003maximizing}
  and computing the expected influence spread
  for a given $S$ is \#P-hard~\cite{chen2010scalable}.
A series of studies have been done to approximate the influence
  maximization problem under the IC model or other diffusion
  models~\cite{leskovec2007cost,chen2009efficient,chen2010scalable,
  goyal2011celf++,chen2013information,
  borgs2014maximizing,tang2014influence,tang2015influence,nguyen2016stopstare}.

\mynoindent\textbf{Influence maximization on trees.}
Under the IC model,
  tree structure makes the influence computation tractable.
To devise greedy ``seed-selection'' algorithms on trees,
  several studies presented various methods to compute
  the ``marginal gain'' of influence spread on trees~\cite{chen2010scalable,jung2012irie}.
Our computation of ``marginal gain of boosts'' on trees
  is more advanced than the previous methods:
It runs in linear-time,
  it considers the behavior of ``boosting'', and we assume that the benefits of
  ``boosting'' can be transmitted in both directions of an edge.
On bidirected trees, Bharathi et al.~\cite{bharathi2007competitive} described 
 an FPTAS for the classical influence maximization problem.
Our FPTAS on bidirected trees is different from theirs
  because ``boosting'' a node and targeting a node as a ``seed''
  have significantly different effects.

\mynoindent\textbf{Boost the influence spread.}
Several works studied how to recommend friends or inject links into social
  networks in order to boost the influence spread~\cite{chaoji2012recommendations,
    yang2013maximizing,antaris2014link,
    rafailidis2014little, rafailidis2015crossing, yang2016continuous}.
Lu et al.~\cite{lu2015competition} studied how to maximize
  the expected number of adoptions by targeting 
  initial adopters of a complementing product.
Chen et al.~\cite{chen2015combining} considered the \textit{amphibious influence
  maximization}.
They studied how to select a subset of seed content providers and a subset of seed
  customers so that the expected number of influenced customers is maximized.
Their model differs from ours in that they only consider influence originators selected
	from content providers, which are separated from the social network, and influence boost
	is only from content providers to consumers in the social network.
Yang et al.~\cite{yang2016continuous} studied how to offer discounts
  assuming that the probability of a customer being an
  initial adopter is a known function of the discounts offered to him.
They studied how to offer discounts to customers so 
  that the influence cascades triggered is maximized.
Different from the above studies, 
  we study how to \textit{boost the spread of influence} when seeds are given.
We assume that we can give incentives to some users (i.e., ``boost'' some users)
  so that they are more likely to be influenced by their friends, but they themselves would not
  become adopters without friend influence.
This article is an extended version of our conference paper~\cite{lin2017boosting} that 
  formulated the $k$-boosting problem and presented algorithms for it.
We add two new algorithms that tackle the $k$-boosting problem in bidirected
trees, and report new experimental results.
\section{Model and Problem Definition}\label{sec:problem_definition}

In this section, we first define the \textit{influence boosting model}
and the $k$-boosting problem.
Then, we highlight the challenges associated with solving the proposed problem.

\subsection{Model and Problem Definition}

Traditional studies of the influence maximization problem focus on how to
  identify a set of $k$ influential users (or \textit{seeds})
  who can trigger the largest influence diffusion.
In this paper,
  we aim to \textit{boost} the influence propagation
  assuming that \textit{seeds} are given.
We first define the \textit{influence boosting model}.

\begin{definition}[Influence Boosting Model]\label{def:boost_model}
Suppose we are given a directed graph $G\!=\!(V,E)$ with $n$ nodes and $m$ edges,
  two influence probabilities $p_{uv}$ and $p'_{uv}$ (with $p'_{uv} > p_{uv}$) on each edge $e_{uv}$,
  a set $S\subseteq V$ of seeds,
  and a set $B\subseteq V$ of boosted nodes.
Influence propagates in discrete time steps as follows.
If $v$ is not boosted,
  each of its newly-activated in-neighbor $u$ influences $v$ with probability $p_{uv}$.
If $v$ is a boosted node,
  each of its newly-activated in-neighbor $u$ influences $v$ with probability
  $p'_{uv}$.
\end{definition}

In \Cref{def:boost_model}, we assume that ``boosted'' users are
  more likely to be influenced.
Our study can also be adapted to the case where boosted users
  are more influential: if a newly-activated user $u$ is boosted,
  she influences her neighbor $v$ with probability $p'_{uv}$ instead of $p_{uv}$.
To simplify the presentation, we focus on the influence boosting
  model in \Cref{def:boost_model}.

\begin{figure}[ht]
\centering
\begin{minipage}[c]{0.24\textwidth}
\centering
\begin{tikzpicture}[
  baseline={-0.15in}, thick, node distance = 1cm and 1.2cm]

  \node[seed node] (s) {$s$};
  \node[main node] (0) [right =of s] {$v_0$};
  \node[main node] (1) [right =of 0] {$v_1$};

  \draw [->] (s) -- (0)
    node [midway, above=5mm] {\scriptsize $p_{s,{v_0}}\!=\!0.2$}
    node [midway, above=1mm] {\scriptsize $p_{s,{v_0}}'\!=\!0.4$};%
  \draw [->] (0) -- (1)
    node [midway, above=5mm] {\scriptsize $p_{v_0,v_1}\!=\!0.1$}
    node [midway, above=1mm] {\scriptsize $p_{v_0,v_1}'\!=\!0.2$};%

\end{tikzpicture}
\end{minipage}
\begin{minipage}[c]{0.24\textwidth}%
\scriptsize
\centering
\renewcommand{\arraystretch}{1.2}
\begin{tabular}[c]{|l|c|c|}
\hline
  $\boldsymbol{B}$ & $\boldsymbol{\sigma_S(B)}$ & $\boldsymbol{\Delta_S(B)}$ \\ \hline
$\emptyset$        & 1.22                       & 0.00                       \\ \hline
$\{v_0\}$          & 1.44                       & 0.22                       \\ \hline
$\{v_1\}$          & 1.24                       & 0.02                       \\ \hline
$\{v_0,v_1\}$      & 1.48                       & 0.26                       \\ \hline
\end{tabular}
\end{minipage}
\caption{Example of the influence boosting model ($S\!=\!\{s\}$).}
\label{fig:line3}
\end{figure}
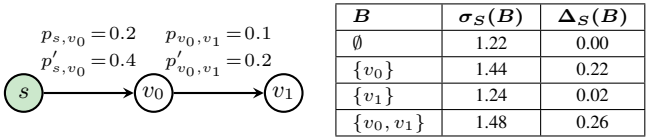

Let $\sigma_{S}(B)$ be the expected influence spread of $S$
  upon boosting nodes in $B$.
We refer to $\sigma_{S}(B)$ as the \textit{boosted influence spread}.
Let $\Delta_{S}(B)=\sigma_{S}(B)-\sigma_{S}(\emptyset)$.
We refer to $\Delta_{S}(B)$ as the \textit{boost of influence spread
  of $B$}, or simply the \textit{boost of $B$}.
Consider the example in \Cref{fig:line3}.
We have $\sigma_{S}(\emptyset)=1.22$,
  which is essentially the influence spread of $S$ in the
  IC model.
When we boost node $v_0$,
  we have $\sigma_{S}(\{v_0\})=1+0.4+0.04=1.44$, and
  $\Delta_{S}(\{v_0\})=0.22$.
We now formulate the \textit{$k$-boosting problem}.

\begin{definition} [$k$-Boosting Problem] 
Given a directed graph $G=(V,E)$, 
  influence probabilities $p_{uv}$ and $p'_{uv}$ on every edges $e_{uv}$,
  and a set $S\subseteq V$ of seed nodes,
  find a boost set $B\subseteq V$ with $k$ nodes,
  such that the boost of influence spread of $B$ is maximized.
That is, determine $B^*\subseteq V$ such that
\begin{align}
  B^*=\argmax _{B\subseteq V,|B|\leq k} \Delta_{S}(B).\label{eq:k_boosting_def}
\end{align}
\end{definition}

By definition, the $k$-boosting problem is very different 
  from the classical influence maximization problem.
In addition, boosting nodes that significantly increase the
  influence spread when used as additional seeds could be 
  extremely inefficient.
For example, consider the example in \Cref{fig:line3},
  if we are allowed to select one more seed, we should select $v_1$.
However, if we can boost a node, boosting $v_0$ is much better than boosting $v_1$.
\Cref{sec:experiments} provides more experimental results.

\subsection{Challenges of the Boosting Problem}

We now provide key properties of the $k$-boosting problem
  and show the challenges we face.
\Cref{th:hardness} summarizes the hardness of the $k$-boosting problem.

\begin{restatable}[Hardness]{theorem}{hardness}\label{th:hardness}
	The $k$-boosting problem is NP-hard.
    Computing $\Delta_{S}(B)$ given $S$ and $B$
	is \#P-hard.
\end{restatable}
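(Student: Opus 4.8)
The theorem has two independent parts: NP-hardness of the $k$-boosting problem, and \#P-hardness of computing $\Delta_S(B)$. I would prove each by reduction from a known hard problem in the influence-maximization literature, exploiting the fact that the influence boosting model contains the classical IC model as a special case (when $B=\emptyset$, the dynamics are exactly IC).

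**Part 1: NP-hardness of optimization.** The natural target is to reduce from the classical influence maximization problem under the IC model, which Kempe et al.\ proved NP-hard, or more directly from its underlying combinatorial source (Set Cover / Vertex Cover via the standard IM gadget). The plan is to build an instance where boosting a node is forced to behave like seeding it. First I would take an arbitrary IM instance on a graph $G=(V,E)$ and augment it: for each node $v\in V$ I attach a dedicated auxiliary in-neighbor $a_v$ that is placed in the seed set $S$, with $p_{a_v,v}=0$ but $p'_{a_v,v}=1$. Then boosting $v$ flips its activation-from-$a_v$ probability from $0$ to $1$, so a boosted node is deterministically activated by its private seed and thereafter spreads influence through the original edges exactly as an IC seed would. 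Under this construction $\Delta_S(B)$ equals (up to an additive constant accounting for the always-present seeds) the IC influence spread of $B$ used as a seed set, so maximizing $\Delta_S$ over $|B|\le k$ solves the original IM problem. One must check that the auxiliary seeds contribute a fixed baseline independent of $B$ and that $p'>p$ is respected (here $0<1$), which is routine. The main design obstacle is ensuring that boosting confers \emph{exactly} the seeding effect and nothing spurious — i.e.\ that a boosted node gains no influence except through its private seed and loses none of its outgoing spread.

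**Part 2: \#P-hardness of evaluation.** For counting hardness I would reduce from the \#P-hard problem of computing the IC influence spread $\sigma(S)$ for a given seed set, which Chen et al.\ established. The cleanest route is to show that the boosting evaluation oracle can compute an IC spread. Given an IC instance, set $p'_{uv}=p_{uv}$ on every edge so that boosting has no effect; then $\sigma_S(B)=\sigma_S(\emptyset)$ is precisely the classical IC spread of $S$, and evaluating $\Delta_S$ (together with one baseline evaluation) recovers it. Alternatively, and more faithfully to the ``boost'' semantics, I can mirror the Part-1 gadget: take the \#P-hardness instance for IC spread, encode the seed set as boosted nodes via private $p'=1$ seed-neighbors as above, and then computing $\Delta_S(B)$ yields the IC spread of the encoded set. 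Either reduction is polynomial-time and shows that an efficient evaluator for $\Delta_S$ would solve a \#P-hard counting problem.

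**Expected difficulty.** Both reductions are conceptually straightforward because the model degenerates to IC in two different limits (no boost, or boost-to-certainty). The technical care lies in the bookkeeping of the gadget for Part 1: verifying that the extra seed nodes and zero-probability edges do not alter the spread in unintended ways, that the additive baseline $\sigma_S(\emptyset)$ is a constant cancelled inside $\Delta_S$, and that the optimal boosting set maps bijectively to the optimal IM seed set. I expect the subtle point to be arguing the equivalence at the level of the random live-edge (possible-world) distributions rather than just expectations, so that the correspondence between a boosted node and a seeded node holds sample-path by sample-path; once that coupling is set up, both the NP-hardness and the \#P-hardness follow immediately from the corresponding IC results.
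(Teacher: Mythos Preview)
Your overall strategy is sound and the private-seed gadget (attach $a_v$ to every $v$ with $p_{a_v v}=0$, $p'_{a_v v}=1$, put all $a_v$ in $S$) does make boosting a node equivalent to seeding it, so both reductions go through via that route. However, your first alternative for Part~2 is not a valid reduction: if you set $p'_{uv}=p_{uv}$ everywhere, then $\Delta_S(B)=0$ for every $B$, so an oracle for $\Delta_S$ returns no information whatsoever about the IC spread---you cannot ``recover'' $\sigma_S(\emptyset)$ from a function that is identically zero, and the ``one baseline evaluation'' you invoke is precisely the \#P-hard quantity you are trying to compute. This option also violates the model's requirement $p'_{uv}>p_{uv}$. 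You should drop it and rely on your second alternative, which is correct. A small gap you do not address: on the \emph{original} edges of the IM instance you must still assign some $p'_{uv}>p_{uv}$; you should note explicitly that any such choice is harmless because boosted nodes are activated with certainty via their private seed regardless.

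For comparison, the paper takes a more direct route in both parts. For NP-hardness it reduces straight from Set Cover with a tripartite graph (a seed $s$, one node $c_i$ per set with $p_{s c_i}=0.5$, $p'_{s c_i}=1$, and deterministic edges to element nodes), so that boosting $k$ set-nodes covers all elements iff the Set Cover instance is a yes-instance. For \#P-hardness it reduces from $s$--$t$ connectedness by adding a single fresh node $t'$ with an edge $t\to t'$, taking $S=\{s\}$, $B=\{t'\}$, and all edge probabilities $p=0.5$, $p'=1$; then $2\Delta_S(\{t'\})$ is exactly the $s$--$t$ reachability probability. Your approach is more modular---it black-boxes the known IC hardness results and needs only the single observation that boosting can simulate seeding---whereas the paper's constructions are leaner (one added node for \#P-hardness versus your $n$ auxiliary seeds) and self-contained, not relying on the Kempe et al.\ or Chen et al.\ theorems as intermediate lemmas.
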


\begin{proof} 
The NP-hardness is proved by a reduction from the NP-complete
  \textit{Set Cover} problem~\cite{karp1972reducibility}.
The \#P-hardness of the computation is proved by a reduction from the
  \#P-complete counting problem of \textit{$s$-$t$ connectedness} in directed
  graphs~\cite{valiant1979complexity}.
\onlytech{
The full analysis can be found in the appendix.
}
\onlypaper{
The full analysis is in our technical report~\cite{lin2016boosttechreport}.
}
\end{proof}

\mynoindent\textbf{Non-submodularity of the boost of influence.}
Because of the above hardness results, 
  we explore approximation algorithms to tackle the $k$-boosting problem.
In most influence maximization problems,
  the expected influence of the seed set $S$ (i.e., the objective function)
  is a monotone and submodular function of $S$.\footnote{
A set function $f$ is monotone if $f(S) \le f(T)$ for all $S\subseteq T$;
	it is submodular if $f(S\cup \{v\}) - f(S) \ge f(T\cup \{v\}) - f(T)$ for all $S\subseteq T$ and $v\not\in T$,
	and it is supermodular if $-f$ is submodular.}
Thus, a natural greedy algorithm returns a solution with an approximation
  guarantee~\cite{kempe2003maximizing,
  borgs2014maximizing, tang2014influence,
  tang2015influence, nguyen2016stopstare, lin2015analyzing}.
However, the objective function $\Delta_{S}(B)$ in our problem is
  \textit{neither submodular nor supermodular} on the set $B$ of boosted nodes.
On one hand, when we boost several nodes on different parallel paths from
  seed nodes, their overall boosting effect exhibits a \textit{submodular behavior}.
On the other hand, when we boost several nodes on a path starting from a
  seed node, their boosting effects can be cumulated along the path, generating
  a larger overall effect than the sum of their individual boosting effect.
This is in fact a \textit{supermodular behavior}.
\onlytech{
To illustrate, consider the graph in \Cref{fig:line3},
  we have
  $\Delta_{S}(\{v_0,v_1\})-\Delta_{S}(\{v_0\})=0.04$,
  which is larger than
  $\Delta_{S}(\{v_1\})-\Delta_{S}(\emptyset)=0.02$.
In general, the boosted influence has a complicated interaction between
	supermodular and submodular behaviors when the boost set grows, and
	is neither supermodular nor submodular.
}
The non-submodularity of $\Delta_{S}(\cdot)$
  indicates that the boosting set returned by the greedy
  algorithm may not have the $(1-1/e)$-approximation guarantee.
Therefore,
  the non-submodularity of the objective function poses an additional challenge.

\section{Boosting on General Graphs: Building Blocks}\label{sec:general}

In this section,
  we present three building blocks for solving the $k$-boosting problem:
  (1) a state-of-the-art influence maximization framework,
  (2) the \textit{Potentially Reverse Reachable Graph}
  for estimating the boost of influence spread,
  and (3) the \textit{Sandwich Approximation} strategy~\cite{lu2015competition}
  for maximizing non-submodular functions.
Our algorithms \texttt{PRR-Boost} and \texttt{PRR-Boost-LB} integrate
the three building blocks.
We will present their detailed algorithm design in the next section.

\subsection{State-of-the-art influence maximization techniques}
One state-of-the-art influence maximization framework is the \textit{Influence 
  Maximization via Martingale} (\texttt{IMM}) method~\cite{tang2015influence}
  based on the idea of \textit{Reverse-Reachable Sets}
  (RR-sets)~\cite{borgs2014maximizing}.
We utilize the \texttt{IMM} method in this work,
  but other similar frameworks based on RR-sets (e.g.,
  \texttt{SSA}/\texttt{D-SSA}~\cite{nguyen2016stopstare}) could also be applied.

\mynoindent\textbf{RR-sets.}
An \textit{RR-set} for a node $r$ is a random set $R$ of nodes,
  such that for any seed set $S$,
  the probability that $R\cap S\neq \emptyset$ equals the probability
  that $r$ can be activated by $S$ in a random diffusion process.
Node $r$ may also be selected uniformly at random from $V$, and the
  RR-set will be generated accordingly with $r$.
One key property of RR-sets is that 
  the expected influence of $S$ equals to
  $n\cdot \mathbb{E}[\indicator{R\cap S\neq \emptyset}]$ for all
  $S\subseteq V$,
  where $\indicator{\cdot}$ is the indicator function
  and the expectation is taken over the randomness of $R$.

\mynoindent\textbf{General \texttt{IMM} algorithm.}
The \texttt{IMM} algorithm has two phases.
The \textit{sampling} phase generates a sufficiently large number of random RR-sets
  such that the estimation of the influence spread is ``accurate enough''.
The \textit{node selection} phase greedily selects $k$ seed nodes
  based on their estimated influence spread.
If generating a random RR-set takes time $O(EPT)$,
  \texttt{IMM} returns a $(1-1/e-\epsilon)$-approximate solution with
  probability at least $1-n^{-\ell}$, and runs in
  $O(\frac{EPT}{OPT}\cdot(k+\ell)(n+m)\log n/\epsilon^2)$ expected time,
  where $OPT$ is the optimal expected influence.

\subsection{Potentially Reverse Reachable Graphs}
We now describe how we estimate the boost of influence.
The estimation is based on the concept of the \textit{Potentially Reverse
  Reachable Graph} (PRR-graph) defined as follows.

\begin{definition}[Potentially Reverse Reachable Graph]
Let $r$ be a node in $G$.
  A Potentially Reverse Reachable Graph (PRR-graph) $R$ for a node $r$
  is a random graph generated as follows.
We first sample a deterministic copy $g$ of $G$.
In the deterministic graph $g$,  
  each edge $e_{uv}$ in graph $G$ is
  ``live'' in $g$ with probability $p_{uv}$,
  ``live-upon-boost'' with probability $p'_{uv}-p_{uv}$,
  and ``blocked'' with probability $1-p'_{uv}$.
The PRR-graph $R$ is the minimum subgraph of $g$ containing all paths from
  seed nodes to $r$ through non-blocked edges in $g$.
We refer to $r$ as the ``root node''.
When $r$ is also selected from $V$ uniformly at random,
    we simply refer to the generated PRR-graph as a random PRR-graph (for a random root).
\end{definition}

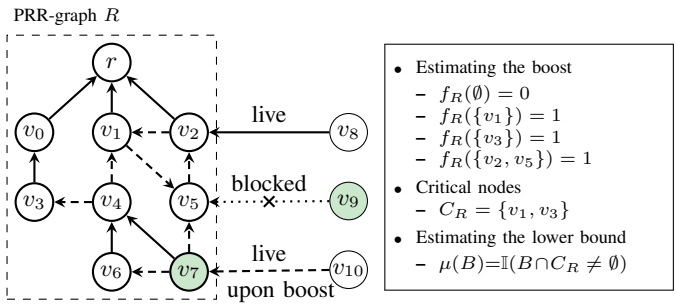
\begin{figure}[htb]
\centering
\begin{tikzpicture}[->, node distance=0.4cm and 0.5cm,
decoration={
  markings, mark= at position 0.5 with
  {\draw [solid, -](-2pt,-2pt) -- (2pt,2pt);
   \draw [solid, -](2pt,-2pt) -- (-2pt,2pt);}
}
]

  \node[main node, thick] (r) {$r$};
  \node[main node, thick] (1) [below =of r] {$v_1$};
  \node[main node, thick] (4) [below =of 1] {$v_4$};
  \node[main node, thick] (6) [below =of 4] {$v_6$};
  \node[main node, thick] (0) [left =of 1] {$v_0$};
  \node[main node, thick] (3) [below =of 0] {$v_3$};
  \node[main node, thick] (2) [right =of 1] {$v_2$};
  \node[main node, thick] (5) [below =of 2] {$v_5$};
  \node[seed node, thick] (7) [below =of 5] {$v_7$};
  \node[draw, fit=(r) (0) (7),
  inner sep=0.1cm, 
  dashed,
  label={[xshift=-0.6cm, yshift=0cm] {\scriptsize PRR-graph $R$}}] (R) {};

  \node[main node] (8) [right =1.6cm of 2] {$v_8$};
  \node[seed node] (9) [below =of 8] {$v_9$};
  \node[main node] (10) [below =of 9] {$v_{10}$};

  \draw[thick] (0) -- (r);
  \draw[thick] (1) -- (r);
  \draw[thick] (2) -- (r);
  \draw[thick] (3) -- (0);
  \draw[thick] (6) -- (4);
  \draw[thick] (7) -- (4);
  \draw[thick] (8) -- (2) node [midway, above] {live};

  \draw[thick, densely dashed] (2) -- (1);
  \draw[thick, densely dashed] (4) -- (1);
  \draw[thick, densely dashed] (5) -- (2);
  \draw[thick, densely dashed] (4) -- (3);
  \draw[thick, densely dashed] (1) -- (5);
  \draw[thick, densely dashed] (7) -- (5);
  \draw[thick, densely dashed] (7) -- (6);
  \draw[thick, densely dashed] (10) -- (7)
  node [midway, above] {live}
  node [midway, below] {\quad upon boost};

  \draw[thick, dotted, postaction={decorate}]
  (9) -- (5) node [midway, above] {blocked};

  \node [right =2.1cm of R] {\framebox{\parbox{3.5cm}{
      \scriptsize
      \baselineskip=3pt
      \begin{itemize}[noitemsep, topsep=3pt, leftmargin=*]
      \item  Estimating the boost
      \begin{itemize}[noitemsep, topsep=3pt, leftmargin=*]
        \item $f_R(\emptyset)=0$\\
        \item $f_R(\{v_1\})=1$\\
        \item $f_R(\{v_3\})=1$\\
        \item $f_R(\{v_2,v_5\})=1$\\
        \end{itemize}
      \item Critical nodes
        \begin{itemize}[noitemsep, topsep=3pt, leftmargin=*]
        \item 
          $C_R=\{v_1,v_3\}$\\
        \end{itemize}
      \item Estimating the lower bound
        \begin{itemize}[noitemsep, topsep=3pt, leftmargin=*]
        \item $\mu(B){=}\indicator{B\!\cap\! C_R\neq \emptyset}$
        \end{itemize}
      \end{itemize}
      \par
      }
     }};
\end{tikzpicture}
\caption{Example of a Potentially Reverse Reachable Graph.}
\label{fig:prr_graph}
\end{figure}

\Cref{fig:prr_graph} shows an example of a PRR-graph $R$.
\onlytech{
The directed graph $G$ contains $12$ nodes and $16$ edges.
}
Node $r$ is the \textit{root} node.
Shaded nodes are seed nodes.
Solid, dashed and dotted arrows with crosses represent live, live-upon-boost
  and blocked edges, respectively.
The PRR-graph for $r$ is the subgraph in the dashed box.
\onlytech{
It contains $9$ nodes and $13$ edges.
}
Nodes and edges outside the dashed box do not belong to the PRR-graph,
  because they are not on any paths from seed nodes to $r$ that only
  contain non-blocked edges.
By definition, a PRR-graph may contain loops.
For example, the PRR-graph in \Cref{fig:prr_graph} contains a loop
  among nodes $v_1$, $v_5$, and $v_2$.

\mynoindent\textbf{Estimating the boost of influence.}
Let $R$ be a given PRR-graph with root $r$.
By definition, every edge in $R$ is either \textit{live} or \textit{live-upon-boost}.
Given a path in $R$, 
  we say that it is \textit{live} if and only if it contains only live
  edges.
Given a path in $R$ and a set of boosted nodes $B\subseteq V$,
  we say that the path is \textit{live upon boosting $B$}
  if and only if the path is not a \textit{live} one,
  but every edge $e_{uv}$ on it is either \textit{live} or
  \textit{live-upon-boost} with $v\! \in\! B$.
For example, in \Cref{fig:prr_graph},
  the path from $v_3$ to $r$ is \textit{live},
  and the path from $v_7$ to $r$ via $v_4$ and $v_1$ is
  \textit{live upon boosting $\{v_1\}$}.
Define $f_R(B):2^V\!\rightarrow\!\{0,1\}$ as:
    $f_R(B)=1$ if and only if, in $R$,
  (1) there is no \textit{live} path from seed nodes to $r$; and
  (2) a path from a seed node to $r$ is \textit{live upon boosting $B$}.
\onlytech{
Intuitively, in the deterministic graph $R$,
  $f_R(B)=1$ if and only if the root node is inactive without boosting,
  and active upon boosting nodes in $B$.
}
In \Cref{fig:prr_graph}, 
  if $B=\emptyset$,
  there is no live path from the seed node $v_7$ to $r$ upon boosting $B$.
Therefore, we have $f_R(\emptyset)=0$.
Suppose we boost a single node $v_1$.
There is a live path from the seed node $v_7$ to $r$ that is live upon boosting
$\{v_1\}$, and thus we have $f_R(\{v_1\})=1$.
Similarly, we have $f_R(\{v_3\})=f_R(\{v_2,v_5\})=1$.
Based on the above definition of $f_R(\cdot)$, we have the following lemma.
\begin{restatable}{lemma}{lemmaprrdelta}\label{lemma:prr_delta}
  For any $B\subseteq V$, we have
  $n\cdot \mathbb{E}[f_R(B)]=\Delta_{S}(B)$,
  where the expectation is taken over the randomness of $R$.
\end{restatable}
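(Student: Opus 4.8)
The plan is to exhibit a coupling between the random diffusion in the influence boosting model and the random edge states used to build a PRR-graph, and then to decompose the boosted spread node-by-node. First I would observe that the three-way edge labeling in the PRR-graph definition faithfully reproduces the boosting dynamics: if I sample each edge $e_{uv}$ to be \emph{live} with probability $p_{uv}$, \emph{live-upon-boost} with probability $p'_{uv}-p_{uv}$, and \emph{blocked} with probability $1-p'_{uv}$, then an edge transmits influence to an unboosted $v$ exactly when it is live (probability $p_{uv}$) and to a boosted $v$ exactly when it is live or live-upon-boost (probability $p'_{uv}$). Hence a single deterministic copy $g$ of $G$ simultaneously determines, for every root $r$ and every boost set $B$, whether $r$ is activated without boosting and whether $r$ is activated upon boosting $B$, with the correct marginal distribution.

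Next I would write the spread as a sum of activation probabilities. By linearity, $\sigma_{S}(B)=\sum_{r\in V}\Pr[r\text{ active upon boosting }B]$ and $\sigma_{S}(\emptyset)=\sum_{r\in V}\Pr[r\text{ active without boosting}]$, both probabilities taken over the random copy $g$. Fixing $g$, let $X_r(g)=1$ iff $r$ is reachable from $S$ through live edges, and let $Y_r(g,B)=1$ iff $r$ is reachable from $S$ through edges that are either live or live-upon-boost with head in $B$. Because boosting only enlarges the set of transmitting edges, $Y_r(g,B)\ge X_r(g)$ pointwise, so $Y_r(g,B)-X_r(g)\in\{0,1\}$ and equals $1$ precisely when $r$ is inactive without boosting but active upon boosting $B$. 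Therefore $\Delta_{S}(B)=\sum_{r\in V}\mathbb{E}_g[\,Y_r(g,B)-X_r(g)\,]$.

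The crux is the pointwise identity $f_{R}(B)=Y_r(g,B)-X_r(g)$, where $R$ is the PRR-graph for root $r$ extracted from the \emph{same} copy $g$. Here I would use that $R$ is the minimal subgraph of $g$ containing every path from a seed to $r$ through non-blocked edges: any activating path, with or without boost, uses only non-blocked edges and hence lies entirely in $R$, while conversely $R$ introduces no path absent from $g$. Consequently the existence of a live seed-to-$r$ path is the same in $R$ as in $g$ (this matches condition (1) of $f_R$ with $X_r(g)=0$), and the existence of a seed-to-$r$ path that is live upon boosting $B$ is the same in $R$ as in $g$ (this matches condition (2) with $Y_r(g,B)=1$). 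Combining the two yields $f_R(B)=1 \iff X_r(g)=0 \text{ and } Y_r(g,B)=1$, which is exactly the claimed identity.

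Finally I would average over the root. Since a random PRR-graph is generated by drawing $r$ uniformly from $V$ and then sampling $g$, $\mathbb{E}[f_R(B)]=\tfrac1n\sum_{r\in V}\mathbb{E}_g[f_{R}(B)]=\tfrac1n\sum_{r\in V}\mathbb{E}_g[\,Y_r(g,B)-X_r(g)\,]=\tfrac1n\Delta_{S}(B)$, and multiplying by $n$ completes the argument. I expect the main obstacle to be the careful justification that passing to the minimal subgraph $R$ preserves both the boosted and the unboosted activation status of $r$, namely that no relevant activating path is lost and none is spuriously created; once this reachability-preservation fact and the monotonicity $Y_r\ge X_r$ are pinned down, the remainder is just linearity of expectation and the coupling bookkeeping.
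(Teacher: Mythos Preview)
Your proposal is correct and follows essentially the same approach as the paper: both rely on the live/live-upon-boost/blocked coupling so that $\Pr[f_R(B)=1]$ equals the difference between the activation probability of a uniformly random node with and without boosting $B$. The paper's own proof is a one-sentence sketch of exactly this fact, whereas you have spelled out the coupling, the pointwise identity $f_R(B)=Y_r(g,B)-X_r(g)$, and the reachability-preservation argument for the minimal subgraph $R$ in considerably more detail.
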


\begin{proof}
For a random PRR-graph $R$ whose root node is randomly selected,
  $\Pr[f_R(B)=1]$ equals the difference between probabilities that a random node
  in $G$ is activated given that we boost $B$ and $\emptyset$.
\end{proof}

\noindent Let $\mathcal{R}$ be a set of independent random PRR-graphs, define
\begin{align}
\hat{\Delta}_{\mathcal{R}}(B) =\frac{n}{|\mathcal{R}|}\cdot \sum_{R\in\mathcal{R}}f_R(B),\forall B\subseteq V.
\end{align}
By Chernoff bound,
  $\hat{\Delta}_{\mathcal{R}}(B)$ closely estimates $\Delta_{S}(B)$
  for any $B\subseteq V$ if $|\mathcal{R}|$ is sufficiently large.

\subsection{Sandwich Approximation Strategy}
To tackle the non-submodularity of function $\Delta_{S}(\cdot)$,
  we apply the \textit{Sandwich Approximation} (SA)
  strategy~\cite{lu2015competition}.
First, we find submodular lower and upper bound functions of
  $\Delta_{S}$, denoted by $\mu$ and $\nu$.
Then, we select node sets $B_\Delta$, $B_\mu$ and $B_\nu$ by greedily maximizing
  $\Delta_{S}$, $\mu$ and $\nu$ under the cardinality constraint of $k$.
Ideally, we return
  $B_\text{sa}=\argmax_{B\in \{B_\mu,B_\nu,B_\Delta\}}\Delta_{S}(B)$
  as the final solution.
Let the optimal solution of the $k$-boosting problem be $B^{*}$
  and let $OPT=\Delta_{S}(B^{*})$.
Suppose $B_\mu$ and $B_\nu$ are $(1-1/e-\epsilon)$-approximate solutions for
  maximizing $\mu$ and $\nu$, we have
\begin{align}
\Delta_{S}(B_\text{sa}) &\geq \frac{\mu(B^{*})}{\Delta_{S}(B^{*})}
  \cdot (1-1/e-\epsilon) \cdot OPT, \label[ineq]{eq:sa_lb} \\
\Delta_{S}(B_\text{sa}) &\geq \frac{\Delta_{S}(B_\nu)}{\nu(B_\nu)}
  \cdot (1-1/e-\epsilon) \cdot OPT.
\end{align}
Thus, to obtain a good approximation guarantee,
  at least one of $\mu$ and $\nu$ should be close to $\Delta_{S}$.
In this work,
  we derive a submodular lower bound $\mu$ of $\Delta_{S}$
  using the definition of PRR-graphs.
Because $\mu$ is significantly closer to
  $\Delta_{S}$ than any submodular upper bound we have tested,
  we only use the lower bound function $\mu$ and the ``lower-bound side''
  of the SA strategy with approximation guarantee in \Cref{eq:sa_lb}.

\mynoindent\textbf{Submodular lower bound.}
We now derive a submodular lower bound of $\Delta_{S}$.
Let $R$ be a PRR-graph with the root node $r$.
Let $C_R=\{v|f_R(\{v\})=1\}$.
We refer to nodes in $C_R$ as \textit{critical nodes} of $R$.
Intuitively,
  the root node $r$ becomes activated if we boost any node in $C_R$.
For any node set $B\subseteq V$,
  define $f_R^{-}(B)=\indicator{B\cap C_R\neq\emptyset}$.
By definition of $C_R$ and $f_R^{-}(\cdot)$,
  we have $f^{-}_R(B)\leq f_R(B)$ for all $B\subseteq V$.
Moreover, because the value of $f_R^{-}(B)$ is based on whether the node set $B$
  intersects with a fixed set $C_R$,
  $f^{-}_R(B)$ is a submodular function on $B$.
For any $B\subseteq V$, define
  $\mu(B)=n\cdot \mathbb{E}[f^{-}_R(B)]$
  where the expectation is taken over the randomness of $R$.
\Cref{lemma:submodularlb} shows the properties of the function $\mu$.

\begin{restatable}{lemma}{lemmasubmodularlb}\label{lemma:submodularlb}
  We have $\mu(B)\leq \Delta_{S}(B)$ for all $B\subseteq V$.
  Moreover, $\mu(B)$ is a submodular function of $B$.
\end{restatable}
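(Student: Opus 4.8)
The plan is to derive both claims directly from the pointwise relationship between $f_R^-$ and $f_R$, combined with \Cref{lemma:prr_delta}, which already equates $n\cdot\mathbb{E}[f_R(B)]$ with $\Delta_{S}(B)$. Since $\mu$ and $\Delta_{S}$ are both defined as $n$ times an expectation of a Boolean function of $B$, it suffices to reason about $f_R^-$ and $f_R$ for a fixed PRR-graph $R$ and then lift to the expectation.

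For the lower bound, I would first establish the pointwise inequality $f_R^-(B)\le f_R(B)$ for every PRR-graph $R$ and every $B\subseteq V$. When $B\cap C_R=\emptyset$ this is immediate, since $f_R^-(B)=0\le f_R(B)$. When $B\cap C_R\ne\emptyset$, I would pick a critical node $v\in B\cap C_R$; by the definition of $C_R$ we have $f_R(\{v\})=1$, so it remains to show $f_R$ is monotone non-decreasing in $B$, which yields $f_R(B)\ge f_R(\{v\})=1=f_R^-(B)$. Monotonicity holds because the condition ``there is no live path from the seeds to $r$'' does not depend on $B$, while the condition ``some path is live upon boosting $B$'' can only become easier to satisfy as $B$ grows: any path that is live upon boosting $B$ is also live upon boosting any $B'\supseteq B$, since each of its live-upon-boost edges $e_{uv}$ has $v\in B\subseteq B'$. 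Taking expectations over the randomness of $R$ and multiplying by $n$ preserves the inequality, so $\mu(B)=n\cdot\mathbb{E}[f_R^-(B)]\le n\cdot\mathbb{E}[f_R(B)]=\Delta_{S}(B)$ by \Cref{lemma:prr_delta}.

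For submodularity, I would again argue at the level of a single $R$ and then lift. For fixed $R$, the function $f_R^-(B)=\indicator{B\cap C_R\ne\emptyset}$ is a coverage-type indicator against the fixed set $C_R$: its marginal gain $f_R^-(S\cup\{v\})-f_R^-(S)$ equals $1$ exactly when $v\in C_R$ and $S\cap C_R=\emptyset$, and $0$ otherwise. For any $S\subseteq T$ and $v\notin T$, if this marginal equals $1$ at $T$ then $v\in C_R$ and $T\cap C_R=\emptyset$, which forces $S\cap C_R=\emptyset$ and hence the marginal at $S$ is also $1$; thus the marginal at $S$ is at least the marginal at $T$, i.e. $f_R^-$ is submodular. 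Finally, $\mu(B)=n\cdot\mathbb{E}[f_R^-(B)]$ is a nonnegative-weighted average over the distribution of $R$ of these submodular functions, and submodularity is closed under nonnegative linear combinations because its defining inequality is linear in the function values; hence $\mu$ is submodular.

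The main obstacle, and essentially the only step carrying real content, is justifying the pointwise bound $f_R^-(B)\le f_R(B)$ — specifically the monotonicity of $f_R$ in $B$ that lets a single critical node certify $f_R(B)=1$. Everything else (linearity of expectation, and closure of submodularity under nonnegative combinations) is routine once that structural fact is in place.
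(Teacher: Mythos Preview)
Your proposal is correct and follows essentially the same approach as the paper: establish the pointwise inequality $f_R^{-}(B)\le f_R(B)$ and the submodularity of $f_R^{-}$ for each fixed PRR-graph $R$, then lift both properties to $\mu$ via expectation and \Cref{lemma:prr_delta}. You simply fill in the details (monotonicity of $f_R$ and the coverage-indicator marginal argument) that the paper leaves implicit.
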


\begin{proof}
For all $B\subseteq V$,
  we have $\mu(B)\leq \Delta_{S}(B)$
  because we have $f^{-}_R(B)\leq f_R(B)$ for any PRR-graph $R$.
Moreover, $\mu(B)$ is submodular on $B$
  because $f^{-}_R(B)$ is submodular on $B$ for any PRR-graph $R$.
\end{proof}

Our experiments show that $\mu$ is close to $\Delta_{S}$
  especially for small $k$ (e.g., less than a thousand).
Define
\begin{align*}
  \hat{\mu}_{\mathcal{R}}(B)=\frac{n}{|\mathcal{R}|}\cdot
  \sum_{R\in\mathcal{R}}f^{-}_R(B),\forall B\subseteq V.
\end{align*}
Because $f_R^{-}(B)$ is submodular on $B$ for any PRR-graph $R$,
  $\hat{\mu}_{\mathcal{R}}(B)$ is submodular on $B$.
Moreover, by Chernoff bound, $\hat{\mu}_{\mathcal{R}}(B)$ is close to
  $\mu(B)$ when $|\mathcal{R}|$ is sufficiently large.

\onlytech{
\mynoindent\textbf{Remarks on the lower bound function $\mu(B)$.}
The lower bound function
     $\mu(B)$ does correspond to a physical diffusion model, as we now explain.
Roughly speaking, $\mu(B)$ is the influence spread in a diffusion model
    with the boost set $B$, and the
    constraint that at most one edge on
    the influence path from a seed node to an activated node can be boosted.
More precisely, on every edge $e_{uv}$ with $v\in B$,
    there are three possible outcomes when $u$
    tries to activate $v$:
    (a) \textit{normal activation}:
    $u$ successfully activates $v$ without relying on the boost of $v$
    (with probability $p_{uv}$),
    (b) \textit{activation by boosting}: $u$ successfully activates $v$ but relying on the boost of $v$
    (with probability $p'_{uv}-p_{uv}$); and
    (c) \textit{no activation}: $u$ fails to activate $v$ (with probability $1-p'_{uv}$).
In the diffusion model, each activated node records whether it is normally activated
    or activated by boosting.
Initially, all seed nodes are normally activated.
If a node $u$ is activated by boosting, we disallow $u$
    to activate its out-neighbors by boosting.
Moreover, when $u$ normally activates $v$, $v$ inherits the status from $u$
    and records its status as activated by boosting.
However, if later $u$ can be activated again by another in-neighbor normally, $u$ can resume
    the status of being normally activated, resume trying to activate its out-neighbors
    by boosting.
Furthermore, this status change recursively propagates to $u$'s out-neighbors that
    were normally activated by $u$ and inherited the ``activated-by-boosting'' status
    from $u$, so that they now have the status of ``normally-activated'' and
    can activate their
    out-neighbors by boosting.
All the above mechanisms are to insure that the chain of activation from any seed to any
    activated node uses at most one activation by boosting.

Admittedly, the above model is convoluted, while the PRR-graph description
    of $\mu(B)$ is more direct and is easier to analyze.
Indeed, we derived the lower bound function $\mu(B)$ directly from the concept
  of PRR-graphs first,
    and then ``reverse-engineered'' the above model from the PRR-graph model.
Our insight is that by fixing the randomness in the original influence diffusion
    model, it may be easier to
    derive submodular lower-bound or upper-bound functions.
Nevertheless, we believe the above model also provides some intuitive understanding
    of the lower bound model --- it is precisely submodular because it disallows
    multiple activations by boosting in any chain of activation sequences.
}

\onlypaper{
\mynoindent\textbf{Remarks on function $\mu(B)$.}
Function $\mu(B)$ does correspond to some physical diffusion model.
Roughly speaking, $\mu(B)$ is the influence spread in a diffusion model
    with the boost set $B$, and the constraint that at most one edge on
    the influence path from a seed node to an activated node can be boosted.
Compared with the convoluted diffusion model corresponding to $\mu(B)$,
  the PRR-graph description of $\mu(B)$ is more direct and is easier to analyze.
Our insight is that by fixing the randomness in the original diffusion
model, it may be easier to
    derive submodular lower-bound or upper-bound functions.
}

\section{Boosting On General Graphs: Algorithm Design}\label{sec:algorithm}

In this section, we first present how we generate random PRR-graphs.
Then we obtain overall algorithms by
  integrating the general \texttt{IMM} algorithm with PRR-graphs and the
  \textit{Sandwich Approximation} strategy.

\subsection{Generating PRR-graphs}

We classify PRR-graphs into three categories.
Let $R$ be a PRR-graph with root node $r$.
(1) \textit{Activated}: If there is a \textit{live} path
    from a seed node to $r$;
(2) \textit{Hopeless}:
There is no path from seeds to $r$ with at most $k$
  \textit{non-live} edges;
(3) \textit{``Boostable''}: not the above two categories.
If $R$ is not \textit{boostable} (i.e. case (1) or (2)),
  we have $f_R(B)=f_R^-(B)=0$ for all $B\subseteq V$.
Therefore, for ``non-boostable'' PRR-graphs,
  we only count their occurrences
  and we terminate the generation of them once we know they are not \textit{boostable}.
\Cref{algo:generation_prr} depicts generation of a random PRR-graph in two phases.
The first phase (\Crefrange{algo:prr_gen_start}{algo:prr_gen_end})
  generates a PRR-graph $R$.
If $R$ is boostable, the second phase compresses $R$ to reduce its size.
\Cref{fig:prr_graph_gen} shows the results of two phases,
  given that the status sampled for every edge is same as that
  in \Cref{fig:prr_graph}.

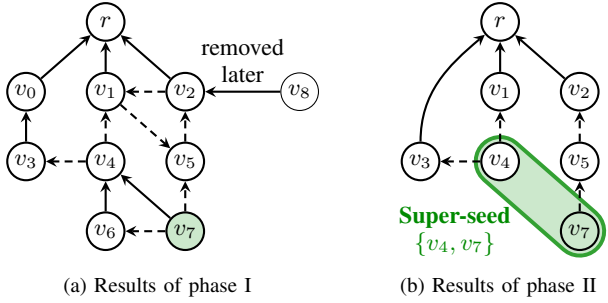
\begin{figure}[ht]
\centering
\subfloat[Results of phase I]{
\label{fig:prr_graph_1}
\begin{tikzpicture}[baseline, ->, node distance=0.4cm and 0.53cm,
decoration={
  markings, mark= at position 0.5 with
  {\draw [solid, -](-2pt,-2pt) -- (2pt,2pt);
   \draw [solid, -](2pt,-2pt) -- (-2pt,2pt);}
}
]

  \node[main node, thick] (r) {$r$};
  \node[main node, thick] (1) [below =of r] {$v_1$};
  \node[main node, thick] (4) [below =of 1] {$v_4$};
  \node[main node, thick] (6) [below =of 4] {$v_6$};
  \node[main node, thick] (0) [left =of 1] {$v_0$};
  \node[main node, thick] (3) [below =of 0] {$v_3$};
  \node[main node, thick] (2) [right =of 1] {$v_2$};
  \node[main node, thick] (5) [below =of 2] {$v_5$};
  \node[seed node, thick] (7) [below =of 5] {$v_7$};

  \node[main node] (8) [right =1cm of 2] {$v_8$};

  \draw[thick] (0) -- (r);
  \draw[thick] (1) -- (r);
  \draw[thick] (2) -- (r);
  \draw[thick] (3) -- (0);
  \draw[thick] (6) -- (4);
  \draw[thick] (7) -- (4);
  \draw[thick] (8) -- (2)
  node [midway, above, align=center] {removed\\ later};

  \draw[thick, densely dashed] (2) -- (1);
  \draw[thick, densely dashed] (4) -- (1);
  \draw[thick, densely dashed] (5) -- (2);
  \draw[thick, densely dashed] (4) -- (3);
  \draw[thick, densely dashed] (1) -- (5);
  \draw[thick, densely dashed] (7) -- (5);
  \draw[thick, densely dashed] (7) -- (6);

  \begin{pgfonlayer}{background}
    \filldraw[draw=white, -, rounded corners=2em, line width=2em, 
          cap=round, opacity=0] (4.center) -- (7.center);
    \filldraw[draw=white, -, rounded corners=2em, line width=1.7em, 
          cap=round, opacity=0] (4.center) -- (7.center);
    \node[white, align=center, opacity=0] (super) [left =of 7] {\textbf{Super-seed}\\ $\{v_4,v_7\}$};
  \end{pgfonlayer}
\end{tikzpicture}
}\quad\quad
\subfloat[Results of phase II]{
\label{fig:prr_graph_2}
\begin{tikzpicture}[baseline, ->, node distance=0.4cm and 0.53cm,
decoration={
  markings, mark= at position 0.5 with
  {\draw [solid, -](-2pt,-2pt) -- (2pt,2pt);
   \draw [solid, -](2pt,-2pt) -- (-2pt,2pt);}
}
]

  \node[main node, thick] (r) {$r$};
  \node[main node, thick] (1) [below =of r] {$v_1$};
  \node[main node, thick] (4) [below =of 1] {$v_4$};
  \node[main node, thick] (3) [left =of 4] {$v_3$};
  \node[main node, thick] (2) [right =of 1] {$v_2$};
  \node[main node, thick] (5) [below =of 2] {$v_5$};
  \node[seed node, thick] (7) [below =of 5] {$v_7$};

  \draw[thick] (1) -- (r);
  \draw[thick] (2) -- (r);
  \draw[thick] (3) to [out=90,in=220] (r);

  \draw[thick, densely dashed] (4) -- (1);
  \draw[thick, densely dashed] (5) -- (2);
  \draw[thick, densely dashed] (4) -- (3);
  \draw[thick, densely dashed] (7) -- (5);

  \begin{pgfonlayer}{background}
    \filldraw[draw=green!80, -, rounded corners=2em, line width=2em, 
          cap=round] (4.center) -- (7.center);
    \filldraw[draw=green!20, -, rounded corners=2em, line width=1.7em, 
          cap=round] (4.center) -- (7.center);
    \node[green, align=center] (super) [left =of 7] {\textbf{Super-seed}\\ $\{v_4,v_7\}$};
  \end{pgfonlayer}
\end{tikzpicture}
}
\caption{Generation of a PRR-Graph. (Solid and dashed
  arrows represent live and live-upon-boost edges respectively.)}
\label{fig:prr_graph_gen}
\end{figure}

\begin{algorithm}[htb]
\SetKw{KwAnd}{and}%

Select a random node $r$ as the root node\;\label{algo:prr_gen_start}
\lIf{$r\in S$}{\KwRet{$R$ is activated}}
Create a graph $R$ with a singleton node $r$\;
Create a double-ended queue $Q$ with $(r,0)$\;
Initialize $d_r[r]\leftarrow 0$ and $d_r[v]\leftarrow +\infty,\forall v\neq r$\;
\While{$Q$ is not empty} {
  $(u,d_{ur}) \leftarrow Q$.dequeue\_front()\;
  \lIf(\tcp*[f]{we've processed $u$}){$d_{ur}>d_r[u]$}{continue}\label{algo:prr_gen_outdated}
  \For{each non-blocked incoming edge $e_{vu}$ of $u$} {
    $d_{vr}\leftarrow\indicator{e_{vu} \text{ is \textit{live-upon-boost}}} + d_{ur}$\;
    \lIf(\tcp*[f]{pruning}){$d_{vr}>k$}{continue}\label{algo:prr_gen_prune}
    Add $e_{vu}$ to $R$\;
    \If{$d_{vr}<d_r[v]$}{
      $d_r[v] \leftarrow d_{vr}$\;
      \If{$v\in S$}{
        \lIf{$d_r[v]=0$}{\KwRet{$R$ is activated}\label{algo:prr_gen_activated}}
      }
      \lElseIf{$d_{vr}\!=\!d_{ur}$}{$Q$.enqueue\_front($(v,d_{vr})$)}
      \lElse{$Q$.enqueue\_back($(v,d_{vr})$)}
    }
  }
}
\lIf{there is no seed in $R$}{\KwRet{$R$ is hopeless}}\label{algo:prr_gen_end}
Compress the boostable $R$ to reduce its size\;\label{algo:prr_gen_compress}
\KwRet{a compressed boostable $R$}
\caption{Generating a random PRR-graph $(G,S,k)$}\label{algo:generation_prr}
\end{algorithm}

\mynoindent\textbf{Phase I:~Generating a PRR-graph.}
Let $r$ be a random node.
We include into $R$ all \textit{non-blocked} paths
  from seed nodes to $r$ with at most $k$ \textit{live-upon-boost} edges via a 
  a backward Breadth-First Search (BFS) from $r$.
The status of each edge (i.e., \textit{live, live-upon-boost, blocked})
  is sampled when we first process it.
The detailed backward BFS is as follows.
Define the \textit{distance} of a path from $u$ to $v$ as the number of
  \textit{live-upon-boost} edges on it.
Then, the \textit{shortest distance} from $v$ to $r$ is the minimum number of
  nodes we have to boost so that at least a path from $v$ to $r$ becomes live.
For example, in \Cref{fig:prr_graph_1}, the shortest distance from
  $v_7$ to $r$ is one.
We use $d_r[\cdot]$ to maintain the 
  shortest distances from nodes to the root node $r$.
Initially, we have $d_r[r]=0$ and we enqueue $(r,0)$ into a double-ended queue $Q$.
We repeatedly dequeue and process a node-distance pair $(u, d_{ur})$
  from the head of $Q$, until the queue is empty.
Note that the distance $d_{ur}$ in a pair $(u,d_{ur})$
  is the shortest known distance from $u$ to $r$ when the pair was enqueued.
Thus we may find $d_{ur}>d_r[u]$ in \Cref{algo:prr_gen_outdated}.
Pairs $(u,d_{ur})$ in $Q$ are in the ascending order of the distance $d_{ur}$
  and there are at most two different values of distance in $Q$.
Therefore, we process nodes in the ascending order of their shortest
  distances to $r$.
When we process a node $u$,
  for each of its non-blocked incoming edge $e_{vu}$,
  we let $d_{vr}$ be the shortest distance from $v$ to $r$ via $u$.
If $d_{vr}>k$, all paths from $v$ to $r$ via $u$ are impossible to
  become live upon boosting at most $k$ nodes,
  therefore we ignore $e_{vu}$ in \Cref{algo:prr_gen_prune}.
This is in fact a ``pruning'' strategy, because
  it may reduce unnecessary costs in the generation step.
The pruning strategy is effective for small values of $k$.
For large values of $k$, only a small number of paths need to be pruned
  due to the small-world property of real social networks.
If $d_{vr}\leq k$,
  we insert $e_{vu}$ into $R$, update $d_r[v]$ and enqueue $(v,d_{vr})$ if necessary.
During the generation,
  if we visit a seed node $s$ and its shortest distance to $r$ is zero,
  we know $R$ is \textit{activated} and we terminate the generation
  (\Cref{algo:prr_gen_activated}).
If we do not visit any seed node during the backward BFS,
  $R$ \textit{is hopeless}
  and we terminate the generation (\Cref{algo:prr_gen_end}).

\mynoindent\textbf{Remarks.}
At the end of phase I,
  $R$ may include nodes and edges not belonging to it (e.g., 
  \textit{non-blocked} edges not on any non-blocked paths from seeds to the
  root).
These \textit{extra} nodes and edges will be removed in the compression phase.
For example, \Cref{fig:prr_graph_1} shows the results of the first phase,
  given that we are constructing a PRR-graph $R$ according to the root node
  $r$ and sampled edge status shown in \Cref{fig:prr_graph}.
At the end of the first phase, $R$ also includes the \textit{extra} edge
  from $v_8$ to $v_2$ and they will be removed later.

\mynoindent\textbf{Phase II:~Compressing the PRR-graph.}
When we reach \Cref{algo:prr_gen_compress}, $R$ is \textit{boostable}.
In practice, we observe that we can remove and merge a significant
  fraction of nodes and edges from $R$ (i.e., \textit{compress} $R$),
  while keeping values of $f_R(B)$ and $f_R^{-}(B)$ for all $|B|\leq k$ same as before.
Therefore, we compress all boostable PRR-graphs to prevent the memory usage
  from becoming a bottleneck.
\Cref{fig:prr_graph_2} shows the compressed result of \Cref{fig:prr_graph_1}.
\textit{First},
  we merge nodes $v_4$ and $v_7$ into a single ``super-seed'' node,
  because they are activated without boosting any node.
Then, we remove node $v_6$ and its incident edges,
  because they are not on any paths from the super-seed node to the root node $r$.
Similarly, we remove the extra node $v_8$ and the extra edge from $v_8$ to $v_2$.
\textit{Next},
  observing that there are live paths from nodes $v_0,v_1,v_2$ and $v_3$
  to root $r$, we remove all outgoing edges of them,
  and add a direct live edge from each of these nodes to $r$.
After doing so, we remove node $v_0$ because 
  it is not on any path from the super-seed node to $r$.
Now, we describe the compression phase in detail.

The \textit{first part} of the compression merges
  nodes into the super-seed node.
We run a forward BFS from seeds in $R$ and compute 
  the shortest distance $d_S[v]$ from seeds to $v$ for every node $v$ in $R$.
Let $X=\{v|d_S[v]=0\}$,
  we have $f_R(B)=f_R(B\backslash X)$ for all $B\subseteq V$ because
  whether we boost any subset of $X$ has nothing to do with
  whether the root node of $R$ is activated.
Thus, we merge nodes in $X$ as a single super-seed node:
we insert a super-seed node $x$ into $R$;
for every node $v$ in $X$, we remove its incoming edges
  and redirect every of its outgoing edge $e_{vu}$ to $e_{xu}$.
Finally, we clean up nodes and edges not on any paths from the
  super-seed node to the root node $r$.

In the \textit{second part}, we add live edges so that nodes connected to $r$
  through live paths are directly linked to $r$.
We also clean up nodes and edges 
  that are not necessary for later computation.
For a node $v$, let $d'_r[v]$ be the shortest distance from $v$ to $r$
  without going through the super-seed.
If a node $v$ satisfies $d'_r[v]+d_S[v]>k$, every path going through $v$
  cannot be live with at most $k$ nodes boosted, therefore we remove
  $v$ and its adjacent edges.
If a non-root node $v$ satisfies $d'_r[v]=0$,
  we remove its outgoing edges and add a live edge from $v$ to $r$.
In fact, in a boostable $R$, if a node $v$ satisfies $d'_r[v]=0$,
  we must have $d_r[v]=0$ in the first phase.
In our implementation,
  if a node $v$ satisfies $d_r[v]=0$,
  we in fact clear outgoing edges of $v$ and add the live edge $e_{vr}$ to $R$
  in the first phase.
Finally, we remove ``non-super-seed'' nodes with no incoming edges.

\mynoindent\textbf{Time complexity.}
The cost for the first phase of the PRR-graph generation is linear to the number
  of edges visited during the generation.
The compression phase runs linear to the number of uncompressed edges generated
  in the generation phase.
\Cref{sec:experiments} shows the average number of uncompressed edges
  in boostable PRR-graphs in several social networks.

\subsection{PRR-Boost Algorithm}
We obtain our algorithm, \texttt{PRR-Boost}, by integrating PRR-graphs,
  the \texttt{IMM} algorithm and the \textit{Sandwich Approximation} strategy.
\Cref{algo:prr_boost} depicts \texttt{PRR-Boost}. 

\begin{algorithm}[htb]
\SetKw{KwAnd}{and}%
\SetKwFunction{enqueuefront}{enqueue\_front}%
\SetKwFunction{enqueueback}{enqueue\_back}%
\SetKwFunction{dequeuefront}{dequeue\_front}%

$\ell' = \ell\cdot (1+\log 3/\log n)$\label{algo:prr_boost_lb_start}\;
$\mathcal{R}\leftarrow$ \texttt{SamplingLB}$(G,S,k,\epsilon,\ell')$
\tcp*{sampling in \texttt{IMM}~\cite{tang2015influence} using the PRR-graph generation
		of Algo.~\ref{algo:generation_prr}}\label{algo:prr_boost_sampling}

$B_{\mu}\leftarrow$ \texttt{NodeSelectionLB}$(\mathcal{R}, k)$
\tcp*{maximize $\mu$}
\label{algo:prr_boost_lb_end}\label{algo:prr_boost_ns_lb}

$B_{\Delta}\leftarrow$ \texttt{NodeSelection}$(\mathcal{R}, k)$
\tcp*{maximize $\Delta_{S}$}
\label{algo:prr_boost_ns_true}

$B_\text{sa}=\argmax_{B\in \{B_{\Delta}, B_{\mu}\}}
  \hat{\Delta}_{\mathcal{R}}(B)$\label{algo:prr_boost_select}\;
\KwRet{$B_\text{sa}$}

\caption{PRR-Boost$(G,S,k,\epsilon,\ell)$}\label{algo:prr_boost}
\end{algorithm}

\Crefrange{algo:prr_boost_lb_start}{algo:prr_boost_lb_end} utilize the 
  \texttt{IMM} algorithm~\citep{tang2015influence} with the PRR-graph generation given in 
  \Cref{algo:generation_prr}   
  to maximize the lower bound $\mu$ of $\Delta_S$ under the cardinality
  constraint of $k$.
\Cref{algo:prr_boost_ns_true}
  greedily selects a set $B_{\Delta}$ of nodes
  with the goal of maximizing $\Delta_{S}$,
  and we reuse PRR-graphs in $\mathcal{R}$ to estimate $\Delta_{S}(\cdot)$.
Finally, between $B_{\mu}$ and $B_{\Delta}$,
  we return the set with a larger estimated boost of influence 
  as the final solution.

\mynoindent\textbf{Approximation ratio.}
Let $B_{\mu}^*$ be the optimal solution for maximizing $\mu$ under the
  cardinality constraint of $k$, and let $OPT_\mu=\mu(B_{\mu}^*)$.
By the analysis of the \texttt{IMM} method,
  we have the following lemma.

\begin{restatable}{lemma}{lemmaprrboostlb}\label{lemma:prr_boost_lb}
In \Cref{algo:prr_boost},
  define $\epsilon_1=\epsilon\cdot \alpha/((1-1/e)\alpha+\beta)$
  where $\alpha=\sqrt{\ell'\log n+\log 2}$,
  and $\beta = \sqrt{(1-1/e)\cdot(\log\tbinom{n}{k}+\ell'\log n+\log 2)}$.
With a probability of at least $1-n^{-\ell'}$,
  the number of PRR-graphs generated in \Cref{algo:prr_boost_sampling}
  satisfies
  \begin{align}
    |\mathcal{R}| \geq \frac{(2-2/e)\cdot n
                       \cdot \log\big(\tbinom{n}{k}\cdot 2n^{\ell'} \big)}
                      {{(\epsilon-(1-1/e)\cdot\epsilon_1)}^2\cdot OPT_{\mu}}
      \quad\big(\text{Th.2 by \cite{tang2015influence}}\big).
      \label[ineq]{eq:prr_boost_lb1}
  \end{align}
Given that \Cref{eq:prr_boost_lb1} holds,
  with probability at least $1-n^{-\ell'}$,
  the set $B_{\mu}$ returned by \Cref{algo:prr_boost_ns_lb} satisfies
  \begin{align}
    n\!\cdot\! \hat{\mu}_{\mathcal{R}}(B_\mu)
      \geq (1-1/e)(1-\epsilon_1)\cdot OPT_{\mu}
      \quad\big(\text{Th.1 by \cite{tang2015influence}}\big).
      \label[ineq]{eq:prr_boost_lb2}
  \end{align}
\end{restatable}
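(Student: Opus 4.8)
The plan is to observe that maximizing the lower-bound function $\mu$ through PRR-graphs is structurally the \emph{same} optimization problem that \texttt{IMM} is built to solve, and then to import Theorems~1 and~2 of \cite{tang2015influence} with only a relabeling of the objects involved. The key correspondence is that $f_R^-(B)=\indicator{B\cap C_R\neq\emptyset}$ has exactly the form of the RR-set coverage indicator $\indicator{S\cap R\neq\emptyset}$ used in \texttt{IMM}, with the fixed critical-node set $C_R$ of each PRR-graph playing the role of a reverse-reachable set and $\mu(B)=n\cdot\mathbb{E}[f_R^-(B)]$ playing the role of the expected influence spread. By \Cref{lemma:submodularlb}, $\mu$ is monotone and submodular, and each $f_R^-$ is a $\{0,1\}$-valued coverage function of $B$ over the sample $\mathcal{R}$; these are precisely the structural properties on which the \texttt{IMM} analysis depends.

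With this dictionary in hand, I would derive the two inequalities in the order stated. For \cref{eq:prr_boost_lb1}, I note that \texttt{SamplingLB} is the \texttt{IMM} sampling phase with RR-sets replaced by the PRR-graph generator of \Cref{algo:generation_prr}, so its adaptive stopping rule produces a sample whose size is at least the \texttt{IMM} threshold with probability at least $1-n^{-\ell'}$; substituting $OPT_\mu$ for the optimal influence and $\ell'$ for $\ell$ in Theorem~2 of \cite{tang2015influence} yields exactly the claimed bound on $|\mathcal{R}|$, with $\alpha$, $\beta$ and $\epsilon_1$ as defined. Conditioned on that event, \texttt{NodeSelectionLB} is the ordinary greedy maximization of the coverage function $\sum_{R\in\mathcal{R}}f_R^-(\cdot)$ under the cardinality constraint $k$, so Theorem~1 of \cite{tang2015influence} applies verbatim and gives $n\cdot\hat{\mu}_{\mathcal{R}}(B_\mu)\geq(1-1/e)(1-\epsilon_1)\cdot OPT_\mu$ with probability at least $1-n^{-\ell'}$, which is \cref{eq:prr_boost_lb2}. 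The inflation from $\ell$ to $\ell'=\ell(1+\log 3/\log n)$ is chosen so that $n^{\ell'}=3^{\ell}n^{\ell}$, making each per-event failure probability small enough that a final union bound over the several failure events in \texttt{PRR-Boost} still leaves an overall guarantee of $1-n^{-\ell}$.

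The one step that genuinely needs care --- and the main obstacle --- is confirming that the \texttt{IMM} guarantees, which rest on a \emph{martingale} concentration argument rather than on full independence, transfer to our setting. The PRR-graphs are generated under an adaptive stopping rule, so the summands $f_R^-(B)$ are not mutually independent; however, the \texttt{IMM} martingale analysis uses only that each summand is an unbiased, $[0,1]$-bounded estimator of $\mu(B)/n$ and never appeals to the combinatorial meaning of a reverse-reachable set. Since the critical-node coverage function $f_R^-$ shares both properties, the martingale bounds carry over unchanged, and the two inequalities follow as direct instantiations of the \texttt{IMM} theorems.
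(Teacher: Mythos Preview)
Your proposal is correct and takes essentially the same approach as the paper. In fact, the paper does not give a standalone proof of this lemma at all: it simply writes ``By the analysis of the \texttt{IMM} method, we have the following lemma'' and embeds the citations to Theorems~1 and~2 of \cite{tang2015influence} directly in the statement. Your write-up spells out the correspondence (critical-node sets $C_R$ as RR-sets, $\mu$ as the influence objective, $f_R^-$ as the $\{0,1\}$ coverage indicator) and the reason the martingale concentration transfers, all of which the paper leaves implicit.
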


Ideally, we should select 
  $B_\text{sa}=\argmax_{B\in\{B_{\mu},B_{\Delta}\}}\Delta_{S}(B)$.
Because of the \#P-hardness of computing $\Delta_{S}(B)$ for any given $B$,
  we select $B_\text{sa}$ between 
  $B_{\mu}$ and $B_{\Delta}$ with the larger \textit{estimated}
  boost of influence in \Cref{algo:prr_boost_select}.
The following lemma shows that boosting $B_\text{sa}$ leads to a large expected boost.

\begin{restatable}{lemma}{lemmaprrboostselect}\label{lemma:prr_boost_select}
Given that \Crefrange{eq:prr_boost_lb1}{eq:prr_boost_lb2} hold,
  with a probability of at least $1-n^{-\ell'}$, we have
  $\Delta_{S}(B_\text{sa})
\geq (1-1/e-\epsilon)\cdot OPT_{\mu} \geq (1-1/e-\epsilon)\cdot \mu(B^*)$.
\end{restatable}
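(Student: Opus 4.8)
The second inequality is immediate: since $B_{\mu}^*$ maximizes $\mu$ under the cardinality constraint, $OPT_\mu=\mu(B_{\mu}^*)\ge\mu(B^*)$, so it suffices to prove $\Delta_{S}(B_\text{sa})\ge(1-1/e-\epsilon)\cdot OPT_\mu$. The plan is to (i) lower-bound the \emph{estimated} boost $\hat\Delta_{\mathcal{R}}(B_\text{sa})$ using the selection rule together with \Cref{eq:prr_boost_lb2}, and then (ii) transfer this estimate to the \emph{true} boost $\Delta_{S}(B_\text{sa})$ through a concentration argument, exactly as in the node-selection analysis of \texttt{IMM}.

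For step (i), the key observation is the pointwise domination $f^{-}_R(B)\le f_R(B)$, which holds for every PRR-graph $R$ and every $B$ (as noted before \Cref{lemma:submodularlb}). Averaging over $\mathcal{R}$ gives $\hat\mu_{\mathcal{R}}(B)\le\hat\Delta_{\mathcal{R}}(B)$ for all $B$. Combining this with the selection rule $B_\text{sa}=\argmax_{B\in\{B_{\Delta},B_{\mu}\}}\hat\Delta_{\mathcal{R}}(B)$ and with \Cref{eq:prr_boost_lb2}, I obtain the chain
\begin{align*}
\hat\Delta_{\mathcal{R}}(B_\text{sa})\ge\hat\Delta_{\mathcal{R}}(B_\mu)\ge\hat\mu_{\mathcal{R}}(B_\mu)\ge(1-1/e)(1-\epsilon_1)\cdot OPT_\mu.
\end{align*}
Note that $B_{\Delta}$ plays no role beyond guaranteeing $\hat\Delta_{\mathcal{R}}(B_\text{sa})\ge\hat\Delta_{\mathcal{R}}(B_\mu)$; the approximation guarantee is carried entirely by the lower-bound solution $B_\mu$.

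For step (ii), I would invoke an upper-tail martingale concentration bound on $\hat\Delta_{\mathcal{R}}(B)=\tfrac{n}{|\mathcal{R}|}\sum_{R}f_R(B)$, which estimates $\Delta_{S}(B)=n\cdot\mathbb{E}[f_R(B)]$ by \Cref{lemma:prr_delta}. Writing $\epsilon_2=\epsilon-(1-1/e)\epsilon_1$, the sample-size condition \Cref{eq:prr_boost_lb1} is calibrated precisely so that, with probability at least $1-n^{-\ell'}$, the bound $\hat\Delta_{\mathcal{R}}(B)\le\Delta_{S}(B)+\epsilon_2\cdot OPT_\mu$ holds simultaneously for every boost set of size at most $k$. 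Applying this to $B=B_\text{sa}$ and chaining with step (i) yields
\begin{align*}
\Delta_{S}(B_\text{sa})\ge\hat\Delta_{\mathcal{R}}(B_\text{sa})-\epsilon_2\cdot OPT_\mu\ge\big[(1-1/e)(1-\epsilon_1)-\epsilon_2\big]OPT_\mu=(1-1/e-\epsilon)\cdot OPT_\mu,
\end{align*}
which is the desired bound.

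The main obstacle is step (ii): because $B_\text{sa}$ is a \emph{data-dependent} set (it is defined through the same $\mathcal{R}$ that defines the estimator), a single-set Chernoff bound does not apply directly. Instead the transfer must hold uniformly over all candidate size-$\le k$ sets, which is what forces the $\log\binom{n}{k}$ term into \Cref{eq:prr_boost_lb1}; moreover, since $|\mathcal{R}|$ is itself determined by the sampling phase and hence random, the concentration must be a martingale tail bound rather than a vanilla Chernoff bound. Both of these are inherited wholesale from Theorems~1 and~2 of \cite{tang2015influence}, so the only genuinely new ingredient is the pointwise inequality $f^{-}_R\le f_R$ used in step (i); everything else is the standard \texttt{IMM} accounting, with the failure probabilities of the separate events combined by a union bound in the final theorem.
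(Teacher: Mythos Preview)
Your overall structure is exactly that of the paper: step (i) is correct and identical to the paper's chain, and the key new ingredient you identify---the pointwise domination $f_R^{-}\le f_R$---is precisely what makes the argument go through. However, step (ii) as you state it is slightly too strong and would not follow from \Cref{eq:prr_boost_lb1}.

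The issue is the claim that $\hat\Delta_{\mathcal{R}}(B)\le\Delta_S(B)+\epsilon_2\cdot OPT_\mu$ holds simultaneously for \emph{every} size-$k$ set $B$. For a fixed $B$ the upper-tail Chernoff exponent is proportional to $\dfrac{\epsilon_2^2\,OPT_\mu^2\,\theta}{n\bigl(2\Delta_S(B)+\epsilon_2\,OPT_\mu\bigr)}$, so the per-set failure probability depends on $\Delta_S(B)$ in the denominator. The sample size in \Cref{eq:prr_boost_lb1} is calibrated with $(2-2/e)\,OPT_\mu$ in that denominator, which is only justified when $\Delta_S(B)<(1-1/e-\epsilon)\,OPT_\mu$; for sets with large true boost the bound can be arbitrarily weak, and the union bound over all $\binom{n}{k}$ sets does not close.

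The paper fixes this in the natural way: call $B$ \emph{bad} if $\Delta_S(B)<(1-1/e-\epsilon)\,OPT_\mu$, observe that the conclusion holds trivially whenever $B_{\text{sa}}$ is not bad, and run your step-(ii) Chernoff-plus-union-bound argument only over bad sets. For those sets the denominator is at most $(2-2/e)\,OPT_\mu$, the per-set failure probability drops below $n^{-\ell'}/\binom{n}{k}$, and the union bound finishes the job. So your argument is correct in structure; you only need to replace ``every boost set of size at most $k$'' with ``every bad set,'' and add the one-line observation that a non-bad $B_{\text{sa}}$ already satisfies the conclusion.
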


\begin{proof}
(outline) Let $B$ be a boost set with $k$ nodes, we say that $B$ is a \textit{bad} set if
$\Delta_{S}(B)<(1-1/e-\epsilon)\cdot OPT_{\mu}$.
Let $B$ be an arbitrary \textit{bad} set with $k$ nodes.
If we return $B$, we must have $\hat{\Delta}_{\mathcal{R}}(B)>\hat{\Delta}_{\mathcal{R}}(B_{\mu})$,
and we can prove that 
  $\Pr[\hat{\Delta}_{\mathcal{R}}(B) > \hat{\Delta}_{\mathcal{R}}(B_{\mu})] \leq n^{\!-\!\ell'}/\tbinom{n}{k}$.
Because there are at most $\binom{n}{k}$ \textit{bad} sets with $k$ nodes,
  and because $OPT_{\mu}\geq \mu(B^*)$, 
  \Cref{lemma:prr_boost_select} holds.
The full proof can be found in the appendix.
\end{proof}

\mynoindent\textbf{Complexity.}
Let $EPT$ be the expected number of edges explored for generating a random PRR-graph.
Generating a random PRR-graph runs in $O(EPT)$ expected time,
  and the expected number of edges in a random PRR-graph is at most $EPT$.
Denote the \textit{size} of $\mathcal{R}$ as the total number
  of edges in PRR-graphs in $\mathcal{R}$.
\Crefrange{algo:prr_boost_lb_start}{algo:prr_boost_lb_end} of \texttt{PRR-Boost}
  are essentially the \texttt{IMM} method with the goal of maximizing $\mu$.
By the analysis of the general \texttt{IMM} method,
  both the expected complexity of the sampling step in
  \Cref{algo:prr_boost_sampling}
  and the size of $\mathcal{R}$ are
  $O(\frac{EPT}{OPT_{\mu}}\cdot(k+\ell)(n+m)\log n/\epsilon^2)$.
The node selection in \Cref{algo:prr_boost_ns_lb} corresponds to the greedy algorithm for maximum coverage,
  thus runs in time linear to the
  \textit{size} of $\mathcal{R}$.
The node selection in \Cref{algo:prr_boost_ns_true} runs in 
  $O(\frac{EPT}{OPT_{\mu}}\cdot k(k+\ell)(n+m)\log n/\epsilon^2)$ expected time,
  because updating $\hat{\Delta}_{\mathcal{R}}(B\cup\{v\})$ for all $v\notin B\cup S$
  takes time linear to the \textit{size} of $\mathcal{R}$
  after we insert a node into $B$.

From \Crefrange{lemma:prr_boost_lb}{lemma:prr_boost_select},
  we have $\Delta_{S}(B_\text{sa}) \geq (1-1/e-\epsilon)\cdot \mu(B^*)$
  with probability at least $1-3n^{-\ell'}=1-n^{-\ell}$.
Together with the above complexity analysis,
  we have the following theorem about \texttt{PRR-Boost}.

\begin{theorem}\label{th:prr_boost}
With a probability of at least $1-n^{-\ell}$,
  \texttt{PRR-Boost} returns a 
  $(1-1/e-\epsilon)\cdot\frac{\mu(B^*)}{\Delta_{S}(B^*)}$-approximate solution.
Moreover, it runs in 
  $O(\frac{EPT}{OPT_{\mu}}\cdot k \cdot(k+\ell)(n+m)\log n/\epsilon^2)$
  expected time.
\end{theorem}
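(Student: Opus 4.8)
The plan is to treat \Cref{th:prr_boost} as an \emph{assembly} result: its two assertions --- the data-dependent approximation guarantee and the expected running time --- follow from \Cref{lemma:prr_boost_lb} and \Cref{lemma:prr_boost_select} together with the per-step cost accounting already developed for the sampling and node-selection phases. I would not re-derive the martingale/\texttt{IMM} guarantees; instead I would invoke them as black boxes and spend the effort on the probability bookkeeping and on isolating the dominant term in the time bound.

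For the approximation guarantee, first I would fix the three ``good'' events: (i) the sample-size bound \Cref{eq:prr_boost_lb1}, (ii) the near-optimality \Cref{eq:prr_boost_lb2} of $B_{\mu}$ for $\hat{\mu}_{\mathcal{R}}$, and (iii) the conclusion of \Cref{lemma:prr_boost_select} that $\Delta_{S}(B_\text{sa})\ge(1-1/e-\epsilon)\cdot OPT_{\mu}$. Each holds with probability at least $1-n^{-\ell'}$ (the latter two conditionally on the earlier ones), so a union bound gives that all three hold simultaneously with probability at least $1-3n^{-\ell'}$. The role of the definition $\ell'=\ell\cdot(1+\log 3/\log n)$ in \Cref{algo:prr_boost} is exactly to absorb the factor $3$: since $n^{\log 3/\log n}=3$, one has $3n^{-\ell'}=3^{1-\ell}n^{-\ell}\le n^{-\ell}$ for $\ell\ge 1$, so the joint success probability is at least $1-n^{-\ell}$. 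On this event I would chain $\Delta_{S}(B_\text{sa})\ge(1-1/e-\epsilon)\,OPT_{\mu}\ge(1-1/e-\epsilon)\,\mu(B^*)$, using $OPT_{\mu}=\mu(B_{\mu}^*)\ge\mu(B^*)$, and then divide by the optimum $OPT=\Delta_{S}(B^*)$ to obtain $\Delta_{S}(B_\text{sa})/OPT\ge(1-1/e-\epsilon)\cdot\mu(B^*)/\Delta_{S}(B^*)$, which is precisely the claimed ratio.

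For the running time I would sum the costs of the three algorithmic steps. By the \texttt{IMM} analysis the sampling step (\Cref{algo:prr_boost_sampling}) runs in expected time $O(\frac{EPT}{OPT_{\mu}}(k+\ell)(n+m)\log n/\epsilon^2)$ and produces a set $\mathcal{R}$ whose total edge count is of the same order; the lower-bound node selection (\Cref{algo:prr_boost_ns_lb}) is a max-coverage greedy and costs time linear in the size of $\mathcal{R}$, hence the same order. The dominant term is the node selection for $\Delta_{S}$ (\Cref{algo:prr_boost_ns_true}): because each of the $k$ greedy insertions rescans every PRR-graph to update $\hat{\Delta}_{\mathcal{R}}(B\cup\{v\})$ for all candidates, it costs $k$ times the size of $\mathcal{R}$, i.e. $O(\frac{EPT}{OPT_{\mu}}\cdot k(k+\ell)(n+m)\log n/\epsilon^2)$. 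Adding the three and keeping the largest yields the stated bound; the comparison in \Cref{algo:prr_boost_select} evaluates $\hat{\Delta}_{\mathcal{R}}$ on only two sets and is negligible.

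The genuinely hard content sits in \Cref{lemma:prr_boost_select}, which I am permitted to assume; conditioned on it, the theorem itself is routine. The main points requiring care are the union-bound accounting with the correctly tuned $\ell'$ and the observation that the resulting ratio is \emph{data-dependent} --- it involves $\mu(B^*)/\Delta_{S}(B^*)$, a quantity we cannot compute but which is nonetheless a valid certified lower bound on the achieved ratio. I expect no further obstacle beyond stating these cleanly.
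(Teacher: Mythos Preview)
Your proposal is correct and matches the paper's own argument essentially step for step: the paper derives the guarantee by invoking \Cref{lemma:prr_boost_lb} and \Cref{lemma:prr_boost_select}, applying a union bound over the three events to get success probability $1-3n^{-\ell'}$, and then using the choice $\ell'=\ell(1+\log 3/\log n)$ to convert this to $1-n^{-\ell}$; the running-time bound is obtained exactly as you describe, by summing the costs of sampling, lower-bound node selection, and the dominant $\Delta_S$-node-selection step. Your treatment is, if anything, slightly more careful than the paper's (you note $3n^{-\ell'}\le n^{-\ell}$ rather than writing equality, which is only exact when $\ell=1$).
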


The approximation ratio given in \Cref{th:prr_boost} depends on the ratio of
  $\frac{\mu(B^*)}{\Delta_{S}(B^*)}$,
	which should be close to one if the lower bound function $\mu(B)$ is close to 
	the actual boost of influence $\Delta_{S}(B)$,
  when $\Delta_{S}(B)$ is large.
\Cref{sec:experiments} demonstrates that $\mu(B)$ is indeed close to 
	$\Delta_{S}(B)$ in real datasets.

\subsection{The PRR-Boost-LB Algorithm}
\texttt{PRR-Boost-LB} is a simplification of \texttt{PRR-Boost}
  where we return the node set $B_{\mu}$ as the final solution.
Recall that the estimation of $\mu$ only relies on the critical node set $C_R$
  of each boostable PRR-graph $R$.
In the first phase of the PRR-graph generation,
  if we only need to obtain $C_R$,
  there is no need to explore incoming edges of a node $v$ if $d_r[v]>1$.
Moreover, in the compression phase,
  we can obtain $C_R$ right after computing $d_S[\cdot]$ and we can
  terminate the compression earlier.
The sampling phase of \texttt{PRR-Boost-LB}
  usually runs faster than that of \texttt{PRR-Boost},
  because we only need to generate $C_R$ for each boostable PRR-graph $R$.
In addition, the memory usage is significantly lower than that for \texttt{PRR-Boost},
  because the averaged number of ``critical nodes'' in a random boostable
  PRR-graph is small in practice.
In summary, compared with \texttt{PRR-Boost},
  \texttt{PRR-Boost-LB} has the same approximation factor
  but runs faster than \texttt{PRR-Boost}.
We will compare \texttt{PRR-Boost} and \texttt{PRR-Boost-LB} by experiments
  in \Cref{sec:experiments}.

\subsection{Discussion: The Budget Allocation Problem}
A question one may raise is
  what is the best strategy if companies could freely decide
  how to allocation budget on both seeding and boosting.
A heuristic method combing influence maximization algorithms 
  and \texttt{PRR-Boost} is as follows.
We could test different budget allocation strategy.
For each allocation, we first identify seeds using any influence
  maximization algorithm, then we find boosted user by \texttt{PRR-Boost}.
Finally, we could choose the budget allocation strategy leading to the largest
  boosted influence spread among all tested ones.
In fact, the budget allocation problem could be much harder than the
  $k$-boosting problem itself, and its full treatment is beyond the scope of this study and
  is left as a  future work.

\section{Boosting on Bidirected Trees}
\label{sec:bidirected}

In this section,
  we study the $k$-boosting problem
  where influence propagates on \textit{bidirected trees}.
On bidirected trees,
  the computation of the boost of influence spread becomes tractable.
We are able to devise an efficient greedy algorithm and
  an approximation algorithm with a near-optimal approximation ratio.
This demonstrates that the hardness of the $k$-boosting problem is partly
  due to the graph structure, and when we restrict to tree structures,
  we are able to find near-optimal solutions.
Moreover, using near-optimal solutions as benchmarks
  enables us to verify that a greedy node selection method on trees
  in fact returns near-optimal solutions in practice. 
Besides, our efforts on trees will help to designing 
  heuristics for the $k$-boosting problem on general graphs,
  or for other related problems in the future.

\mynoindent\textbf{Bidirected trees.}
A directed graph $G$ is a \textit{bidirected tree} if and only if
  its underlying undirected graph (with 
  directions and duplicated edges removed) is a tree.
For simplicity of notation,
  we assume that every two adjacent nodes are connected by two edges,
  one in each direction.
We also assume that nodes are activated with probability less than one, because
  nodes that will be activated for sure could be identified in linear time 
  and they could be treated as seeds.
\Cref{fig:bidirected_example} shows an example of a bidirected tree.
The existence of bidirected edges brings challenges
  to the algorithm design,
  because the influence may flow from either direction between a pair of
  neighboring nodes.
 
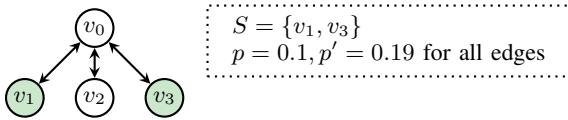
\begin{figure}[htb]
  \centering
  \begin{tikzpicture}[thick, node distance = 0.4cm and 0.4cm]

    \node[main node] (0) {$v_0$};
    \node[main node] (2) [below =of 0] {$v_2$};
    \node[seed node] (1) [left =of 2] {$v_1$};
    \node[seed node] (3) [right =of 2] {$v_3$};
    \node[draw, dotted, below right = -5mm and 13mm of 0]{
      \begin{tabular}{l}$S=\{v_1,v_3\}$\\$p=0.1,p'=0.19$ for all edges\end{tabular}
    };

    \draw [<->] (1) -- (0);
    \draw [<->] (0) -- (2);
    \draw [<->] (3) -- (0);

  \end{tikzpicture}
  \caption{A bidirected tree with four nodes and six directed edges.}\label{fig:bidirected_example}
\end{figure}

In the remaining of this section,
  we first present how to compute the exact boosted influence spread
  on bidirected trees, and a greedy algorithm \texttt{Greedy-Boost} based on it.
Then, we present a rounded dynamic programming \texttt{DP-Boost},
  which is a \textit{fully polynomial-time approximation scheme}.
\texttt{Greedy-Boost} is efficient but does not provide the approximation guarantee.
\texttt{DP-Boost} is more computationally expensive but guarantees a near-optimal approximation ratio.
We leave all proofs of this section in the appendix.

\subsection{Computing the boosted influence spread}\label{sec:computing_boost_tree}
In this part,
  we first discuss how to compute the boosted influence spread
  in a bidirected tree.
The computation serves as a building block for the greedy algorithm 
  that iteratively selects nodes with the maximum marginal gain of
  the boosted influence spread.

We separate the computation into \textit{three steps}.
(1)
We refer to the probability that a node gets activated (i.e., influenced) 
  as its ``\textit{activation probability}''.
For every node $u$,
  we compute the increase of its activation probability
  when it is inserted into $B$.
(2)
If we regard a node $u$ as the root of the tree,
  the remaining nodes could be categorized into multiple ``subtrees'',
  one for each neighbor of $u$.
For every node $u$,
  we compute intermediate results that help us to 
  determine the increase of influence spread
  in each such ``subtree'' if we insert $u$ into $B$.
(3)
Based on the previous results, we compute
$\sigma_S(B)$ and $\sigma_S(B\cup\{u\})$ for every node $u$.
If necessary, we are able to obtain $\Delta_{S}(B\cup\{u\})$ from
  $\sigma_{S}(B\cup\{u\})-\sigma_{S}(\emptyset)$.

\mynoindent\textbf{Notations.}
We use $\edgep{u}{v}{B}$ to denote the influence probability of an
  edge $e_{uv}$, given that we boost nodes in $B$.
Similarly, let $\edgep{u}{v}{b}$ be the influence probability of $e_{uv}$,
  where $b\in\{0,1\}$ indicates whether $v$ is boosted.
We use $N(u)$ to denote the set of neighbors of node $u$.
Given neighboring nodes $u$ and $v$,
  we use $G_{\udelv}$ to denote the subtree of $G$
  obtained by first removing node $v$
  and then removing all nodes not connected to $u$.
To avoid cluttered notation,
  we slightly abuse the notation and keep using $S$ and $B$ to denote
  seed users and boosted users in $G_{\udelv}$,
  although some nodes in $S$ or $B$ may not be in $G_{\udelv}$.

\mynoindent\textbf{Step I:~Activation probabilities.}
For a node $u$,
  we define $ap_B(u)$ as the activation probability of $u$ when we boost $B$.
For $v\in N(u)$, we define $ap_B(\udelv)$ as the activation probability
  of node $u$ in $G_{\udelv}$ when we boost $B$.
For example, in \Cref{fig:bidirected_example}, suppose \mbox{$B=\emptyset$},
  we have $ap_B(v_0)=1-{(1-p)}^{2}=0.19$ and $ap_B(v_0\backslash v_1)=p=0.1$.
By the above definition, we have the following lemma.

\begin{restatable}{lemma}{lemmaapuv}\label{lemma:apuv}
Suppose we are given a node $u$.
If $u$ is a seed node (i.e., $u\in S$),
  we have $ap_B(u)=1$ and $ap_B(\udelv)=1$ for all $v\in N(u)$.
Otherwise, we have
\onlypaper{
\begin{align}
  ap_B(u)&=1{-}\prod_{v\in N(u)}\left( 1{-}ap_B(v\backslash u)\!\cdot\! \edgep{v}{u}{B} \right),\label{eq:apu}\\
  ap_B(\udelv)&=1{-}\prod_{w\in N(u)\backslash\{v\}}
                \left(1{-}ap_B(w\backslash u)\!\cdot\! \edgep{w}{u}{B}\right),
  \forall v\in N(u),\label{eq:apuv_slow}\\
  ap_B(\udelv) &= 1{-}\big(1{-}ap_B(u\backslash w)\big)\!\cdot\!
  \frac{1{-}ap_B(w\backslash u)\!\cdot\!\edgep{w}{u}{B}}
                 {1{-}ap_B(v\backslash u)\!\cdot\!\edgep{v}{u}{B}}, 
  \forall v,w\in N(u), v\neq w.\label{eq:apuv_fast}
\end{align}
}
\onlytech{
\begin{align}
  ap_B(u)=&1{-}\prod_{v\in N(u)}\left( 1{-}ap_B(v\backslash u)\!\cdot\! \edgep{v}{u}{B} \right),\label{eq:apu}\\
  ap_B(\udelv)=&1{-}\prod_{w\in N(u)\backslash\{v\}}
                \left(1{-}ap_B(w\backslash u)\!\cdot\! \edgep{w}{u}{B}\right),\nonumber\\
  &\forall v\in N(u),\label{eq:apuv_slow}\\
  ap_B(\udelv) =& 1{-}\big(1{-}ap_B(u\backslash w)\big)\!\cdot\!
  \frac{1{-}ap_B(w\backslash u)\!\cdot\!\edgep{w}{u}{B}}
                 {1{-}ap_B(v\backslash u)\!\cdot\!\edgep{v}{u}{B}}, \nonumber\\ 
  &\forall v,w\in N(u), v\neq w.\label{eq:apuv_fast}
\end{align}
}
\end{restatable}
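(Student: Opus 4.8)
The plan is to prove all three identities by exploiting the tree structure, which guarantees that removing a node or an edge disconnects the graph into independent pieces whose influence dynamics do not interact except through the single connecting edge. The seed case is immediate: if $u\in S$ then $u$ is activated from the start regardless of $B$ or of which subtree we restrict to, so $ap_B(u)=1$ and $ap_B(\udelv)=1$ for every $v\in N(u)$. For the non-seed case, the key structural fact I would establish first is \emph{conditional independence}: in a bidirected tree, once we root the tree at $u$, the events ``$u$ is activated through neighbor $v$'' for distinct neighbors $v$ are governed by disjoint subtrees $G_{u\backslash v}$, and these subtrees share no nodes or edges. Hence whether influence reaches $u$ along the edge $e_{vu}$ depends only on the propagation inside $G_{v\backslash u}$ together with the single edge $e_{vu}$, and these contributions combine independently across neighbors.

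Granting that independence, \Cref{eq:apu} follows by a complementary-probability argument. Node $u$ fails to be activated exactly when every neighbor $v$ fails to push influence across $e_{vu}$; the probability that the channel through $v$ succeeds is the product of the probability that $v$ itself is activated within $G_{v\backslash u}$, namely $ap_B(v\backslash u)$, and the probability that the (possibly boosted) edge $e_{vu}$ transmits, namely $\edgep{v}{u}{B}$. Multiplying the per-neighbor failure probabilities $1-ap_B(v\backslash u)\cdot\edgep{v}{u}{B}$ over all $v\in N(u)$ and taking the complement gives the formula. \Cref{eq:apuv_slow} is the same computation carried out inside the restricted graph $G_{u\backslash v}$: there the neighbors of $u$ are exactly $N(u)\backslash\{v\}$, so the product simply omits the factor for $v$, and within $G_{u\backslash v}$ the activation probability of each remaining neighbor $w$ through its own subtree is still $ap_B(w\backslash u)$ because $G_{w\backslash u}$ is a subtree of $G_{u\backslash v}$ unaffected by deleting $v$.

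\Cref{eq:apuv_fast} is the efficiency-minded reformulation, and I would derive it algebraically from \Cref{eq:apuv_slow} rather than from scratch. The idea is to relate $ap_B(\udelv)$ to $ap_B(u\backslash w)$ for a second neighbor $w\neq v$: both are products over subsets of $N(u)$ that differ only in which single factor is omitted. Writing $P=\prod_{z\in N(u)}\bigl(1-ap_B(z\backslash u)\cdot\edgep{z}{u}{B}\bigr)$ for the full product, we have $1-ap_B(u\backslash w)=P/\bigl(1-ap_B(w\backslash u)\cdot\edgep{w}{u}{B}\bigr)$ and likewise $1-ap_B(\udelv)=P/\bigl(1-ap_B(v\backslash u)\cdot\edgep{v}{u}{B}\bigr)$. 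Dividing these two expressions cancels $P$ and yields exactly the quotient appearing in \Cref{eq:apuv_fast}; rearranging gives the stated identity. This requires the denominators to be nonzero, which is where I would invoke the paper's standing assumption that nodes are activated with probability strictly less than one, ensuring $1-ap_B(v\backslash u)\cdot\edgep{v}{u}{B}>0$.

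The main obstacle is not any single calculation but rigorously justifying the conditional-independence claim that underlies \Cref{eq:apu}: one must argue that in the Independent Cascade dynamics on a tree, the random events determining activation within distinct subtrees $G_{v\backslash u}$ are genuinely independent and that the only way influence enters $u$ is through the incident edges, with no feedback loops. This is intuitively clear because a tree has no cycles—any influence path reaching $u$ enters through a unique neighbor and originates in that neighbor's subtree—but making it precise under the boosting model (where boosting $u$ changes the incoming edge probabilities $\edgep{v}{u}{B}$ but not the internal dynamics of any $G_{v\backslash u}$) is the crux. Once independence is established, the remaining steps are routine complementary-probability and algebraic manipulations.
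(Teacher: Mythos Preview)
Your proposal is correct and follows essentially the same approach as the paper: complementary-probability reasoning from the tree's independence structure for \Cref{eq:apu} and \Cref{eq:apuv_slow}, and an algebraic derivation of \Cref{eq:apuv_fast} from \Cref{eq:apuv_slow} using the nonvanishing-denominator assumption. The paper's own proof is considerably terser, simply asserting the independence (``$u$ does not get influenced if and only if all its neighbors fail to influence it'') rather than spelling it out as you do, but the underlying argument is identical.
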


\begin{algorithm}[htb]
\makeatletter \if@twocolumn
  \SetInd{0.5em}{0.5em}
\fi \makeatother
  \SetKwFunction{getapuv}{ComputeAP}
  Initialize $ap_B(\udelv)$ as ``not computed'' for all $u$ and $v\in N(u)$\label{algo:ap_uv_begin}\;
  \ForEach{$u\in V$}{
    \ForEach{$v\in N(u)$}{
      \getapuv{$u$, $v$}\tcp*{compute $ap_B(\udelv)$}
    }
  }\label{algo:ap_uv_end}
  \ForEach{$u\in V$}{\label{algo:apu_begin}
    \lIf{$u\in S$}{
      $ap_B(u) \leftarrow 1$ 
    }\lElse{
      $ap_B(u) \leftarrow 1 - \prod_{v\in N(u)}(1-ap_B(v\backslash u)\cdot\edgep{v}{u}{B})$
    }
  }\label{algo:apu_end}
  \myfunc{\getapuv{$u$, $v$}}{
    \If{we have not computed $ap_B(u\backslash v)$}{\label{algo:line_apuv_test}
      \lIf{$u\in S$}{$ap_B(u\backslash v)\leftarrow 1$}\label{algo:line_apuv_trivial}
      \ElseIf{we have not computed $ap_B(u\backslash w)$ for any $w\in N(u)\backslash\{v\}$}{\label{algo:line_apuv_slow_begin}
        \lForEach{$w\in N(u)\backslash\{v\}$}{ \getapuv{$w$, $u$}}%
        $ap_B(u\backslash v){\leftarrow}1{-}\prod_{w\in
          N(u)\backslash\{v\}}(1{-}ap_B(w\backslash u) \edgep{w}{u}{B})$\label{algo:line_apuv_slow_eq}\;
      }\label{algo:line_apuv_slow_end}
      \Else{\label{algo:line_apuv_fast_begin}
        Suppose we have computed $ap_B(u\backslash w)$ for a node $w\in N(u)\backslash\{v\}$\;
        \getapuv{$w$, $u$}\;
        \vspace{-2mm}
        $ap_B(u\backslash v)\leftarrow 1-(1-ap_B(u\backslash w))\cdot
        \frac{1-ap_B(w\backslash u)\cdot\edgep{w}{u}{B}}
        {1-ap_B(v\backslash u)\cdot\edgep{v}{u}{B}}$\label{algo:line_apuv_fast_eq}\;
      }\label{algo:line_apuv_fast_end}
    }
  }
  \caption{Computing Activation Probabilities}\label{algo:ap_uv}
\end{algorithm}

\Cref{algo:ap_uv} depicts how we compute activation probabilities.
\Crefrange{algo:ap_uv_begin}{algo:ap_uv_end}
  initialize and compute $ap_B(\udelv)$ for all neighboring nodes $u$ and $v$.
\Crefrange{algo:apu_begin}{algo:apu_end} compute $ap_B(u)$ for all nodes $u$.
The recursive procedure \texttt{ComputeAP($u,v$)} for computing $ap_B(\udelv)$ works as follows.
\Cref{algo:line_apuv_test} guarantees that we do not re-compute $ap_B(\udelv)$.
\Cref{algo:line_apuv_trivial} handles the trivial case where node $u$ is a seed.
\Crefrange{algo:line_apuv_slow_begin}{algo:line_apuv_slow_end} compute
  the value of $ap_B(\udelv)$ using \Cref{eq:apuv_slow}.
\Crefrange{algo:line_apuv_fast_begin}{algo:line_apuv_fast_end} compute
  $ap_B(\udelv)$ more efficiently using \Cref{eq:apuv_fast},
  taking advantages of the known $ap_B(u\backslash w)$ and $ap_B(v\backslash u)$.
Note that in \Cref{algo:line_apuv_fast_end},
  the value of $ap_B(v\backslash u)$ must have been computed, because we have computed
  $ap_B(u\backslash w)$, which relies on the value of $ap_B(v\backslash u)$.
For a node $u$, given the values of $ap_B(w\backslash u)$ for all $w\in N(u)$,
  we can compute $ap_B(u\backslash v)$ for all $v\in N(u)$ in $O(|N(u)|)$.
Then, for a node $u$, given values of $ap_B(w\backslash u)$ for all $w\in N(u)$,
  we can compute $ap_B(u)$ in $O(|N(u)|)$.
Therefore, the time complexity of \Cref{algo:ap_uv} is $O(\sum_{u}|N(u)|)=O(n)$,
  where $n$ is the number of nodes in the bidirected tree.

\mynoindent\textbf{Step II:~More intermediate results.}
Given that we boost $B$,
  we define $g_B(\udelv)$ as the ``gain'' of the influence spread in $G_{\udelv}$
  when we add node $u$ into the current seed set $S$.
Formally, $g_B(\udelv)$ is defined as $g_B(\udelv)=\sigma^{G_{\udelv}}_{S\cup \{u\}}(B)
    -\sigma^{G_{\udelv}}_{S}(B)$,
  where $\sigma^{G_{\udelv}}_{S}(B)$ is the boosted influence spread
  in $G_{\udelv}$ when the seed set is $S$ and we boost $B$.
In \Cref{fig:bidirected_example}, we have
  $G_{v_0\backslash v_1}=G\backslash\{e_{01},e_{10}\}$.
Suppose $B=\emptyset$, when we insert $v_0$ into $S$,
  the boosted influence spread in $G_{v_0\backslash v_1}$ increases from $1.11$ to $2.1$,
  thus $g_B(v_0,v_1)=0.99$.
We compute $g_B(\udelv)$ for all neighboring nodes $u$ and $v$ using the
  formulas in the following lemma.
  
\begin{restatable}{lemma}{lemmaguv}\label{lemma:guv}
Suppose we are given a node $u$.
If $u$ is a seed node (i.e., $u\in S$), we have $g_B(\udelv)=0$.
Otherwise, for any $v\in N(u)$, we have
\onlypaper{
  \begin{align}\label{eq:guv}
    g_B(u\backslash v)=&\Big(1-ap_B(\udelv)\Big) 
    \Big(1+ \sum_{w\in N(u)\backslash\{v\}}
               \frac{\edgep{u}{w}{B}\cdot g_B(w\backslash u)}
                    {1-ap_B(w\backslash u)\cdot \edgep{w}{u}{B}}\Big).
  \end{align}
}
\onlytech{
  \begin{align}\label{eq:guv}
    g_B(u\backslash v)=&\Big(1-ap_B(\udelv)\Big) \cdot \nonumber \\
    & \Big(1+ \sum_{w\in N(u)\backslash\{v\}}
               \frac{\edgep{u}{w}{B}\cdot g_B(w\backslash u)}
                    {1-ap_B(w\backslash u)\cdot \edgep{w}{u}{B}}\Big).
  \end{align}
}
Moreover, for $v,w\in N(u)$ and $v\neq w$, we have
\onlypaper{
  \begin{align}\label{eq:guv_fast}
    g_B(&\udelv) =  \big(1-ap_B(\udelv)\big)\cdot 
              \Big(\frac{g_B(u\backslash w)}{1-ap_B(u\backslash w)} 
    + \frac{\edgep{u}{w}{B}\cdot g_B(w\backslash u)}{1-ap_B(w\backslash u)
                 \cdot \edgep{w}{u}{B}}
               - \frac{\edgep{u}{v}{B}\cdot g_B(v\backslash u)}{1-ap_B(v\backslash u)
                 \cdot \edgep{v}{u}{B}}\Big).
  \end{align}
}
\onlytech{
  \begin{align}\label{eq:guv_fast}
    g_B(&\udelv) =  \big(1-ap_B(\udelv)\big)\cdot 
              \Big(\frac{g_B(u\backslash w)}{1-ap_B(u\backslash w)} 
                \nonumber \\
    & + \frac{\edgep{u}{w}{B}\cdot g_B(w\backslash u)}{1-ap_B(w\backslash u)
                 \cdot \edgep{w}{u}{B}}
               - \frac{\edgep{u}{v}{B}\cdot g_B(v\backslash u)}{1-ap_B(v\backslash u)
                 \cdot \edgep{v}{u}{B}}\Big).
  \end{align}
}
\end{restatable}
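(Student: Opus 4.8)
The plan is to work in the live-edge (percolation) representation of the IC model: each directed edge $e_{xy}$ of the bidirected tree is independently declared \emph{live} with probability $\edgep{x}{y}{B}$ (which already accounts for whether $y\in B$), and a node is activated iff some seed reaches it along a directed live path. Because $G_{\udelv}$ is a tree, every seed-to-$u$ path is unique and enters $u$ through exactly one neighbor $w\in N(u)\backslash\{v\}$ via the edge $e_{wu}$, and the subtrees $\{G_{w\backslash u}\}_{w\in N(u)\backslash\{v\}}$ together with their incident reverse edges $e_{wu}$ use disjoint, hence independent, collections of edges. First I would dispose of the trivial case $u\in S$: then $u$ is always active, seeding it again changes nothing, so $g_B(\udelv)=0$, consistent with \Cref{eq:guv} since $ap_B(\udelv)=1$ forces the prefactor to vanish.

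For the main case I would condition on whether $u$ is activated in $G_{\udelv}$ from $S$ alone. Adding $u$ to the seed set changes the spread only on the event $\mathcal{N}$ that $u$ is \emph{not} already activated, which occurs with probability $1-ap_B(\udelv)$ by \Cref{lemma:apuv}; this produces the leading factor. On $\mathcal{N}$, seeding $u$ activates $u$ (the ``$+1$'') and then, through each forward edge $e_{uw}$, may activate new nodes inside the subtree $G_{w\backslash u}$. Writing $Y_w$ for the number of nodes in $G_{w\backslash u}$ that become newly reachable once $w$ is forced active, the extra spread contributed by subtree $w$ is $\indicator{e_{uw}\text{ live}}\cdot Y_w$, whose unconditional expectation is $\edgep{u}{w}{B}\cdot g_B(w\backslash u)$, since $e_{uw}$ is independent of $Y_w$ and $\mathbb{E}[Y_w]=g_B(w\backslash u)$ by definition.

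The crux, and the step I expect to be the main obstacle, is handling the conditioning on $\mathcal{N}$ correctly. Since the subtrees are independent, conditioning on $\mathcal{N}$ acts on subtree $w$ only through the event $\mathcal{N}_w$ that $w$ fails to activate $u$, i.e.\ it is \emph{not} the case that $w$ is active in $G_{w\backslash u}$ and $e_{wu}$ is live; by \Cref{lemma:apuv} this has probability $1-ap_B(w\backslash u)\edgep{w}{u}{B}$. The key observation is that $Y_w>0$ forces $w$ to be inactive in $G_{w\backslash u}$, so $Y_w$ vanishes on the complement of $\mathcal{N}_w$ (where $w$ is active); hence $\mathbb{E}[\indicator{e_{uw}\text{ live}}Y_w\mid\mathcal{N}]=\mathbb{E}[\indicator{e_{uw}\text{ live}}Y_w]/\Pr[\mathcal{N}_w]$, because restricting to $\mathcal{N}_w$ leaves the numerator unchanged while the denominator is just the conditioning probability. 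This produces exactly $\edgep{u}{w}{B}\,g_B(w\backslash u)/(1-ap_B(w\backslash u)\edgep{w}{u}{B})$; the standing assumption that activation probabilities are strictly below one guarantees these denominators are positive. Summing the $+1$ and the per-subtree contributions and multiplying by $1-ap_B(\udelv)$ gives \Cref{eq:guv}.

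Finally I would derive the ``fast'' recurrence \Cref{eq:guv_fast} purely algebraically from \Cref{eq:guv}. Dividing \Cref{eq:guv} by $1-ap_B(\udelv)$ expresses $g_B(\udelv)/(1-ap_B(\udelv))$ as $1$ plus a sum over $N(u)\backslash\{v\}$; writing the analogous identity with $w$ in place of $v$ and subtracting, all common neighbor terms cancel and only the terms for $x=w$ and $x=v$ survive, yielding $g_B(\udelv)/(1-ap_B(\udelv)) = g_B(u\backslash w)/(1-ap_B(u\backslash w)) + \edgep{u}{w}{B}g_B(w\backslash u)/(1-ap_B(w\backslash u)\edgep{w}{u}{B}) - \edgep{u}{v}{B}g_B(v\backslash u)/(1-ap_B(v\backslash u)\edgep{v}{u}{B})$. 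Multiplying back through by $1-ap_B(\udelv)$ reproduces \Cref{eq:guv_fast}.
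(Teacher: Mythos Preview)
Your proposal is correct and follows essentially the same route as the paper: both arguments exploit the tree structure to decompose the gain into $u$'s own activation plus independent contributions from each subtree $G_{w\backslash u}$, with the denominator $1-ap_B(w\backslash u)\edgep{w}{u}{B}$ arising from the fact that the subtree-$w$ contribution only materializes when $w$ has not already activated $u$. The paper phrases this as ``the increase in the probability that $u$ could be activated by nodes other than $w$'' (computing each term directly), whereas you condition globally on the event $\mathcal{N}$ and then reduce to the per-subtree conditioning $\mathcal{N}_w$ via independence and the observation $Y_w\indicator{\mathcal{N}_w^c}=0$; these are two organizations of the same computation, and your derivation of \Cref{eq:guv_fast} by subtracting the two instances of \Cref{eq:guv} is exactly what the paper means by ``can be derived directly.''
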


\Cref{eq:guv} shows how to compute $g_B(\udelv)$ by definition.
\Cref{eq:guv_fast} provides a faster way to compute $g_B(\udelv)$,
  taking advantages of the previously computed values.
Using similar algorithm in \Cref{algo:ap_uv},
  we are able to compute $g_B(\udelv)$ for all $u$ and $v\in N(u)$ in $O(n)$.

\mynoindent\textbf{Step III:~The final computation.}
Recall that $\sigma_{S}(B)$ is the expected influence spread upon boosting $B$,
  we have $\sigma_{S}(B)=\sum_{v\in V}ap_B(v)$.
The following lemma shows how we compute $\sigma_{S}(B\cup\{u\})$.

\begin{restatable}{lemma}{lemmasigmau}\label{lemma:sigmau}
Suppose we are given a node $u$.
If $u$ is a seed node or a boosted node (i.e., $u\in B\cup S$),
  we have $\sigma_{S}(B\cup\{u\})=\sigma_S(B)$.
Otherwise, we have
\onlypaper{
\begin{align}
\sigma_{S}(B\cup\{u\})& = \sigma_{S}(B)+ \Delta ap_B(u)
    + \sum_{v\in N(u)} \edgep{u}{v}{B}\cdot \Delta ap_{B}(\udelv) \cdot g_B(v\backslash u),\label{eq:sigmau}
\end{align}
}
\onlytech{
\begin{align}
\sigma_{S}(B\cup\{u\})& = \sigma_{S}(B)+ \Delta ap_B(u)\nonumber\\
  & + \sum_{v\in N(u)} \edgep{u}{v}{B}\cdot \Delta ap_{B}(\udelv) \cdot g_B(v\backslash u),\label{eq:sigmau}
\end{align}
}
where
  $\Delta ap_{B}(u):=ap_{B\cup\{u\}}(u)-ap_B(u)
    =1-\prod_{v\in N(u)}\big(1-ap_B(v\backslash u)\cdot p_{v,u}'\big)-ap_B(u)$ and
  $\Delta ap_{B}(\udelv):=ap_{B\cup\{u\}}(\udelv)-ap_B(\udelv)
  =1-\prod_{w\in N(u)\backslash\{v\}}\big(1-ap_B(w\backslash u)\cdot p_{w,u}'\big)-ap_B(\udelv)$.
\end{restatable}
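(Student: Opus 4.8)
The plan is to prove the decomposition in \Cref{eq:sigmau} by writing the boosted spread as $\sigma_S(B)=\sum_{w\in V}ap_B(w)$ and regrouping the nodes according to the subtrees hanging off $u$. The degenerate case is immediate: if $u\in S$ then $u$ is activated with probability one no matter what the boost set is, and if $u\in B$ then $B\cup\{u\}=B$; in either case every term $ap_B(w)$ is unchanged, so $\sigma_S(B\cup\{u\})=\sigma_S(B)$. Hence I assume $u\notin B\cup S$ for the remainder.

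First I would establish a ``master decomposition'' of $\sigma_S(B)$. Rooting the tree at $u$ and deleting $u$ splits $V\backslash\{u\}$ into the subtrees $G_{v\backslash u}$, one for each neighbor $v\in N(u)$, so that
\begin{align*}
\sigma_S(B)=ap_B(u)+\sum_{v\in N(u)}\Big(\sigma^{G_{v\backslash u}}_S(B)+ap_B(\udelv)\cdot\edgep{u}{v}{B}\cdot g_B(v\backslash u)\Big),
\end{align*}
where $\sigma^{G_{v\backslash u}}_S(B)$ is the standalone boosted spread inside $G_{v\backslash u}$ and the extra term accounts for the influence that $u$ injects into $G_{v\backslash u}$ through the edge $u\to v$. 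To justify it I would use the tree structure: influence can enter $G_{v\backslash u}$ from outside only through the single edge $u\to v$, and only ``fresh'' influence matters, i.e.\ $u$ must be activated using only edges of $G_{u\backslash v}$ (an activation of $u$ fed back from $G_{v\backslash u}$ itself creates no new activations inside $G_{v\backslash u}$). The event $E$ that $u$ is so activated has probability $ap_B(\udelv)$ and, since it involves only edges of $G_{u\backslash v}$, is independent both of the liveness of $u\to v$ (probability $\edgep{u}{v}{B}$) and of all edges internal to $G_{v\backslash u}$. Conditioning on $E$ and on $u\to v$ being live makes $v$ an externally activated node, and by the very definition $g_B(v\backslash u)=\sigma^{G_{v\backslash u}}_{S\cup\{v\}}(B)-\sigma^{G_{v\backslash u}}_{S}(B)$ the expected number of additional activations inside $G_{v\backslash u}$ is exactly $g_B(v\backslash u)$, because once $v$ fires its downstream cascade does not depend on how it was activated. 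Taking expectations over $E$ and the edge $u\to v$, independently of the interior of $G_{v\backslash u}$, yields the displayed term.

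Next I would read off how each factor reacts to replacing $B$ by $B\cup\{u\}$, invoking the two key tree facts that boosting $u$ touches neither the interior of $G_{v\backslash u}$ (which does not contain $u$) nor the probability $\edgep{u}{v}{B}$ (which depends on $v\in B$, not on $u$). Hence $\sigma^{G_{v\backslash u}}_S(B\cup\{u\})=\sigma^{G_{v\backslash u}}_S(B)$, $g_{B\cup\{u\}}(v\backslash u)=g_B(v\backslash u)$, and $\edgep{u}{v}{B\cup\{u\}}=\edgep{u}{v}{B}$; the only factors that move are $ap_B(u)$ and $ap_B(\udelv)$, which grow by $\Delta ap_B(u)$ and $\Delta ap_B(\udelv)$. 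Their closed forms follow directly from \Cref{lemma:apuv}: because $ap_{B\cup\{u\}}(v\backslash u)=ap_B(v\backslash u)$ and boosting $u$ raises every edge probability into $u$ from $p_{v,u}$ to $p'_{v,u}$, \Cref{eq:apu,eq:apuv_slow} reproduce precisely the stated expressions for $\Delta ap_B(u)$ and $\Delta ap_B(\udelv)$. Subtracting the master decomposition for $B$ from the one for $B\cup\{u\}$ then telescopes to \Cref{eq:sigmau}.

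The main obstacle is the rigorous justification of the master decomposition, and in particular its two delicate tree-specific ingredients: that the relevant upstream probability is $ap_B(\udelv)$ rather than $ap_B(u)$ (so that influence bounced back into $G_{v\backslash u}$ from its own side is not double counted), and that the conditional expected gain inside $G_{v\backslash u}$ equals $g_B(v\backslash u)$ by the memorylessness of the cascade once $v$ fires. Both rest on the uniqueness of paths in a tree and on the independence of edge liveness across disjoint edge sets; I would make them precise either through a live-edge-graph argument that conditions on the interface edge $u\to v$, or by induction on the tree driven by the recursions of \Cref{lemma:apuv,lemma:guv}.
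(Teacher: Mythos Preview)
Your proposal is correct and follows essentially the same route as the paper: partition $V$ into $\{u\}$ and the subtrees $V_{v\backslash u}$, and compute how the expected number of activated nodes in each piece changes when $u$ is added to $B$. Your version is simply more explicit---you first derive the master decomposition $\sigma_S(B)=ap_B(u)+\sum_{v}\big(\sigma^{G_{v\backslash u}}_S(B)+ap_B(\udelv)\,\edgep{u}{v}{B}\,g_B(v\backslash u)\big)$ and then subtract, whereas the paper states the per-subtree increment $\edgep{u}{v}{B}\cdot\Delta ap_B(\udelv)\cdot g_B(v\backslash u)$ directly; the underlying tree-independence and ``only fresh influence via $ap_B(\udelv)$ matters'' observations are the same in both.
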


The intuition behind \Cref{eq:sigmau} is as follows.
Let $V_{v\backslash u}\subseteq V$ be the set of nodes in $G_{v\backslash
  u}\subseteq G$.
When we insert a node $u$ into $B$,
  $\Delta ap_B(u)$ is the increase of the activation probability of $u$ itself,
  and
  $\edgep{u}{v}{B}\cdot \Delta ap_{B}(\udelv)\cdot g_B(v\backslash u)$
  is the increase of the number of influenced nodes in $V_{v\backslash u}$.
The final step computes $\sigma_{S}(B)$ and
  $\sigma_{S}(B\cup\{u\})$ for all nodes $u$ in $O(n)$.

\mynoindent\textbf{Putting it together.}
Given a bidirected tree $G$ and a set of currently boosted nodes $B$,
  we are interested in computing $\sigma_{S}(B)$ and $\sigma_{S}(B\cup\{u\})$
  for all nodes $u$.
For all $u\in V$ and $v\in N(u)$,
  we compute $ap_B(u)$ and $ap_B(\udelv)$ in the first step,
  and we compute $g_B(\udelv)$ in the second step.
In the last step, we finalize the computation of $\sigma_{S}(B)$ and
  $\sigma_{S}(B\cup\{u\})$ for all nodes $u$.
Each of the three steps runs in $O(n)$, where $n$ is the number of nodes.
Therefore, the total time complexity of all three steps is $O(n)$.
The above computation also allows us to compute $\Delta_{S}(B\cup\{u\})$
  for all $u$.
To do so, we have to compute $ap_{\emptyset}(\cdot)$ in extra,
  then we have $\sigma_{S}(\emptyset)=\sum_{v}ap_{\emptyset}(v)$
  and $\Delta_{S}(B\cup\{u\}) = \sigma_{S}(B\cup\{u\})-\sigma_{S}(\emptyset)$.

\mynoindent\textbf{\textit{Greedy-Boost.}}
Based on the computation of $\sigma_S(B\cup\{u\})$ for all nodes $u$,
  we have a greedy algorithm \texttt{Greedy-Boost}
  to solve the $k$-boosting problem on bidirected trees.
In \texttt{Greedy-Boost},
  we iteratively insert into set $B$ a node
  $u$ that maximizes $\sigma_{S}(B\cup\{u\})$, until $|B|=k$.
\texttt{Greedy-Boost} runs in $O(kn)$.

\subsection{A Rounded Dynamic Programming}\label{sec:bidirected_fptas}

In this subsection, we present a rounded dynamic programming \texttt{DP-Boost},
  which is a \textit{fully polynomial-time approximation scheme}.
\texttt{DP-Boost} requires that the tree has a root node,
  any node could be assigned as the root node.
Denote the root node by $r$.
For ease of presentation,
  in this subsection, 
  we assume that every node of the tree has at most two children.
\onlypaper{
We leave details about \texttt{DP-Boost} for general
  bidirected trees in our technical report~\cite{lin2016boosttechreport}.
}
\onlytech{
We leave details about \texttt{DP-Boost} for general
  bidirected trees in the appendix.
}

\mynoindent\textbf{Exact dynamic programming.}
We first define a bottom-up exact dynamic programming.
For notational convenience,
  we assume that $r$ has a \textit{virtual parent} $r'$ and $p_{r'r}=p'_{r'r}=0$.
Given a node $v$, let $V_{T_v}$ be the set of nodes in the subtree of $v$.
Define $g(v,\kappa,c,f)$ as the maximum expected boost of nodes in $V_{T_v}$
  under the following conditions.
(1) \textit{Assumption:} The parent of node $v$ is activated with probability
  $f$ if we remove all nodes in $V_{T_v}$ from $G$.
(2) \textit{Requirement:} We boost at most $\kappa$ nodes in $V_{T_v}$,
  and node $v$ is activated with probability $c$ if we remove nodes not in $V_{T_v}$.
It is possible that for some node $v$,
  the second condition could never be satisfied
  (e.g., $v$ is a seed but $c<1$).
In that case, we define $g(v,\kappa,c,f):=-\infty$.

By definition, $\max_c g(r,k,c,0)$
  is the maximum boost of the influence spread upon boosting at most $k$ nodes
  in the tree.
However, the exact dynamic programming is infeasible in practice because
  we may have to calculate $g(v,\kappa,c,f)$ for exponentially many choices
  of $c$ and $f$.
To tackle this problem,
  we propose a \textit{rounded dynamic programming}
  and call it \texttt{DP-Boost}.

\mynoindent\textbf{High level ideas.}
Let $\delta\in(0,1)$ be a \textit{rounding parameter}.
We use $\lfloor{x}\rfloor_{\delta}$ to denote the value of $x$ rounded down
  to the nearest multiple of $\delta$.
We say $x$ is \textit{rounded} if and only if it is a multiple of $\delta$.
For simplicity, we consider $1$ as a \textit{rounded} value.
The high level idea behind \texttt{DP-Boost} is that
  we compute a \textit{rounded version} of $g(v,\kappa,c,f)$
  only for \textit{rounded values} of $c$ and $f$.
Then, the number of calculated entries would be polynomial in $n$ and $1/\delta$.
Let $g'(v,\kappa,c,f)$ be the rounded version of $g(v,\kappa,c,f)$, \texttt{DP-Boost} guarantees that
(1) $g'(v,\kappa,c,f)\leq g(v,\kappa,c,f)$;
(2) $g'(v,\kappa,c,f)$ gets closer to $g(v,\kappa,c,f)$ when $\delta$ decreases.


\Cref{def:dpboost} defines \texttt{DP-Boost}.
An important remark is that
  $g'(\cdot)$ is equivalent to the definition of $g(\cdot)$ 
  if we ignore all the rounding (i.e., assuming $\rdelta{x}=x,\forall x$).

\begin{definition}[\texttt{DP-Boost}]\label{def:dpboost}
  Let $v$ be a node. Denote the parent node of $v$ by $u$.
  \begin{itemize}[leftmargin=*]
  \item \textit{Base case.} Suppose $v$ is a leaf node.
  If $c\neq \indicator{v\in S}$, let $g'(v,\kappa,c,f)=-\infty$;
  otherwise, let 
  \onlypaper{
    $g'(v,\kappa,c,f)=
      \max \big\{1{-}(1{-}c)(1{-}f\cdot \edgep{u}{v}{\indicator{\kappa>0}}){-}ap_{\emptyset}(v),0\big\}$.
  }
  \onlytech{
  \begin{align*}
    g'(v,\kappa,c,f)=
      \max \big\{1{-}(1{-}c)(1{-}f\cdot \edgep{u}{v}{\indicator{\kappa>0}}){-}ap_{\emptyset}(v),0\big\}.
  \end{align*}
  }
  
  \item \textit{Recurrence formula.} Suppose $v$ is an internal node.
  If $v$ is a seed node, we let $g'(v,\kappa,c,f)=-\infty$ for $c\neq 1$, and
  otherwise let
  \begin{align*} 
    g'(v,\kappa,1,f) = \max_{\kappa=\sum\kappa_{v_i}} \sum_i g'(v_i,\kappa_{v_i},c_{v_i},1).
  \end{align*}
  If $v$ is a non-seed node, we use $C'(v,\kappa,c,f)$ to denote the set of
  \textit{consistent subproblems} of $g'(v,\kappa,c,f)$.
  Subproblems $(\kappa_{v_i},c_{v_i},f_{v_i},\forall i)$
  are consistent with $g'(v,\kappa,c,f)$ if they satisfy the following conditions:
  \onlypaper{
    $b=\kappa-\sum_i \kappa_{v_i}\in\{0,1\}$,
    $c{=}\rdelta{1\!-\!\prod _i \big(1\!-\!c_{v_i}\cdot \edgep{v_i}{v}{b} \big)}$,
    $f_{v_i}{=}\rdelta{1\!-\!(1\!-\!f\cdot \edgep{u}{v}{b})\prod _{j\neq i}
    \big(1\!-\!c_{v_j}\cdot \edgep{v_j}{v}{b}\big)}, \forall i$.
  }
  \onlytech{
  \begin{align*}
    & b=\kappa-\sum_i \kappa_{v_i}\in\{0,1\},
    c{=}\rdelta{1\!-\!\prod _i \big(1\!-\!c_{v_i}\cdot \edgep{v_i}{v}{b} \big)},\\
    &f_{v_i}{=}\rdelta{1\!-\!(1\!-\!f\cdot \edgep{u}{v}{b})\prod _{j\neq i}
    \big(1\!-\!c_{v_j}\cdot \edgep{v_j}{v}{b}\big)}, \forall i.
  \end{align*}
  }
  If $C'(v,\kappa,c,f)=\emptyset$, let $g'(v,\kappa,c,f)=\!-\!\infty$;
  otherwise, let 
  \begin{align*} 
    g'(v,\kappa,c,f) = \!\!\! \max_{\substack{(\kappa_{v_i},f_{v_i},c_{v_i},\forall i)\\
    \in C'(v,\kappa,c,f),\\b=k-\sum_i \kappa_{v_i}}}
    \left(\substack{\sum_i g'(v_i,\kappa_{v_i},c_{v_i},f_{v_i})+ \\
    \max\{1{-}(1{-}c)(1{-}f\cdot \edgep{u}{v}{b}){-}ap_{\emptyset}(v),0\}} \right).
  \end{align*}
  \end{itemize}
\end{definition}

\mynoindent\textbf{Rounding and relaxation.}
In \texttt{DP-Boost},
  we compute $g'(v,\kappa,c,f)$ only for rounded $c$ and $f$.
Therefore, when we search subproblems of $g'(v,\kappa,c,f)$
   for a  internal node $v$,
  we may not find any consistent subproblem
  if we keep using requirements of $c$ and $f_{v_i}$ in the exact dynamic programming
  (e.g., $c=1-\prod_i (1-c_{v_i}\cdot\edgep{v_i}{v}{b})$).
In \Cref{def:dpboost},
  we slightly relax the requirements of $c$ and $f_{v_i}$
  (e.g., $c=\rdelta{1-\prod_i (1-c_{v_i}\cdot\edgep{v_i}{v}{b})}$).
Our relaxation guarantees that
  $g'(v,\kappa,c,f)$ is at most $g(v,\kappa,c,f)$.
The rounding and relaxation together may result in a loss of the
  boosted influence spread of the returned boosting set.
However, as we shall show later, the loss is bounded.

\mynoindent\textbf{Algorithm and complexity.}
In \texttt{DP-Boost}, we first determine the rounding parameter $\delta$.
We obtain a lower bound $LB$ of the optimal boost of influence
  by \texttt{Greedy-Boost} in $O(kn)$.
Define $p^{(k)}(u\rightsquigarrow v)$ as the probability that node $u$
  can influence node $v$
  given that we boost edges with top-$k$ influence probability along the path.
The rounding parameter is then determined by
\begin{align}
  \delta = \frac{\epsilon\cdot \max(LB,1)}{\sum_{u\in V}\sum_{v\in V}p^{(k)}(u\rightsquigarrow v)}.
\end{align}
We obtain the denominator of $\delta$ via depth-first search starting from
  every node, each takes time $O(kn)$.
Thus, we can obtain $\delta$ in $O(kn+kn^2)=O(kn^2)$.
With the rounding parameter $\delta$,
  \texttt{DP-Boost} computes the values of $g'(\cdot)$ bottom-up.
For a leaf node $v$, 
  it takes $O(k/\delta^2)$ to
  compute entries $g'(v,\kappa,c,f)$ for all $\kappa$, rounded
  $c$ and rounded $f$. 
For an internal node $v$,
  we enumerate over all combinations of $f$, $b\in\{0,1\}$,
  and $\kappa_{v_i}$, $c_{v_i}$ for children $v_i$.
For each combination,
  we can uniquely determine the values for $\kappa$, $c$ and $f_{v_i}$ for all children $v_i$,
  and update $g'(v,\kappa,c,f)$ accordingly.
For an internal node $v$,
  the number of enumerated combinations is $O(k^2/\delta^3)$,
  hence we can compute all $k/\delta^2$ entries $g'(v,\ldots)$
  in $O(k^2/\delta^3)$.
The total complexity of \texttt{DP-Boost} is $O(kn^2+n\cdot k^2/\delta^3)$.
In the worst case, we have $O(1/\delta)=O(n^2/\epsilon)$.
Therefore, the complexity for \texttt{DP-Boost} 
  is $O(k^2n^7/\epsilon^3)$.
To conclude, we have the following theorem about \texttt{DP-Boost}.
The approximation guarantee is proved in the appendix.

\begin{restatable}{theorem}{thtreedp}\label{th:tree_dp}
Assuming the optimal boost of influence is at least one,
\texttt{DP-Boost} is a fully-polynomial time approximation scheme,
it returns a $(1-\epsilon)$-approximate solution in $O(k^2n^7/\epsilon^3)$.
\end{restatable}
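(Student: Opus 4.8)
The plan is to separate the two assertions of the theorem. The running-time bound $O(k^2n^7/\epsilon^3)$ follows almost directly from the complexity accounting already given before the statement: computing $\delta$ costs $O(kn^2)$, each leaf fills $O(k/\delta^2)$ entries and each internal node $O(k^2/\delta^3)$, and substituting the worst-case $1/\delta=O(n^2/\epsilon)$ yields $O(kn^2+n\cdot k^2/\delta^3)=O(k^2n^7/\epsilon^3)$. Since this is polynomial in $n$, $k$, and $1/\epsilon$, establishing the $(1-\epsilon)$ ratio is all that remains to certify that \texttt{DP-Boost} is an FPTAS. The substantive work is therefore the approximation guarantee, which I would obtain through two facts proved by bottom-up induction on the rooted tree.

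\emph{Soundness.} First I would show that every finite entry $g'(v,\kappa,c,f)$ is realized by an actual boost set $B_v\subseteq V_{T_v}$ with $|B_v|\le\kappa$ whose true boost inside $V_{T_v}$ is at least $g'(v,\kappa,c,f)$, and under which the within-subtree activation probability of $v$ is at least $c$. This holds because every relaxed constraint defining $C'(v,\kappa,c,f)$ uses a rounded-\emph{down} value $\rdelta{\cdot}$, so the stored $c$ and $f_{v_i}$ never exceed the true probabilities; as the per-node boost term $\max\{1-(1-c)(1-f\cdot\edgep{u}{v}{b})-ap_{\emptyset}(v),0\}$ is nondecreasing in $c$ and $f$, the value computed from rounded arguments lower-bounds the boost actually achievable. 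In particular $\max_c g'(r,k,c,0)$ is attained by a feasible boost set, so it suffices to show this quantity is at least $(1-\epsilon)\cdot OPT$.

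\emph{Completeness with bounded error.} Next I would fix an optimal boost set $B^*$ with $\Delta_S(B^*)=OPT$ and follow the exact activation probabilities it induces up the tree, where the exact DP would store the true $c$ and $f_{v_i}$. The rounded DP is forced to replace each by its $\delta$-floor, losing at most $\delta$ per rounding. The heart of the argument is to show these per-node losses aggregate \emph{additively} into an overall loss in the root's spread of at most $\delta\sum_{u\in V}\sum_{v\in V}p^{(k)}(u\rightsquigarrow v)$. Concretely, because all activation probabilities are multilinear in the children's probabilities with coefficients that are products of edge probabilities (cf.\ \Cref{lemma:apuv,lemma:guv}), a decrease of at most $\delta$ in a single node $u$'s activation probability changes the expected number of influenced nodes by at most $\delta\sum_{v}p^{(k)}(u\rightsquigarrow v)$; here $p^{(k)}(u\rightsquigarrow v)$ is a valid upper bound on the propagation probability from $u$ to $v$ precisely because at most $k$ nodes are boosted. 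Summing over the rounding performed at every vertex gives the claimed global error bound.

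Plugging in $\delta=\epsilon\cdot\max(LB,1)/\sum_{u}\sum_{v}p^{(k)}(u\rightsquigarrow v)$ collapses the error to $\epsilon\cdot\max(LB,1)$. Since \texttt{Greedy-Boost} returns a feasible boost set we have $LB\le OPT$, and by the theorem's hypothesis $OPT\ge 1$, so $\max(LB,1)\le OPT$ and the error is at most $\epsilon\cdot OPT$. Combined with soundness, $\max_c g'(r,k,c,0)\ge OPT-\epsilon\cdot OPT=(1-\epsilon)\,OPT$ is achieved by a feasible boosting, which yields the $(1-\epsilon)$ approximation. I expect the error-propagation step to be the main obstacle: one must argue that rounding $c$ and $f$ down at every vertex does not compound multiplicatively up the tree but instead contributes only an additive error governed by the single global quantity $\sum_u\sum_v p^{(k)}(u\rightsquigarrow v)$, which requires carefully exploiting both the monotonicity and the Lipschitz (multilinear) structure of $ap_B(\cdot)$ and the subtree gains $g_B(\udelv)$, together with the fact that along any influence path at most $k$ edges are boosted.
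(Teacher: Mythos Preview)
Your outline matches the paper's proof closely: the running-time accounting is identical, and the approximation argument is split into the same two pieces---soundness ($g'\le g$, so the DP never overstates the boost) and completeness (the rounded DP contains a witness within an additive $\delta\sum_{u,v}p^{(k)}(u\rightsquigarrow v)$ of the optimum, which by the choice of $\delta$ is at most $\epsilon\cdot OPT$).

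Where the paper is sharper is precisely at the step you flag as the main obstacle. Rather than a global sensitivity argument of the form ``perturbing one node's activation probability by $\delta$ changes the spread by at most $\delta\sum_v p^{(k)}(u\rightsquigarrow v)$,'' the paper explicitly \emph{constructs} a rounded trajectory $(\tilde c_v,\tilde f_v)$ through the DP that shadows the exact optimal trajectory $(c_v^*,f_v^*)$ induced by $B^*$, and proves by induction on the tree that
\[
c_v^*-\tilde c_v \;\le\; \delta\!\!\sum_{x\in T_v}\!p^*(x\rightsquigarrow v),
\qquad
f_v^*-\tilde f_v \;\le\; \delta\!\!\!\!\sum_{x\in G\setminus T_v}\!\!\!p^*(x\rightsquigarrow u),
\]
where $p^*(\cdot\rightsquigarrow\cdot)$ is the propagation probability under $B^*$ (then bounded by $p^{(k)}$). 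The inductive step hinges on the elementary inequality $a_1a_2-b_1b_2\le(a_1-b_1)+(a_2-b_2)$ for $a_i,b_i\in[0,1]$, applied to the products $\prod_i(1-c_{v_i}\edgep{v_i}{v}{B^*})$; this is exactly what converts what looks like multiplicative compounding into an additive accumulation weighted by edge probabilities along paths. Summing the per-node boost terms then gives $\Delta_S(B^*)-\Delta_S(\tilde B)\le\delta\sum_{v}\sum_{x}p^{(k)}(x\rightsquigarrow v)$ directly. Your multilinear/Lipschitz intuition points in the right direction, but the paper's concrete witness-plus-product-difference argument is what actually closes the gap you anticipated; your sensitivity phrasing, as stated, doesn't yet separate the error coming from the $c$-direction (accumulating bottom-up over $T_v$) from the $f$-direction (accumulating top-down over $G\setminus T_v$), and that separation is what makes the induction go through cleanly.
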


\mynoindent\textbf{Refinements.}
In the implementation,
  we compute possible ranges of $c$ and $f$ for every node,
  and we only compute $g(v,k,c,f)$ for rounded values of $c$ and $f$
  within those ranges.
Let $c^L_v$ (resp.\ $f^L_v$) be the lower bound of possible values of $c_v$
  (resp.\ $f_v$) for node $v$.
For a node $v$, we let $c^L_v=1$ if $v$ is a seed,
  let $c^L_v=0$ if $v$ is a  leaf node,
  and let $c^L_v=\rdelta{1-\prod_i(1-c^L_{v_i}\cdot p_{v_i,v})}$.
  otherwise.
For the root node $r$, we let $f^L_r=0$.
For the $i$-th child $v_i$ of node $v$, let 
  $f^L_{v_i} = \rdelta{1-(1-f^L_u\cdot p_{uv})\prod_j(1-c^L_{v_j}\cdot p_{v_j,v})}$,
  where $u$ is the parent of $v$.
The upper bound of values of $c$ and $f$ for every node is computed
  assuming all nodes are boosted,
  via a method similar to how to we compute the lower bound.

\mynoindent\textbf{Remarks.}
In \texttt{DP-Boost} for general bidirected trees,
  the bottom-up algorithm for computing $g'(v,\dots)$ is more complicated.
Given that the optimal boost of influence is at least one,
  \texttt{DP-Boost}
  returns a $(1-\epsilon)$-approximate solution in $O(k^2n^9/\epsilon^3)$.
If the number of children of every node is upper-bounded by a constant,
  \texttt{DP-Boost} runs in $O(k^2n^7/\epsilon^3)$.
\onlytech{
Please refer to the appendix for details.
}
\onlypaper{
We refer to interested readers to our technical report~\cite{lin2016boosttechreport}.
}

\section{Experiments on General Graphs}\label{sec:experiments}

We conduct extensive experiments using real social networks 
  to test \texttt{PRR-Boost} and \texttt{PRR-Boost-LB}.
Experimental results demonstrate their efficiency and effectiveness,
  and show their superiority over intuitive baselines.
All experiments were conduct on a Linux machine with
  an Intel Xeon E5620@2.4GHz CPU and \SI{30}{GB} memory.
In \texttt{PRR-Boost} and \texttt{PRR-Boost-LB},
  the generation of PRR-graphs and the estimation of objective functions
  are parallelized with OpenMP and executed using eight threads.

\onlypaper{
\begin{table}[ht]
  \centering
  \caption{Statistics of datasets and seeds (all directed)}
  \label{table:dataset}
  \footnotesize
  \begin{tabular}{@{}lrrrr@{}}
  \toprule
            \textbf{Description}        & \textbf{Digg} & \textbf{Flixster} & \textbf{Twitter} & \textbf{Flickr} \\
  \midrule
    number of nodes ($n$)              & \SI{28}{K} & \SI{96}{K} & \SI{323}{K} & \SI{1.45}{M} \\
  number of edges ($m$)               & \SI{200}{K} & \SI{485}{K}& \SI{2.14}{M} & \SI{2.15}{M} \\
  average influence probability       & 0.239 & 0.228 & 0.608 & 0.013 \\
  influence of $50$ influential seeds & \SI{2.5}{K}  & \SI{20.4}{K} & \SI{85.3}{K} & \SI{2.3}{K}  \\
  \bottomrule
  \end{tabular}
\end{table}
}
\onlytech{
\begin{table}[ht]
  \centering
  \caption{Statistics of datasets and seeds (all directed)}
  \label{table:dataset}
  \footnotesize
  \begin{tabular}{@{}lrrrr@{}}
  \toprule
            \textbf{Description}        & \textbf{Digg} & \textbf{Flixster} & \textbf{Twitter} & \textbf{Flickr} \\
  \midrule
    number of nodes ($n$)              & \SI{28}{K} & \SI{96}{K} & \SI{323}{K} & \SI{1.45}{M} \\
  number of edges ($m$)               & \SI{200}{K} & \SI{485}{K}& \SI{2.14}{M} & \SI{2.15}{M} \\
  average influence probability       & 0.239 & 0.228 & 0.608 & 0.013 \\
  influence of $50$ influential seeds & \SI{2.5}{K}  & \SI{20.4}{K} & \SI{85.3}{K} & \SI{2.3}{K}  \\
  influence of $500$ random seeds   & \SI{1.8}{K}  & \SI{12.5}{K} & \SI{61.8}{K} & \SI{0.8}{K}  \\
  \bottomrule
  \end{tabular}
\end{table}
}

\mynoindent\textbf{Datasets.}
We use four real social networks:
  \textit{Flixster}~\citep{jamali2010matrix},
  \textit{Digg}~\citep{hogg2012social},
  \textit{Twitter}~\citep{hodas2014simple}, and
  \textit{Flickr}~\citep{cha2009measurement}.
All dataset have both directed social connections among its users,
  and actions of users with timestamps (e.g., rating movies,
  voting for stories, re-tweeting URLs, marking favorite photos).
We learn influence probabilities on edges using a widely accepted
  method by Goyal et al.~\cite{goyal2010learning}.
We remove edges with zero influence probability
  and keep the largest weakly connected component.
\Cref{table:dataset} summaries our datasets.

\mynoindent\textbf{Boosted influence probabilities.}
To the best of our knowledge, no existing work quantitatively studies
  how influence among people changes respect to different kinds of
  ``boosting strategies''.
Therefore, we assign the boosted influence probabilities as follows.
For every edge $e_{uv}$ with an influence probability of $p_{uv}$,
  let the boosted influence probability $p'_{uv}$ be $1-{(1-p_{uv})}^\beta$ ($\beta>1$).
We refer to $\beta$ as the \textit{boosting parameter}.
Due to the large number of combinations of parameters,
  we fix $\beta=2$ unless otherwise specified.
Intuitively, $\beta=2$ indicates that every activated neighbor of a
  boosted node $v$ has two independent chances to activate $v$.
We also provide experiments showing the impacts of $\beta$.

\mynoindent\textbf{Seed selection.}
\onlypaper{
We use the \texttt{IMM} method~\citep{tang2015influence} to select $50$
  influential nodes.
\Cref{table:dataset} summaries the expected influence spread of selected seeds.
We also conduct experiments with randomly selected seeds. 
The setting maps to the situation where some users become seeds spontaneously.
The experimental results given influential seeds and random seeds provide
  similar insights.
Due to space limits, we leave the detailed results where seed users are randomly
  selected in our technical report.~\cite{lin2016boosttechreport}.
}
\onlytech{
We select seeds in two ways.
(i) We use the \texttt{IMM} method~\citep{tang2015influence} to select $50$
  influential nodes.
In practice, the selected seeds typically correspond to highly influential
  customers selected with great care. 
\Cref{table:dataset} summaries the expected influence spread of
  selected seeds.
(ii) We randomly select five sets of $500$ seeds.
The setting maps to the situation where some users become seeds spontaneously.
\Cref{table:dataset} shows the average expected influence of
  five sets of selected seeds.
}

\mynoindent\textbf{Baselines.}
As far as we know, no existing algorithm is applicable
  to the $k$-boosting problem.
Thus, we compare our proposed algorithms
  with several heuristic baselines, as listed below.
\begin{itemize}[leftmargin=*]
\item \textit{HighDegreeGlobal}:
Starting from an empty set $B$,
  \textit{HighDegreeGlobal} iteratively adds a node with the highest
  \textit{weighted degree} to $B$, until $k$ nodes are selected.
We use four definitions of the \textit{weighted degree},
  for a node $u\notin (S\cup B)$, they are:
(1) the sum of influence probabilities on outgoing edges (i.e., $\sum_{e_{uv}}p_{uv}$);
(2) the ``discounted'' sum of influence probabilities on outgoing edges
(i.e., $\sum_{e_{uv}, v\notin B}p_{uv}$);
(3) the sum of the boost of influence probabilities on incoming edges
(i.e., $\sum_{e_{vu}}[p_{vu}'-p_{vu}]$);
(4) the ``discounted'' sum of the boost of influence probabilities on incoming edges
(i.e., $\sum_{e_{vu}, v\notin B}[p_{vu}'-p_{vu}]$).
Each definition outperforms others in some experiments.
We report the maximum boost of influence among four solutions
  as the result.
\item\textit{HighDegreeLocal}:
The only difference between \textit{HighDegreeLocal} and
\textit{HighDegreeGlobal} is that, we first consider nodes close to seeds.
We first try to select $k$ nodes among neighbors of seeds.
If there is not enough nodes to select, we continue to select among nodes that 
  are two-hops away from seeds, and we repeat until $k$ nodes are selected.
We report the maximum boost of influence among four solutions selected
  using four definitions of the \textit{weighted degree}.
\item\textit{PageRank}:
  We use the \textit{PageRank} baseline for influence maximization
    problems~\cite{chen2010scalable}.
  When a node $u$ has influence on $v$, it implies that node
    $v$ ``votes'' for the rank of $u$.
  The transition probability on edge $e_{uv}$ is $p_{vu}/\rho(u)$,
    where $\rho(u)$ is the summation of influence probabilities on all incoming
    edges of $u$.
  The restart probability is $0.15$.
  We compute the \textit{PageRank} iteratively until two consecutive iteration
    differ for at most $10^{-4}$ in $L_1$ norm.
\item\textit{MoreSeeds}:
  We adapt the \texttt{IMM} framework to select $k$ more seeds with the
    goal of maximizing the \textit{increase} of the expected influence spread.
    We return the selected $k$ seeds as the boosted nodes.
\end{itemize}
We do not compare our algorithms to the greedy algorithm with Monte-Carlo simulations.
Because it is extremely computationally expensive even for
  the classical influence maximization~\citep{kempe2003maximizing,tang2014influence}.

\mynoindent\textbf{Settings.}
For \texttt{PRR-Boost} and \texttt{PRR-Boost-LB},
  we let $\epsilon=0.5$ and $\ell=1$ so that both algorithms
  return $(1-1/e-\epsilon)\cdot\frac{\mu(B^*)}{\Delta_{S}(B^*)}$-approximate
  solution with probability at least $1-1/n$.
To enforce fair comparison, for all algorithms,
  we evaluate the boost of influence spread by $20,000$ Monte-Carlo simulations.

\onlytech{
\subsection{Influential seeds}
In this part, we report results where the seeds 
  are $50$ influential nodes. 
The setting here maps to the real-world situation where the
  \textit{initial adopters} are highly influential users selected with great care. 
We run each experiment five times and report the average results.
}
\onlypaper{
\subsection{Performance evaluation}
In this part, we evaluate the performance of our proposed algorithms.
We report results where the seeds are $50$ influential nodes.
We run each experiment five times and report the average results.
}

\mynoindent\textbf{Quality of solution.}
\Cref{fig:quality_boost} compares
  the boost of the influence spread of solutions returned by different algorithms.
\texttt{PRR-Boost} always return the best solution,
  and \texttt{PRR-Boost-LB} returns solutions
  with slightly lower but comparable quality.
Moreover, 
  both \texttt{PRR-Boost} and \texttt{PRR-Boost-LB} outperform other baselines.
In addition, \textit{MoreSeeds} returns solutions with the lowest quality.
This is because nodes selected by \textit{MoreSeeds} are typically in the part of graph not
	 covered by the existing seeds so that they could generate larger marginal influence.
In contrast, boosting nodes are typically close to existing seeds to make the boosting result
	more effective. 
Thus, our empirical result further demonstrates that $k$-boosting problem differs significantly
	from the influence maximization problem.
 
\begin{figure}[t]
    \centering
    \captionsetup[subfigure]{font=small,oneside,margin={0.6cm,0.0cm}}
    \subfloat{\includegraphics[width=0.5\textwidth]
      {./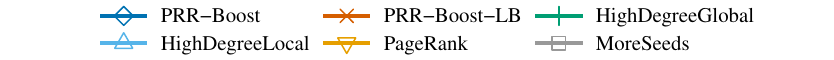}}\\
    \setcounter{subfigure}{0}%
    \subfloat[\textit{Digg}]
    {\includegraphics[width=0.23\textwidth]
      {./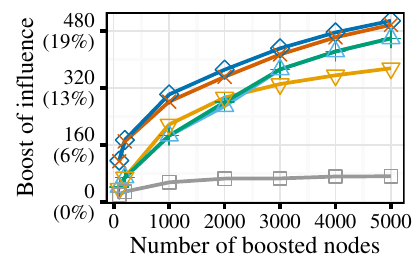}\label{fig:quality_digg}}
    \subfloat[\textit{Flixster}]
    {\includegraphics[width=0.23\textwidth]
      {./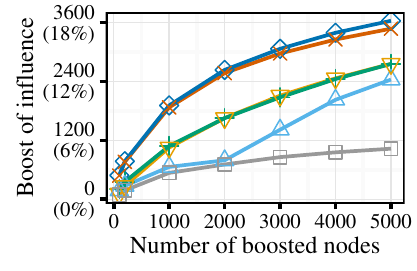}\label{fig:quality_flixster}}
    \makeatletter \if@twocolumn \\ \fi \makeatother
    \subfloat[\textit{Twitter}]
    {\includegraphics[width=0.23\textwidth]
      {./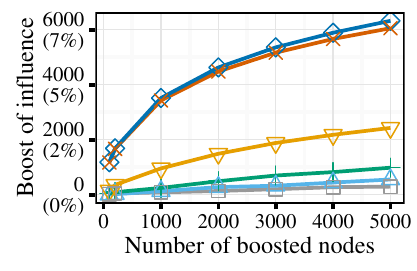}\label{fig:quality_twitter}}
    \subfloat[\textit{Flickr}]
    {\includegraphics[width=0.23\textwidth]
      {./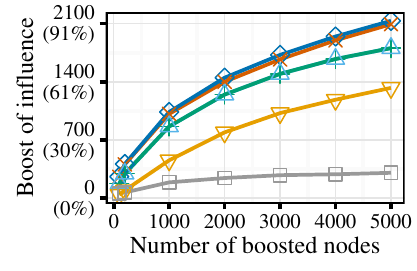}\label{fig:quality_flickr}}
    \caption{Boost of the influence versus $k$\onlytech{ (influential seeds)}.}
    \label{fig:quality_boost}
\end{figure}

\mynoindent\textbf{Running time.}
\Cref{fig:all_time} shows the running time of 
  \texttt{PRR-Boost} and \texttt{PRR-Boost-LB}.
The running time of both algorithm increases when $k$
  increases.
This is mainly because the number of random PRR-graphs required increases when $k$ increases.
\Cref{fig:all_time} also shows that the running time is
  in general proportional to the number of nodes and edges for \textit{Digg},
  \textit{Flixster} and \textit{Twitter}, \textit{but not for Flickr}.
This is mainly because of the 
  significantly smaller average influence probabilities on \textit{Flickr}
  as shown in \Cref{table:dataset}, and the accordingly significantly
  lower expected cost for generating a random PRR-graph (i.e., $EPT$)
  as we will show shortly in \Cref{table:memory}.
In \Cref{fig:all_time}, we also label the speedup of
  \texttt{PRR-Boost-LB} compared with \texttt{PRR-Boost}.
Together with \Cref{fig:quality_boost},
  we can see that \texttt{PRR-Boost-LB}
  returns solutions with quality comparable to \texttt{PRR-Boost}
  but runs faster.
Because our approximation algorithms consistently outperform all
  heuristic methods with no performance guarantee in all tested cases,
  we do not compare the running time of our algorithms with
  heuristic methods to avoid cluttering the results.

\begin{figure}[t]
    \centering
    \captionsetup[subfigure]{font=small,oneside,margin={0.6cm,0.0cm}}
    \subfloat{\includegraphics[width=0.5\textwidth]{./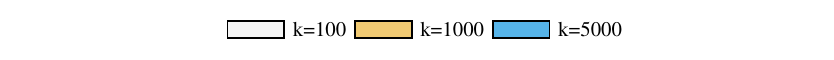}}\\
    \setcounter{subfigure}{0}%
    \subfloat[\texttt{PRR-Boost}]
    {\includegraphics[width=0.23\textwidth]{./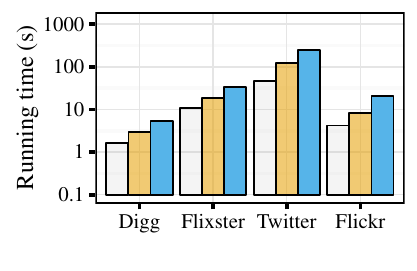}}
    \subfloat[\texttt{PRR-Boost-LB}]
    {\includegraphics[width=0.23\textwidth]{./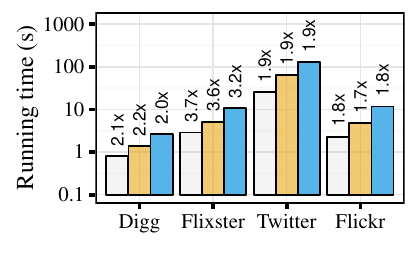}}
    \caption{Running time\onlytech{ (influential seeds)}.}\label{fig:all_time}
\end{figure}

\mynoindent\textbf{Effectiveness of the compression phase.}
\Cref{table:memory} shows the ``compression ratio'' of PRR-graphs and
  memory usages of \texttt{PRR-Boost} and \texttt{PRR-Boost-LB},
  demonstrating the importance of compressing PRR-graphs.
The \textit{compression ratio} is the ratio
  between the average number of uncompressed edges and average number of edges
  after compression in \textit{boostable} PRR-graphs.
Besides the total memory usage,
  we also show in parenthesis the memory usage for storing boostable PRR-graphs,
  which is measured as the additional memory usage starting from the generation
  of the first PRR-graph.
For example, for the \textit{Digg} dataset and $k=100$,
  for boostable PRR-graphs,
  the average number of uncompressed edges is $1810.32$,
  the average number of compressed edges is $2.41$, and
  the compression ratio is $751.59$.
Moreover, the total memory usage of \texttt{PRR-Boost} is $0.07$GB
  with around $0.01$GB being used to storing ``boostable'' PRR-graphs.
The compression ratio is high in practice for two reasons.
\textit{First}, many nodes visited
  in the first phase cannot be reached by seeds.
\textit{Second}, among the remaining nodes,
  many of them can be merged into the super-seed node,
  and most non-super-seed nodes will be removed because they are not on
  any paths to the root node without going through the super-seed node. 
The high compression ratio and the memory used for storing
  compressed PRR-graphs show that
  the compression phase is indispensable.
For \texttt{PRR-Boost-LB}, the memory usage is \textit{much lower}
  because we only store ``critical nodes'' of boostable PRR-graphs.
In our experiments with $\beta=2$,
  each boostable PRR-graph only has a few critical nodes on average,
  which explains the low memory usage of \texttt{PRR-Boost-LB}.
If one is indifferent about the slightly difference between the quality
  of solutions returned by \texttt{PRR-Boost-LB} and \texttt{PRR-Boost},
  we suggest to use \texttt{PRR-Boost-LB} because of
  its lower running time and lower memory usage.

\begin{table}[t]
\makeatletter \if@twocolumn
\fontsize{7}{7}\selectfont
\fi \makeatother
\centering
\caption{Memory usage and compression ratio\onlytech{ (influential seeds)}. Numbers in parentheses are additional memory usage for boostable PRR-graphs.}
\label{table:memory}
\begin{tabular}{@{}lllcc@{}}
\toprule
  \multirow{2}{*}{$k$} & \multirow{2}{*}{\textbf{Dataset}} & \multicolumn{2}{c}{\textbf{PRR-Boost}} & \textbf{PRR-Boost-LB} \\
  \cmidrule(l){3-4} \cmidrule(l){5-5}
  && \textbf{Compression Ratio} & \textbf{Memory (GB)} & \textbf{Memory (GB)} \\ \midrule
  \multirow{4}{*}{100}
&\textit{Digg} &  1810.32 / 2.41 =  751.79 & 0.07 (0.01) & 0.06 (0.00) \\ 
  &\textit{Flixster} &  3254.91 / 3.67 =  886.90 & 0.23 (0.05) & 0.19 (0.01) \\ 
  &\textit{Twitter} & 14343.31 / 4.62 = 3104.61 & 0.74 (0.07) & 0.69 (0.02) \\ 
  &\textit{Flickr} &   189.61 / 6.86 =   27.66 & 0.54 (0.07) & 0.48 (0.01) \\ 
  \midrule
  \multirow{4}{*}{5000}
&\textit{Digg} &  1821.21 / 2.41 =  755.06 & 0.09 (0.03) & 0.07 (0.01) \\ 
  &\textit{Flixster} &  3255.42 / 3.67 =  886.07 & 0.32 (0.14) & 0.21 (0.03) \\ 
  &\textit{Twitter} & 14420.47 / 4.61 = 3125.37 & 0.89 (0.22) & 0.73 (0.06) \\ 
  &\textit{Flickr} &   189.08 / 6.84 =   27.64 & 0.65 (0.18) & 0.50 (0.03) \\
\bottomrule
\end{tabular}
\end{table}

\mynoindent\textbf{Approximation factors in the \textit{Sandwich Approximation}.}
Recall that the approximate ratio of \texttt{PRR-Boost} and
  \texttt{PRR-Boost-LB} depends on the
  ratio $\frac{\mu(B^*)}{\Delta_{S}(B^*)}$.
The closer to one the ratio is,
  the better the approximation guarantee is.
With $B^*$ being unknown due to the NP-hardness of the problem,
  we show the ratio when the boost is relatively large.
We obtain $300$ sets of $k$ boosted nodes by replacing a random number of nodes
  in $B_{sa}$ by other non-seed nodes,
  where $B_{sa}$ is the solution returned by \texttt{PRR-Boost}.
For a given set $B$,
  we use PRR-graphs generated for finding $B_{sa}$ to estimate 
  $\frac{\mu(B)}{\Delta_{S}(B)}$.
\Cref{fig:quality_lb} shows the ratios for generated sets $B$
  as a function of $\Delta_{S}(B)$ for varying $k$.
Because we intend to show the ratio when the boost of influence is large,
  we do not show points corresponding to sets whose boost of influence is less
  than $50\%$ of $\Delta_{S}(B_{sa})$.
For all datasets,
  the ratio is above $0.94$, $0.83$ and $0.74$ for $k=100,1000,5000$, respectively.
The ratio is closer to one when $k$ is smaller,
  and we now explain this.
In practice, most boostable PRR-graphs have ``critical nodes''.
When $k$ is small, say $100$,
  \texttt{PRR-Boost} and \texttt{PRR-Boost-LB} tend to return node sets $B$
  so that every node in $B$ is a critical node in many boostable PRR-graphs.
\onlytech{
For example, for \textit{Twitter}, when $k=100$,
  among PRR-graphs that have critical nodes and are activated upon boosting $B_{sa}$,
  above $98\%$ of them have their critical nodes boosted (i.e., in $B_{sa}$).
}
Meanwhile, many root node $r$ of PRR-graphs without critical nodes may stay inactive.
For a given PRR-graph $R$, if $B$ contains critical nodes of $R$ or if
  the root node of $R$ stays inactive upon boosting $B$,
  $f^{-}_R(B)$ does not underestimate $f_R(B)$.
Therefore, when $k$ is smaller,
  the ratio of $\frac{\mu(B)}{\Delta_{S}(B)}=\frac{\mathbb{E}[f_R^{-}(B)]}{\mathbb{E}[f_R(B)]}$ 
  tends to be closer to one.
When $k$ increases, we can boost more nodes,
  and root nodes of PRR-graphs without critical nodes may be activated,
  thus the approximation ratio tends to decrease.
\onlytech{
For example, for \textit{Twitter}, when $k$ increases from $100$ to $5000$,
  among PRR-graphs whose root nodes are activated upon boosting $B_{sa}$,
  the fraction of them having critical nodes decreases from around $98\%$ to
  $88\%$.
Accordingly, the ratio of $\mu(B_{sa})/\Delta_S(B_{sa})$
  decreased by around $9\%$ when $k$ increases from $100$ to $5000$.
}

\begin{figure}[t]
  \centering
  \includegraphics[width=0.5\textwidth]{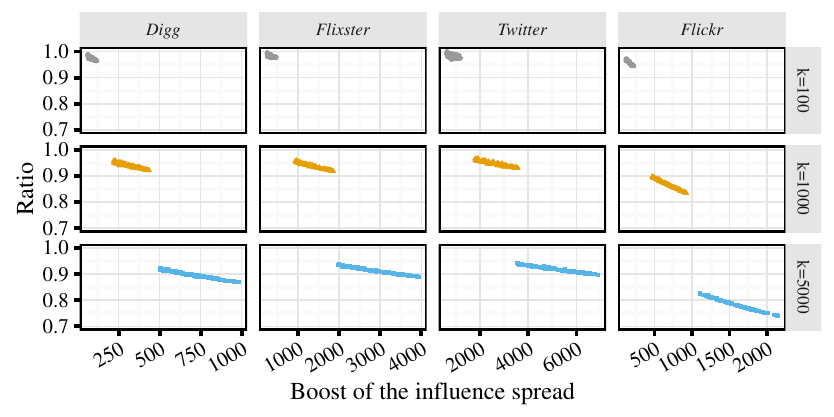}
  \caption{Sandwich Approximation: $\frac{\mu(B)}{\Delta_{S}(B)}$
    \onlytech{ (influential seeds)}. }\label{fig:quality_lb}
\end{figure}

\mynoindent\textbf{Effects of the boosted influence probabilities.}
In our experiments,
  the larger the \textit{boosting parameter} $\beta$ is,
  the larger the boosted influence probabilities on edges are.
\Cref{fig:beta} shows the effects of $\beta$ on the boost of influence and
  the running time when $k=1000$.
For other values of $k$, the results are similar.
In \Cref{fig:beta_a}, 
  the optimal boost increases when $\beta$ increases.
When $\beta$ increases, for \textit{Flixster} and \textit{Flickr},
  \texttt{PRR-Boost-LB} returns solution with quality comparable to those
  returned by \texttt{PRR-Boost}.
For \textit{Twitter},
  we consider the slightly degenerated performance of \texttt{PRR-Boost-LB} acceptable
  because \texttt{PRR-Boost-LB} runs significantly faster.
\Cref{fig:beta_b} shows the running time for
  \texttt{PRR-Boost} and \texttt{PRR-Boost-LB}.
When $\beta$ increases,
  the running time of \texttt{PRR-Boost} increases accordingly,
  but the running time of \texttt{PRR-Boost-LB} remains almost unchanged.
Therefore, compared with \texttt{PRR-Boost},
  \texttt{PRR-Boost-LB} is more scalable
  to \textit{larger boosted influence probabilities on edges}.
In fact, when $\beta$ increases,
  a random PRR-graph tends to include more nodes and edges.
The running time of \texttt{PRR-Boost} increases mainly because the
  cost for PRR-graph generation increases.
However, when $\beta$ increases,
  we observe that the cost for obtaining ``critical nodes'' for a random
  PRR-graph does not change much, thus the running time of
  \texttt{PRR-Boost-LB} remains almost unchanged.
\Cref{fig:beta_ratio} shows the approximation ratio of the sandwich
approximation strategy with varying boosting parameters.
We observe that, for every dataset, when we increase the boosting parameter, 
  the ratio of $\frac{\mu(B)}{\Delta_{S}(B)}$ for large $\Delta_{S}(B)$ 
  remains almost the same.
This suggests that both our proposed algorithms remain effective when we
  increase the boosted influence probabilities on edges. 

\begin{figure}[t]
    \captionsetup[subfigure]{font=small,oneside,margin={0.6cm,0.0cm}}
    \centering
    \subfloat{\includegraphics[width=0.5\textwidth]
      {./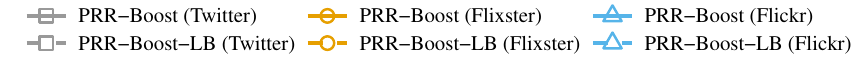}}\\
    \setcounter{subfigure}{0}%
    \subfloat[\textit{Boost of influence}]
    {\includegraphics[width=0.23\textwidth]
      {./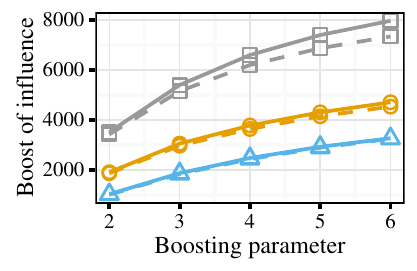}\label{fig:beta_a}}
    \subfloat[\textit{Running time}]
    {\includegraphics[width=0.23\textwidth]
      {./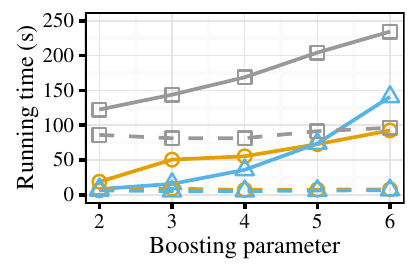}\label{fig:beta_b}}
    \caption{Effects of the boosting parameter (\onlytech{influential seeds, }$k=1000$).}\label{fig:beta}
\end{figure}

\begin{figure}[t]
  \centering
  \includegraphics[width=0.5\textwidth]{./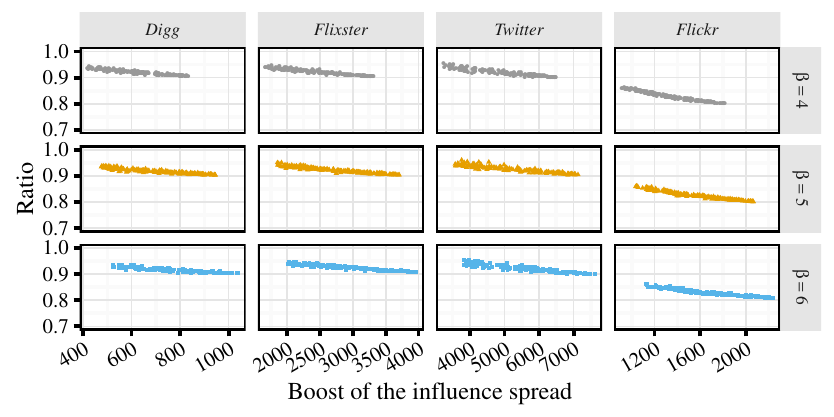}
  \caption{Sandwich Approximation with varying boosting parameter:
    $\frac{\mu(B)}{\Delta_{S}(B)}$ (\onlytech{influential seeds, }$k=1000$).}\label{fig:beta_ratio}
\end{figure}

\onlytech{
\subsection{Random seeds}

In this part,
  we select five sets of $500$ random nodes as seeds
  for each dataset.
The setting here maps to the real situation where some users
  become \textit{seeds} spontaneously.
All experiments are conducted on five sets of random seeds, and we report
the average results.

\mynoindent\textbf{Quality of solution.}
We select up to $5000$ nodes and compare our algorithms with baselines.
From \Cref{fig:rd_quality_boost},
  we can draw conclusions similar to those drawn from
  \Cref{fig:quality_boost}
  where the seeds are highly influential users.
Both \texttt{PRR-Boost} and \texttt{PRR-Boost-LB} 
  outperform all baselines.

\begin{figure}[t]
    \centering
    \captionsetup[subfigure]{font=small,oneside,margin={0.6cm,0.0cm}}
    \subfloat{\includegraphics[width=0.5\textwidth]
      {./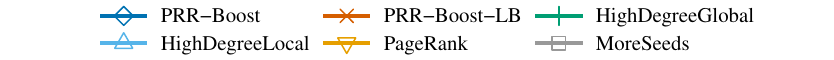}}\\
    \setcounter{subfigure}{0}%
    \subfloat[\textit{Digg}]
    {\includegraphics[width=0.23\textwidth]
      {./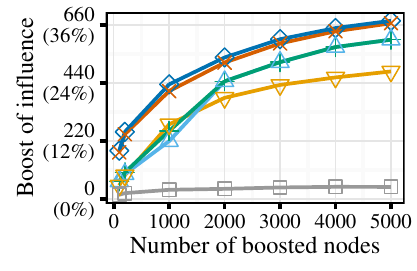}\label{fig:rd_quality_digg}}
    \subfloat[\textit{Flixster}]
    {\includegraphics[width=0.23\textwidth]
      {./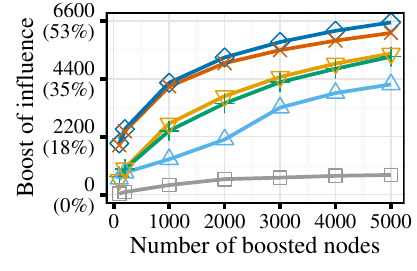}\label{fig:rd_quality_flixster}}
    \makeatletter \if@twocolumn \\ \fi \makeatother

    \subfloat[\textit{Twitter}]
    {\includegraphics[width=0.23\textwidth]
      {./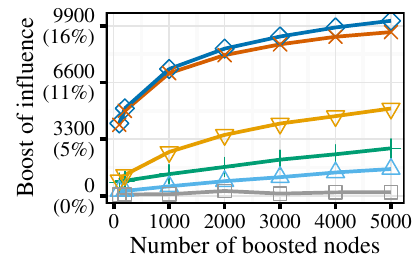}\label{fig:rd_quality_twitter}}
    \subfloat[\textit{Flickr}]
    {\includegraphics[width=0.23\textwidth]
      {./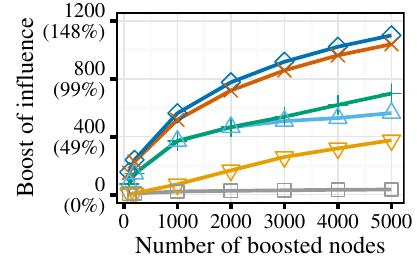}\label{fig:rd_quality_flickr}}
    \caption{Boost of the influence versus $k$ (random seeds).}
    \label{fig:rd_quality_boost}
\end{figure}

\mynoindent\textbf{Running time.}
\Cref{fig:rd_all_time} shows the running time of 
  \texttt{PRR-Boost} and \texttt{PRR-Boost-LB}, and the 
  speedup of \texttt{PRR-Boost-LB} compared with \texttt{PRR-Boost}.
\Cref{fig:rd_all_time2} shows that 
  \texttt{PRR-Boost-LB} runs up to three times faster than \texttt{PRR-Boost}.
Together with \Cref{fig:rd_quality_boost},
  \texttt{PRR-Boost-LB} is in fact both efficient and effective
  given randomly selected seeds.

\begin{figure}[t]
    \centering
    \captionsetup[subfigure]{font=small,oneside,margin={0.6cm,0.0cm}}
    \subfloat{\includegraphics[width=0.5\textwidth]{./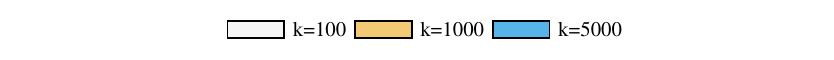}}\\
    \setcounter{subfigure}{0}%
    \subfloat[\texttt{PRR-Boost}]
    {\includegraphics[width=0.23\textwidth]{./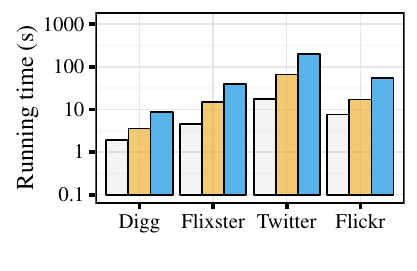}}
    \subfloat[\texttt{PRR-Boost-LB}]
    {\includegraphics[width=0.23\textwidth]{./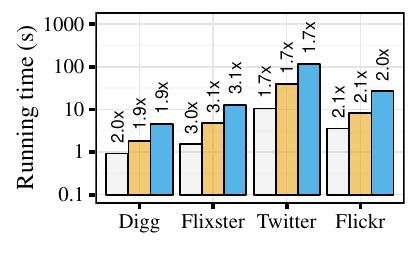}\label{fig:rd_all_time2}}
    \caption{Running time (random seeds).}\label{fig:rd_all_time}
\end{figure}

\mynoindent\textbf{Effectiveness of the compression phase.}
\Cref{table:rd_memory} shows the compression ratio of \texttt{PRR-Boost},
  and the memory usage of both proposed algorithms.
Given randomly selected seed nodes,
  the compression step of PRR-graphs is also very effective.
Together with \Cref{table:memory}, we can conclude that the compression
  phase is an indispensable step for both cases where the seeds are highly
  influence users or random users.

\begin{table}[t]
\fontsize{7}{7}\selectfont
\centering
\caption{Memory usage and compression ratio (random seeds).}
\label{table:rd_memory}
\begin{tabular}{@{}lllcc@{}}
\toprule
  \multirow{2}{*}{$k$} & \multirow{2}{*}{\textbf{Dataset}} & \multicolumn{2}{c}{\textbf{PRR-Boost}} & \textbf{PRR-Boost-LB} \\
  \cmidrule(l){3-4} \cmidrule(l){5-5}
  && \textbf{Compression Ratio} & \textbf{Memory (GB)} & \textbf{Memory (GB)} \\ \midrule
  \multirow{4}{*}{100}
&\textit{Digg} &  3069.15 /  5.61 = 547.28 & 0.07 (0.01) & 0.06 (0.00) \\ 
  &\textit{Flixster} &  3754.43 / 25.83 = 145.37 & 0.24 (0.06) & 0.19 (0.01) \\ 
  &\textit{Twitter} & 16960.51 / 56.35 = 300.96 & 0.78 (0.11) & 0.68 (0.01) \\ 
  &\textit{Flickr} &   701.84 / 18.12 =  38.73 & 0.56 (0.09) & 0.48 (0.01) \\ 
  \midrule
  \multirow{4}{*}{5000}
&\textit{Digg} &  3040.94 /  5.59 = 544.19 & 0.12 (0.06) & 0.07 (0.01) \\ 
  &\textit{Flixster} &  3748.74 / 25.86 = 144.94 & 0.71 (0.53) & 0.21 (0.03) \\ 
  &\textit{Twitter} & 16884.86 / 57.29 = 294.72 & 1.51 (0.84) & 0.72 (0.05) \\ 
  &\textit{Flickr} &   701.37 / 18.10 =  38.75 & 1.00 (0.53) & 0.50 (0.03) \\ 
\bottomrule
\end{tabular}
\end{table}

\mynoindent\textbf{Approximation factors in the \textit{Sandwich Approximation}.}
The approximate ratio of \texttt{PRR-Boost} and
  \texttt{PRR-Boost-LB} depends on the
  ratio $\frac{\mu(B^*)}{\Delta_{S}(B^*)}$.
We use the same method to generate different sets of boosted nodes $B$ as in the
  previous sets of experiments.
\Cref{fig:rd_quality_lb} shows the ratios for generated sets $B$
  as a function of $\Delta_{S}(B)$ for $k\in\{100,1000,5000\}$.
For all four datasets,
  the ratio is above $0.76$, $0.62$ and $0.47$ for $k=100,1000,5000$, respectively.
As from \Cref{fig:quality_lb},
  the ratio is closer to one when $k$ is smaller.
Compared with \Cref{fig:quality_lb}, we observe that the ratios in
  \Cref{fig:rd_quality_lb} are lower.
The main reason is that,
  along with many PRR-graphs with critical nodes,
  many PRR-graphs without critical nodes are also boosted.
For example, for \textit{Twitter}, when $k=5000$,
  among PRR-graphs whose root nodes are activated upon boosting $B_{sa}$,
  around $25\%$ of them do not have critical nodes,
  and around $3\%$ of them have critical nodes but their critical nodes are not in $B_{sa}$.
Note that,
  although the approximation guarantee of our proposed algorithms decreases as
  $k$ increases,
  \Cref{fig:rd_quality_boost} shows that our proposed algorithms still
  outperform all other baselines.

\begin{figure}[ht]
  \centering
  \includegraphics[width=0.5\textwidth]{./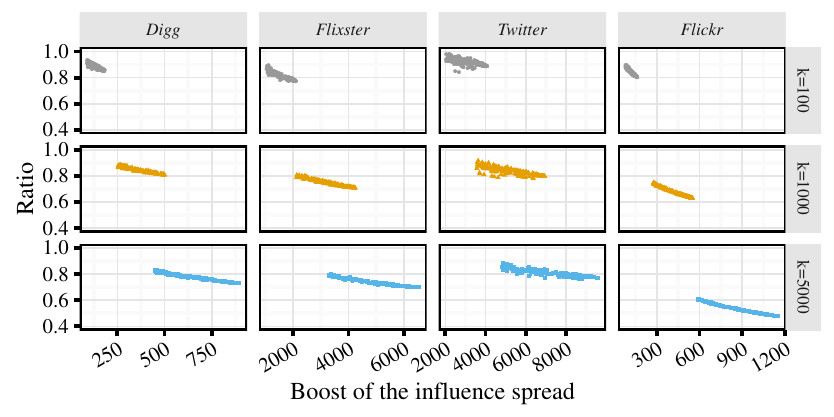}
  \caption{Sandwich Approximation: $\frac{\mu(B)}{\Delta_{S}(B)}$ 
    (random seeds).}\label{fig:rd_quality_lb}
\end{figure}
}

\subsection{Budget allocation between seeding and boosting}
In this part, we vary both the number of seeders and the number of boosted nodes.
Under the context of viral marketing,
  this corresponds to the situation where a company can decide both the
  number of free samples and the number of coupons they offer.
Intuitively, targeting a user as a seeder (e.g., offering a free
  product and rewarding for writing positive opinions)
  must cost more than boosting a user (e.g., offering a discount
  or displaying ads).
In the experiments,
  we assume that we can target $100$ users as seed nodes with all the budget.
Moreover, we assume that targeting a seeder costs $100$ to $800$ times
  as much as boosting a user.
For example, suppose targeting a seeder costs $100$ times as much as
boosting a user:
we can choose to spend $20\%$ of our budget on targeting initial adopters
  (i.e., finding $20$ seed users and boosting $8000$ users);
or, we can spend $80\%$ of the budget on targeting initial adopters
  (i.e, finding $80$ seeds and boosting $2000$ users). 
We explore how the expected influence spread changes, when we decrease the
  number of seed users and increase the number of boosted users.
Given the budget allocation (i.e.,  the number of seeds and the number boosted users),
  we first identify a set of influential seeds using the \texttt{IMM} method,
  then we use \texttt{PRR-Boost} to select the set of nodes we boost.
Finally, we use $20,000$ Monte-Carlo simulations to estimate the expected
  boosted influence spread.

\begin{figure}[ht]
    \centering
    \captionsetup[subfigure]{font=small,oneside,margin={0.6cm,0.0cm}}
    \subfloat{\includegraphics[width=0.5\textwidth]
      {./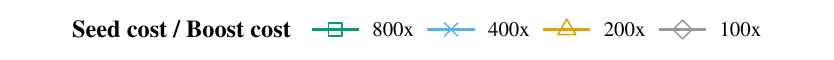}}\\
    \setcounter{subfigure}{0}%
    \subfloat[\textit{Flixster}]
    {\includegraphics[width=0.23\textwidth]
      {./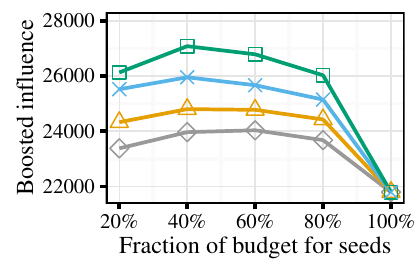}\label{fig:bd_flixster}}
    \subfloat[\textit{Flickr}]
    {\includegraphics[width=0.23\textwidth]
      {./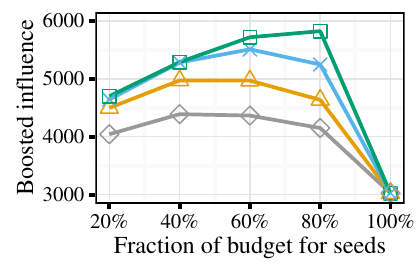}\label{fig:bd_flickr}}
    \caption{Budget allocation between seeding and boosting.}\label{fig:budget}
\end{figure}

\Cref{fig:budget} shows the results for \textit{Flixster} and \textit{Flickr}.
Spending a mixed budget among initial adopters and boosting users
	achieves higher final influence spread than spending all budget on initial adopters.
For example, for cost ratio of $800$ between seeding and boosting, if we choose $80\%$ budget for seeding and
	$20\%$ for boosting, we would achieve around $20\%$ and $92\%$ higher influence spread than 
	pure seeding, for \textit{Flixster} and \textit{Flickr} respectively.
Moreover, the best budget mix is different for different networks and different cost ratio, suggesting the
	need for specific tuning and analysis for each case.

\section{Experiments on Bidirected Trees}\label{sec:experiments_bid}

We conduct extensive experiments to test the proposed algorithms on bidirected trees.
In our experiments, \texttt{DP-Boost} can efficiently approximate the
  $k$-boosting problem for bidirected trees with thousands of nodes.
We also show that \texttt{Greedy-Boost} returns solutions that are near-optimal.
All experiments were conduct on same environment as in \Cref{sec:experiments}.

We use synthetic bidirected trees to test algorithms for bidirected trees
  in \Cref{sec:bidirected}.
For every given number of nodes $n$,
  we construct a complete (undirected) binary tree with $n$ nodes,
  then we replace each undirected edge by two directed edges,
  one in each direction.
We assign influence probabilities $\{p_{uv}\}$'s on edges according to the 
  \textit{Trivalency} model.
Moreover, for every edge $e_{uv}$, let $p'_{uv}=1-{(1-p_{uv})}^2$.
For every tree, we select $50$ seeds using the \texttt{IMM} method.
We compare \texttt{Greedy-Boost} and \texttt{DP-Boost}. 
The boost of influence of the returned sets are computed exactly.
We run each experiment five times with randomly assigned influence probabilities
  and report the averaged results.

\mynoindent\textbf{\texttt{Greedy-Boost} versus \texttt{DP-Boost} with varying $\epsilon$.}
For \texttt{DP-Boost}, 
  the value of $\epsilon$ controls the tradeoff between the 
  accuracy and computational costs.
\Cref{fig:bid_eps} shows results for
  \texttt{DP-Boost} with varying $\epsilon$ and \texttt{Greedy-Boost}.
For \texttt{DP-Boost}, when $\epsilon$ increases,
  the running time decreases dramatically, but the boost is almost unaffected.
Because \texttt{DP-Boost} returns $(1-\epsilon)$-approximate solutions,
  it provides a benchmark for the greedy algorithm.
\Cref{fig:bid_eps_inf} shows that the greedy algorithm
  \texttt{Greedy-Boost} returns near-optimal solutions in practice.
Moreover, \Cref{fig:bid_eps_time} shows
  \texttt{Greedy-Boost} is orders of magnitude faster than
  \texttt{DP-Boost} with $\epsilon=1$ where the theoretical guarantee is in fact lost.

\begin{figure}[ht]
    \centering
    \captionsetup[subfigure]{font=scriptsize,oneside,margin={0.6cm,0.0cm}}
    \subfloat{\includegraphics[width=0.5\textwidth]{./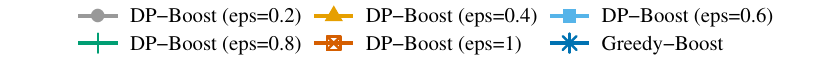}}\\
    \setcounter{subfigure}{0}%
    \subfloat[Boost of influence]
    {\includegraphics[width=0.23\textwidth]{./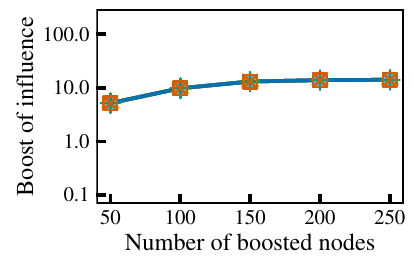}\label{fig:bid_eps_inf}}
    \subfloat[Running time]
    {\includegraphics[width=0.23\textwidth]{./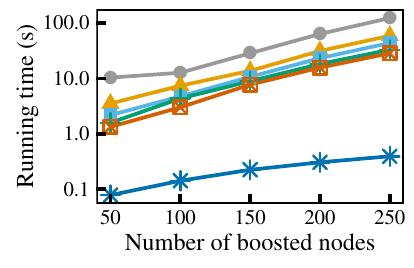}\label{fig:bid_eps_time}}
    \caption{The greedy algorithm versus the rounded dynamic
      programming on random bidirected trees with $2000$ nodes.}\label{fig:bid_eps}
\end{figure}

\mynoindent\textbf{\texttt{Greedy-Boost} versus \texttt{DP-Boost} with varying tree sizes.}
We set $\epsilon=0.5$ for \texttt{DP-Boost}.
\Cref{fig:bid_n} compares \texttt{Greedy-Boost} and \texttt{DP-Boost} for
  trees with varying sizes.
Results for smaller values of $k$ are similar. 
In \Cref{fig:bid_n_inf}, for every $k$, lines representing
  \texttt{Greedy-Boost} and \texttt{DP-Boost} are completely overlapped,
  suggesting that \texttt{Greedy-Boost} always return near-optimal solutions
  on trees with varying sizes. 
\Cref{fig:bid_n_time} demonstrates the efficiency of \texttt{Greedy-Boost}.
Both \Cref{fig:bid_eps} and \Cref{fig:bid_n} suggest that
  \texttt{Greedy-Boost} is very efficient and it returns near-optimal solutions in
  practice.

\begin{figure}[ht]
    \centering
    \captionsetup[subfigure]{font=scriptsize,oneside,margin={0.6cm,0.0cm}}
    \subfloat{\includegraphics[width=0.5\textwidth]{./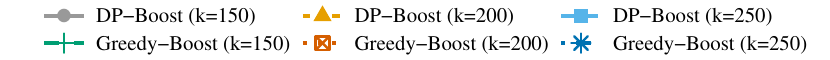}}\\
    \setcounter{subfigure}{0}%
    \subfloat[Boost of influence]
    {\includegraphics[width=0.23\textwidth]{./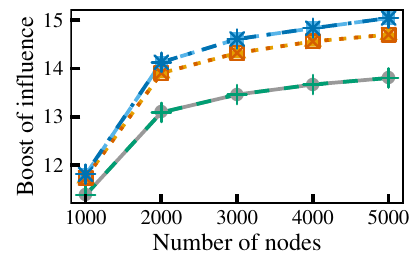}\label{fig:bid_n_inf}}
    \subfloat[Running time]
    {\includegraphics[width=0.23\textwidth]{./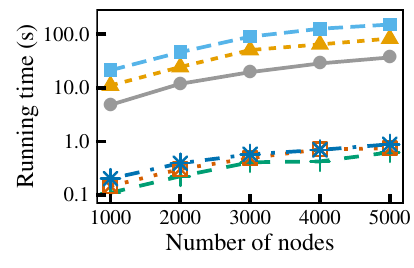}\label{fig:bid_n_time}}
    \caption{The greedy algorithm versus the rounded dynamic
      programming on random bidirected trees with varies sizes.}\label{fig:bid_n}
\end{figure}

\section{Conclusion}\label{sec:conclusion}

In this work, we address a novel $k$-boosting problem that asks 
  how to \textit{boost} the influence spread by offering $k$
  users incentives so that they are more likely to be influenced by friends.
For the $k$-boosting problem on general graphs,
  we develop efficient approximation algorithms,
  \texttt{PRR-Boost} and \texttt{PRR-Boost-LB},
  that have \textit{data-dependent} approximation factors.
Both \texttt{PRR-Boost} and \texttt{PRR-Boost-LB} are delicate integration
  of \textit{Potentially Reverse Reachable Graphs} and
  the state-of-the-art techniques for influence maximization problems.
For the $k$-boosting problem on bidirected trees,
  we present an efficient greedy algorithm \texttt{Greedy-Boost} based on
  a linear-time exact computation of the \textit{boost of influence spread},
  and we also present \texttt{DP-Boost} which is shown to be a fully
  polynomial-time approximation scheme.
We conduct extensive experiments on real datasets using \texttt{PRR-Boost} and
\texttt{PRR-Boost-LB}.
In our experiments, we consider both the case where the seeds are highly
influential users, and the case where the seeds are randomly selected users.
Results demonstrate the superiority of our proposed algorithms 
  over intuitive baselines.
Compared with \texttt{PRR-Boost}, experimental results show that
  \texttt{PRR-Boost-LB} returns solution with comparable quality
  but has significantly lower computational costs.
On real social networks,
  we also explore the scenario where we are allowed to determine 
  how to spend the limited budget on both targeting initial adopters and boosting users.
Experimental results demonstrate the importance of studying the problem of
  targeting initial adopters and boosting users with a mixed strategy.
We also conduct experiments on synthetic bidirected trees using
  \texttt{Greedy-Boost} and \texttt{DP-Boost}.
Results show the efficiency and effectiveness
  of our \texttt{Greedy-Boost} and \texttt{DP-Boost}.
In particular, we show via experiments that \texttt{Greedy-Boost}
  is extremely efficient and returns near-optimal solutions in practice.

The proposed ``boosting'' problem has several more future directions.
One direction is to design more efficient approximation algorithms or
  effective heuristics for the $k$-boosting problem.
This may requires new techniques about how to tackle the non-submodularity
  of the objective function.
These new techniques may also be applied to solve other existing or future
  questions in the area of influence maximization.
Another direction is to investigate similar problems under other influence
  diffusion models,
  for example the well-known \textit{Linear Threshold} (LT) model.
We believe the general question of to how to boost the spread of information
  is of great importance and it deserves more attention.

\bibliographystyle{IEEEtran}
\bibliography{mybib}

\ifx\techreport\undefined
\begin{appendix}
\onlytech{
\lemmaprrboostselect*
}

\onlypaper{
\begin{proof}[\bf Proof of \Cref{lemma:prr_boost_select}]
}
\onlytech{
\begin{proof}
}
Let $B$ be a boost set with $k$ nodes, we say that $B$ is a \textit{bad} set if
  $\Delta_{S}(B)<(1-1/e-\epsilon)\cdot OPT_{\mu}$.
To prove \Cref{lemma:prr_boost_select},
  we first show that each \textit{bad} boost set with $k$ nodes
  is returned by \Cref{algo:prr_boost} with a probability of
  at most $(1-n^{-\ell'})/\binom{n}{k}$.
Let $B$ be an arbitrary \textit{bad} set with $k$ nodes, we have
\begin{align}
  \Delta_{S}(B)<(1-1/e-\epsilon)\cdot OPT_{\mu}.\label[ineq]{eq:bad_solution}
\end{align}
If we return $B$ as the $B_{sa}$,
  we must have $\hat{\Delta}_{\mathcal{R}}(B)>\hat{\Delta}_{\mathcal{R}}(B_{\mu})$.
Therefore, the probability that $B$ is returned is upper bounded by
$\Pr[\hat{\Delta}_{\mathcal{R}}(B)>\hat{\Delta}_{\mathcal{R}}(B_{\mu})]$.
From \Cref{eq:prr_boost_lb2} and \Cref{eq:bad_solution}, we have
\begin{align*}
  & n\cdot \hat{\mu}_{\mathcal{R}}(B_{\mu})-\Delta_{S}(B) \\
  \geq & (1-1/e)\cdot(1-\epsilon_1)\cdot OPT_{\mu}
  - (1-1/e-\epsilon)\cdot OPT_{\mu} \\
  = &(\epsilon-(1-1/e)\cdot \epsilon_1)\cdot OPT_{\mu}.
\end{align*}
Let $\epsilon_2 = \epsilon - (1-1/e)\cdot \epsilon_1$, we have
\begin{align*}
  & \Pr[\hat{\Delta}_{\mathcal{R}}(B) > \hat{\Delta}_{\mathcal{R}}(B_{\mu})] \\
\leq & \Pr[\hat{\Delta}_{\mathcal{R}}(B) > \hat{\mu}_{\mathcal{R}}(B_{\mu})] \\
\leq & \Pr[n\cdot \hat{\Delta}_{\mathcal{R}}(B) - \Delta_{S}(B) 
        > n\cdot \hat{\mu}_{\mathcal{R}}(B_{\mu}) - \Delta_{S}(B)] \\
\leq & \Pr[n\cdot \hat{\Delta}_{\mathcal{R}}(B) - \Delta_{S}(B) 
       > \epsilon_2\cdot OPT_{\mu}].
\end{align*}
Let $p=\Delta_{S}(B)/n$,
  we know $p=\mathbb{E}[f_R(B)]$ from \Cref{lemma:prr_delta}.
Recall that $\theta\!=\!|\mathcal{R}|$ and 
  $\hat{\Delta}_{\mathcal{R}}(B)\!=\!\big(\sum_{R\in\mathcal{R}}f_R(B)\big)/\theta$.
Let $\delta = \frac{\epsilon_2\cdot OPT_{\mu}}{np}$, by Chernoff bound,
  we have
\begin{align*}
& \Pr\left[n\cdot \hat{\Delta}_{\mathcal{R}}(B) - \Delta_{S}(B) 
       > \epsilon_2\cdot OPT_{\mu}\right] \\
= & \Pr\left[\sum_{R\in\mathcal{R}}f_R(B) - \theta p 
        > \frac{\epsilon_2\cdot OPT_{\mu}}{np} \cdot \theta p\right]\\
\leq & \exp\Big(-\frac{\delta^2}{2+\delta}\cdot \theta p\Big)
= \exp\Big(-\frac{\epsilon_2^2\cdot OPT_{\mu}^{2}}
       {2n^2p + \epsilon_2\cdot OPT_{\mu} \cdot n}\cdot \theta \Big) \\
\leq & \exp\Big({-}\frac{\epsilon_2^2{\cdot} OPT_{\mu}^{2}}
                       {2n(1{-}1/e{-}\epsilon){\cdot} OPT_{\mu} 
                        {+}\epsilon_2{\cdot} OPT_{\mu} {\cdot} n}
       {\cdot} \theta \Big) \text{(by Ineq~\eqref{eq:bad_solution})}\\
\leq & \exp\Big(-\frac{(\epsilon-(1-1/e)\cdot\epsilon_1)^2\cdot OPT_{\mu}}
                       {n\cdot (2-2/e)}
       \cdot \theta \Big) \\
\leq & \exp\Big(-\log\big( \tbinom{n}{k}\cdot (2n^{\ell'}) \big) \Big)
       \quad\text{(by Ineq~\eqref{eq:prr_boost_lb1})}\\
\leq & n^{-\ell'} / \tbinom{n}{k}.
\end{align*}
Because there are at most $\binom{n}{k}$ \textit{bad} sets $B$ with $k$ nodes,
  by union bound,
  the probability that \Cref{algo:prr_boost} returns a \textit{bad}
  solution is at most $n^{-\ell'}$.
Because $OPT_{\mu}\geq \mu(B^*)$, 
  the probability that \Cref{algo:prr_boost} returns a solution so that
  $\Delta_{S}(B_{sa})<(1-1/e-\epsilon)\cdot \mu(B^*)$
  is also at most $n^{-\ell'}$.
\end{proof}

\onlytech{
\lemmaapuv*
}

\onlypaper{
\begin{proof}[\bf Proof of \Cref{lemma:apuv}]
}
\onlytech{
\begin{proof}
}
Suppose we boost the node set $B$.
\Cref{eq:apu} holds from the fact that node $u$ does not get influenced
  if and only if all its neighbors (i.e., $N(u)$) fail to influence it.
\Cref{eq:apuv_slow} holds from the fact that
  node $u$ does not get influenced
  if and only if all of its neighbors in $G_{\udelv}$ (i.e., $N(u)\backslash\{v\}$)
  fail to influence it.
\Cref{eq:apuv_fast} is a direct result from \Cref{eq:apuv_slow},
  and we have $1-ap_B(v\backslash u)\cdot\edgep{v}{u}{B}>0$ because we have assumed that
  non-seed nodes are activated with probability less than one.
\end{proof}

\onlytech{
\lemmaguv*
}

\onlypaper{
\begin{proof}[\bf Proof of \Cref{lemma:guv}]
}
\onlytech{
\begin{proof}
}
In $G_{\udelv}$, suppose we insert the node $u$ into the seed set.
The activation probability of node $u$ itself increases from $ap_B(\udelv)$ to $1$,
  increases by $1-ap_B(\udelv)$.
Let node $w$ be a neighbor of $u$ in $G_{\udelv}$.
The probability that $u$ could be activated by nodes other than $w$ increases 
  from $1-\prod_{x\in N(u)\backslash\{v,w\}}(1-ap_B(x\backslash u)\cdot\edgep{x}{u}{B})$
  to $1$,
  increases by
  \begin{align*}
    \prod_{x\in N(u)\backslash\{v,w\}}\left(1-ap_B(x\backslash u)\cdot\edgep{x}{u}{B}\right)
    =\frac{1-ap_B(\udelv)}{1-ap_B(w\backslash u)\cdot\edgep{w}{u}{B}}.
  \end{align*}
In the above equation,
  we have $1-ap_B(w\backslash u)\cdot\edgep{w}{u}{B}>0$
  because we assume that non-seed nodes are activated with probability less than one.
By definition of $g_B(\udelv)$, we have
\onlypaper{
  \begin{align*}
    g_B(\udelv) = \left(1-ap_B(\udelv)\right) 
    + \sum_{\substack{w\in N(u)\\w\neq v}}
    \frac{1-ap_B(\udelv)} {1-ap_B(w\backslash u)\cdot\edgep{w}{u}{B}}\cdot
    \edgep{u}{w}{B} \cdot g_B(w\backslash u),
  \end{align*}
  }
\onlytech{
  \begin{align*}
    g_B(\udelv) = & \left(1-ap_B(\udelv)\right) \\
    & + \sum_{\substack{w\in N(u)\\w\neq v}}
    \frac{1-ap_B(\udelv)} {1-ap_B(w\backslash u)\cdot\edgep{w}{u}{B}}\cdot
    \edgep{u}{w}{B} \cdot g_B(w\backslash u),
  \end{align*}
  }
  and \Cref{eq:guv} holds.
\Cref{eq:guv_fast} can be derived directly from \Cref{eq:guv}.
\end{proof}

\onlytech{
\lemmasigmau*
}

\onlypaper{
\begin{proof}[\bf Proof of \Cref{lemma:sigmau}]
}
\onlytech{
\begin{proof}
}
If $u\in B$ or $u\in S$, it is obvious that $\sigma_S(B\cup\{u\})=\sigma_S(B)$.
Now, consider $u\in V\backslash(B\cup S)$,
  and we insert node $u$ into the boost set $B$.
The value of $\Delta ap_B(u)$ is the increase of
  the activation probability of node $u$ itself.
The value of $\Delta ap_B(\udelv)$ is the increase of 
  the activation probabilities of node $u$ in $G_{\udelv}$.
Suppose $v$ is a neighbor of $u$.
Let $V_{v\backslash u}$ be the set of nodes in $G_{v\backslash u}$.
In graph $G$, when we insert node $u$ into $B$,
  the expected number of activated nodes in $V_{v\backslash u}$ increases by
  $\edgep{u}{v}{B}\cdot \Delta ap_B(\udelv)\cdot g_B(v,u)$.
Thus, we have \Cref{eq:sigmau} holds.
\end{proof}

\onlytech{
\thtreedp*
}

The proof of \Cref{th:tree_dp} relies on the following lemma.
\begin{lemma}\label{lemma:product_diff_corollary}
  We have $a_{1}a_{2}-b_{1}b_{2}\leq (a_{1}-b_{1})+(a_{2}-b_{2})$ for 
  $0\leq a_{i}\leq 1$, $0\leq b_{i}\leq 1$.
\end{lemma}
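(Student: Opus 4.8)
The plan is to prove the inequality through a single ``hybrid'' telescoping step that turns the difference of products into a sum of one-variable differences, after which the range constraint $a_i,b_i\in[0,1]$ can be applied factor by factor. Concretely, I would first write the exact identity
\begin{align*}
a_1 a_2 - b_1 b_2 = (a_1 - b_1)\,a_2 + b_1\,(a_2 - b_2),
\end{align*}
obtained by inserting and cancelling the cross term $b_1 a_2$. This isolates the two coordinate-wise discrepancies $a_1-b_1$ and $a_2-b_2$ and attaches to each a coefficient that lives in $[0,1]$.

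Next I would bound the two coefficients separately. Since $a_2\le 1$, the first summand satisfies $(a_1-b_1)\,a_2 \le a_1 - b_1$, and since $b_1\le 1$, the second satisfies $b_1\,(a_2-b_2) \le a_2 - b_2$; adding the two inequalities yields the claim. The only point requiring care is that these factor-wise estimates rely on the fact that multiplying by a number in $[0,1]$ cannot increase a value, which is valid precisely when the discrepancies $a_1-b_1$ and $a_2-b_2$ are nonnegative. This is exactly the situation in which the lemma is invoked inside the analysis of \texttt{DP-Boost}: there each $a_i$ is an exact (un-rounded) activation-type probability and each $b_i$ is its rounded-down counterpart $\lfloor\cdot\rfloor_\delta$, so $a_i\ge b_i\ge 0$ holds by construction.

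Thus I expect the main obstacle to be not the algebra but making explicit the monotonicity hypothesis $a_i\ge b_i$ that the intended application supplies; once that is in place, the two one-line factor bounds close the argument. As a sanity check I would verify the boundary behaviour (e.g.\ $a_1=a_2=1$, $b_1=b_2=0$ gives $1\le 2$, and equality $0=0$ when $a_i=b_i$), and I would note that the identical telescoping argument extends by a straightforward induction to products of more than two factors, $\prod_i a_i - \prod_i b_i \le \sum_i (a_i - b_i)$, should the general bidirected-tree version of \texttt{DP-Boost} require it.
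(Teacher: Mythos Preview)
Your approach is exactly the standard one; in fact the paper does not give any proof of this lemma at all --- it is simply asserted, and for the $n$-fold generalisation (Lemma~\ref{lemma:product_diff_corollary_general}) the paper only says it ``can be proved by induction'' and cites Bharathi et al.\ \cite{bharathi2007competitive}. The telescoping identity $a_1a_2-b_1b_2=(a_1-b_1)a_2+b_1(a_2-b_2)$ followed by the two factor bounds is precisely the induction step that citation unwinds to, so there is no methodological difference to discuss.

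What is worth highlighting is that you have actually caught a mis-statement in the paper. As you implicitly note, the inequality is \emph{false} under the hypotheses $0\le a_i,b_i\le 1$ alone: take $a_1=a_2=0$ and $b_1=b_2=1$, so that $a_1a_2-b_1b_2=-1$ while $(a_1-b_1)+(a_2-b_2)=-2$. The missing hypothesis is $a_i\ge b_i$, which your argument needs to push the coefficients $a_2,b_1\in[0,1]$ through the two summands. Your diagnosis that every invocation of the lemma in the \texttt{DP-Boost} proof supplies this hypothesis (each $b_i$ is a $\lfloor\cdot\rfloor_\delta$-rounded-down version of the corresponding $a_i$) is correct, so the applications in the proof of Theorem~\ref{th:tree_dp} remain valid. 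In short: your proof is fine once the hypothesis $a_i\ge b_i$ is made explicit, and the paper's statement of the lemma should carry that hypothesis as well.
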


\onlytech{
\begin{proof}
}
\onlypaper{
  \begin{proof}[\bf Proof of \Cref{th:tree_dp}]
}
The complexity of \texttt{DP-Boost} has been analyzed in the paper.
Let $B^*$ be the optimal solution of the $k$-boosting problem,
  and assume $\Delta_{S}(B^*)\geq 1$.
Let $\tilde{B}$ be the solution returned by \texttt{DP-Boost}.
To prove \Cref{th:tree_dp}, we only need to prove that we have 
\begin{align}
  \Delta_{S}(\tilde{B})\geq (1-\epsilon)\Delta_{S}(B^*).
\end{align}

Suppose we boost nodes in $B^*$.
For every node $v$,
  let $\kappa_v^*=|B^*\cap T_v|$ be the number of nodes boosted in $T_v$,
  let $c_v^*$ be the probability that node $v$ is
  activated in $T_v$ and let $f_v^*$ be the probability that node $v$'s parent
  is activated in $G\backslash T_v$.
If $v$ is the root node, let $f^*_v=0$.
For every node $v$, let $u$ be the parent of $v$ and let $v_i$ be the $i$-th
  child of $v$, we have
\onlypaper{
\begin{align}
  &g(v,\kappa^*_v,c^*_v,f^*_v) =  \Big(\sum_{v_i} g(v_i,\kappa^*_{v_i},c^*_{v_i},f^*_{v_i})\Big)
   + 1-(1-f^*_v\cdot \edgep{u}{v}{B^*})(1-c^*_v)-ap_\emptyset(v),\\
  &\Delta_{S}(B^*)=
  \sum_{v\in V} \Big(1-(1-f^*_v\cdot \edgep{u}{v}{B^*})(1-c^*_v)-ap_\emptyset(v)\Big).
\end{align}
}
\onlytech{
\begin{align}
  &g(v,\kappa^*_v,c^*_v,f^*_v) = \Big(\sum_{v_i} g(v_i,\kappa^*_{v_i},c^*_{v_i},f^*_{v_i})\Big) \nonumber\\
  & + 1-(1-f^*_v\cdot \edgep{u}{v}{B^*})(1-c^*_v)-ap_\emptyset(v),\\
  &\Delta_{S}(B^*)=
  \sum_{v\in V} \Big(1-(1-f^*_v\cdot \edgep{u}{v}{B^*})(1-c^*_v)-ap_\emptyset(v)\Big).
\end{align}
}
Define $p^*(x\rightsquigarrow y)$ as the probability that node $x$ can
  influence node $y$, given that $B^*$ is boosted.
If $x=y$, we let $p^*(x\rightsquigarrow y)=1$.

We now assign rounded values $\tilde{c}_{v}$ and $\tilde{f}_{v}$ for all node $v$.
The assignment guarantees that for every internal node $v$,
  $(\kappa^*_{v_i},\tilde{c}_{v_i},\tilde{f}_{v_i},\forall i)$ is a consistent subproblem of
  $g'(v,\kappa^*_v,\tilde{c}_v,\tilde{f}_v)$.

We first assign values $\tilde{c}_v$ for every node $v$ in $G$.
We will show that for every node $v$ we have 
\begin{align}\label[ineq]{eq:proof_tree_dp_c}
  c_v^*-\tilde{c}_v\leq \delta \cdot \sum_{x\in T_v}p^*(x\rightsquigarrow v).
\end{align} 
We assign values of $\tilde{c}$ from leaf nodes to the root node.
For a leaf node $v$, let $\tilde{c}_v=\rdelta{c^*_v}$ and
  \Cref{eq:proof_tree_dp_c} holds.
For every internal seed node $v$,
  let $\tilde{c}_v=1$ and \Cref{eq:proof_tree_dp_c} holds.
For every non-seed internal node $v$,
  let $\tilde{c}_v=\rdelta{1-\prod_{v_i} (1-\tilde{c}_{v_i}\cdot
    \edgep{v_i}{v}{B^*})}$,
  and \Cref{eq:proof_tree_dp_c} can be verified as follows.
\begin{align*}
  & c^*_v{-}\tilde{c}_v = 
    1{-}\prod_{v_i} \left(1{-}c^*_{v_i}\cdot \edgep{v_i}{v}{B^*}\right) {-}
    \bigrdelta{1{-}\prod_{v_i} \left(1{-}\tilde{c}_{v_i}\cdot \edgep{v_i}{v}{B^*}\right)} \\
  \leq & \delta{+}\sum_{v_i} (c^*_{v_i}{-}\tilde{c}_{v_i})\cdot \edgep{v_i}{v}{B^*}
  \leq \delta {+} \sum_{v_i}\Big(\delta \sum_{x\in T_{v_i}}p^*(x\rightsquigarrow v_i)
         \cdot\edgep{v_i}{v}{B^*} \Big)\\
  \leq &\delta \sum_{x\in T_v}p^*(x\rightsquigarrow v)
\end{align*}
The first inequality holds from \Cref{lemma:product_diff_corollary},
  and the second inequality holds by induction.

Now, we assign values $\tilde{f}_v$ for every node $v$ from root to leaf.
For every non-root node $v$, denote its parent by $u$, we will show that our
  assignment satisfies
\begin{align}\label[ineq]{eq:proof_tree_dp_f}
  f_{v}^*-\tilde{f}_{v}\leq \delta
  \cdot \sum_{x\in G\backslash T_{v}}p^*(x\rightsquigarrow u).
\end{align} 
For the root node $r$, we have $f^*_r=0$, and we let $\tilde{f}_r=0$.
For an internal seed node $v$ and the $i$-th child $v_i$ of $v$,
  let $\tilde{f}_{v_i}=1$ and \Cref{eq:proof_tree_dp_f} holds
  for $v_i$ because $f^*_{v_i}=1$.
For an internal non-seed node $v$ and its child $v_i$,
  denote the parent of $v$ by $u$,
  let $\tilde{f}_{v_i} =
  \bigrdelta{1-(1-\tilde{f}_v\cdot \edgep{u}{v}{B^*})\prod _{j\neq i}
    \left(1-\tilde{c}_{v_j}\cdot \edgep{v_j}{v}{B^*}\right)}$.
Then, \Cref{eq:proof_tree_dp_f} can be verified as follows.
\begin{align*}
    & f^*_{v_i}-\tilde{f}_{v_i} 
    \leq \delta + (f^*_v-\tilde{f}_v)\cdot \edgep{u}{v}{B^*} 
      + \sum_{j\neq i} (c^*_{v_j}-\tilde{c}_{v_j})\cdot \edgep{v_j}{v}{B^*} \\
    &\leq \delta + \delta\cdot\sum_{x\in G\backslash T_v}p^*(x\rightsquigarrow v) 
      + \delta\cdot\sum_{j\neq i} \sum_{x\in T_{v_j}}p^*(x\rightsquigarrow v) \\
    & \leq \delta\cdot\sum_{x\in G\backslash T_{v_i}}p^*(x\rightsquigarrow v)
\end{align*}
The first inequality holds from \Cref{lemma:product_diff_corollary},
  and the second inequality holds by induction.
  
For every internal node $v$,
  from how we assign the values of $\tilde{c}_{v}$ and $\tilde{f}_{v}$,
  we can conclude that $(\kappa^*_{v_i},\tilde{c}_{v_i},\tilde{f}_{v_i},\forall i)$
  is a consistent subproblem of $g'(v,\kappa^*_v,\tilde{c}_v,\tilde{f}_v)$.
   
Let $\tilde{B}$ be the set of nodes returned by \texttt{DP-Boost}, we have
\begin{align*}
  \Delta_{S}(\tilde{B}) \geq \sum_{v\in V} \max
  \Big\{1-(1-\tilde{f}_v\cdot \edgep{u}{v}{B^*})(1-\tilde{c}_v)-ap_\emptyset(v),0\Big\},
\end{align*}
where we use $u$ to denote the parent of $v$.
Moreover, we have
\begin{align*}
  & \Delta_{S}(B^*)-\Delta_{S}(\tilde{B}) 
  \leq \sum_{v\in V} \big( (f^*_v-\tilde{f}_v)\cdot \edgep{u}{v}{B^*}
         +(c^*_v-\tilde{c}_v)\big)\\
  & \leq \delta\sum_{v\in V}\big(\!\!\!\!\sum_{x\in G\backslash T_v}\!\!\!p^*(x\rightsquigarrow v) 
         {+} \sum_{x\in T_v} p^*(x\rightsquigarrow v)  \big)\\
  & \leq \delta \sum_{v\in V}\sum_{x\in V}p^*(x\rightsquigarrow v)
         \leq \delta \sum_{v\in V}\sum_{x\in V}p^{(k)}(x\rightsquigarrow v).
\end{align*}
Finally, recall that the rounding parameter $\delta$ of \texttt{DP-Boost} is
  $\delta = \frac{\epsilon\cdot \max(LB,1)}
  {\sum_{v\in V}\sum_{x\in V}p^{(k)}(x\rightsquigarrow v)}$,
where $LB$ is a lower bound of $\Delta_S(B^*)$.
We can conclude that
  $\Delta_{S}(\tilde{B})\geq (1-\epsilon)\cdot \Delta_{S}(B^*)$.
\end{proof}

\end{appendix}
\else
\appendices
\section{Proofs}
\hardness*

Lemma~\ref{lemma:np_hard} proves the NP-hardness of the $k$-boosting problem,
  and Lemma~\ref{lemma:sharp_p_hard} shows the \#P-hardness of the
  boost computation.

\begin{lemma}\label{lemma:np_hard}
The $k$-boosting problem is NP-hard.
\end{lemma}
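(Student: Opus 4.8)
The $k$-boosting problem is NP-hard.

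The plan is to reduce from the NP-complete Set Cover problem, which the proof of Theorem~\ref{th:hardness} already announced. Recall that an instance of Set Cover is a ground set $U = \{u_1, \ldots, u_n\}$, a family $\mathcal{S} = \{S_1, \ldots, S_m\}$ of subsets of $U$, and an integer $k$; the question is whether $k$ of the subsets suffice to cover all of $U$. First I would build a three-layered directed graph encoding this instance. The top layer consists of a single seed node $s$ (or one seed per subset, to be decided below); a middle layer has one node $x_j$ for each subset $S_j$; and a bottom layer has one node $y_i$ for each ground-set element $u_i$. I would place an edge from the seed layer into each subset node $x_j$, and an edge $x_j \to y_i$ whenever $u_i \in S_j$. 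The key design decision is to choose the base and boosted probabilities so that a subset node $x_j$ only meaningfully transmits influence to its element nodes \emph{after $x_j$ has been boosted}, while element nodes themselves need no boosting.

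The central step is to tune probabilities so that boosting exactly corresponds to ``selecting a subset into the cover.'' Concretely, I would set the probability from the seed(s) to each $x_j$ to be small or moderate in the unboosted case $p$ but large (close to $1$) in the boosted case $p'$, and set $p_{x_j y_i}$ on the middle-to-bottom edges to be large. The intuition is that an element node $y_i$ becomes activated (with appreciable probability) only if at least one subset node $x_j$ covering it is first activated, and $x_j$ is activated with high probability only when boosted. Thus the set $B$ of boosted middle-layer nodes plays the role of the chosen subcollection, and maximizing $\Delta_S(B)$ over $|B| \le k$ forces us to boost the $k$ subset nodes whose combined element-coverage is largest. I would then argue that the expected boost $\Delta_S(B)$ exceeds a computed threshold $\tau$ if and only if the boosted subsets cover all of $U$. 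The forward direction is direct: a valid cover boosts the activation probability of every $y_i$ by a fixed positive amount, giving a total boost of at least $\tau$. For the converse I would show, via the monotonicity of each $y_i$'s contribution in the number of covering subsets boosted, that falling short of a full cover leaves some $y_i$ under-activated, dropping $\Delta_S(B)$ strictly below $\tau$.

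The main obstacle I anticipate is the careful quantitative accounting in the ``only if'' direction. Because influence can in principle reach an element node $y_i$ through the unboosted probability $p$ on the seed-to-subset edges, the boost contributed by a partial cover is not cleanly zero, and I must choose the gap $p' - p$ and the middle-layer probabilities so that the contribution of a \emph{covered} element strictly dominates that of an \emph{uncovered} one by a margin that no alternative allocation of the $k$ boosts can compensate for. Setting probabilities to values such as $p \to 0$ and $p' \to 1$ (or using rational values bounded away from these limits so the construction stays polynomial in the input size and avoids degenerate zero-probability edges) should make the arithmetic tractable while preserving the strict separation. I would finish by confirming that the graph, probabilities, seed set, and budget $k$ are all constructed in time polynomial in $n + m$, completing the reduction and hence establishing NP-hardness.
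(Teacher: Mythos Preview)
Your three-layer Set Cover reduction is essentially the paper's construction. The one substantive difference is that the paper commits to concrete probabilities that make your anticipated ``only if'' obstacle disappear: it uses a single seed $s$, sets $p_{s,x_j}=\tfrac12$ and $p'_{s,x_j}=1$ on each seed-to-subset edge, and sets $p=p'=1$ on every subset-to-element edge. Because $p=p'$ on the lower edges, boosting an element node contributes exactly zero to $\Delta_S$, so an optimal $B$ may be taken entirely inside the subset layer; one then computes exactly
\[
\Delta_S(B)=\tfrac12\,|B|+\sum_{y_i\text{ covered by }B}2^{-d_{y_i}},
\]
where $d_{y_i}$ is the number of subsets containing $y_i$. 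The threshold $\tau=\tfrac{k}{2}+\sum_i 2^{-d_{y_i}}$ is therefore met \emph{precisely} when the boosted subsets cover every element, so no margin argument or limiting probabilities are needed.
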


\onlytech{
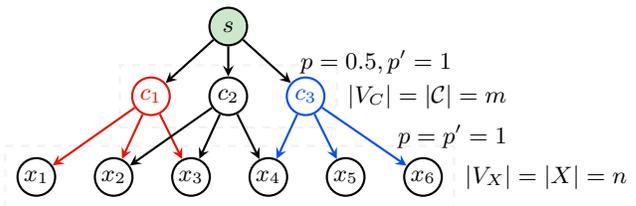
\begin{figure}[htb]
\centering
\begin{tikzpicture}[thick, node distance = 0.4cm and 0.5cm]

  \node[seed node] (s) {$s$};%
  \node[main node] (c2) [below=of s] {$c_2$};%
  \node[red, main node] (c1) [left=of c2] {$c_1$};%
  \node[blue, main node] (c3) [right=of c2] {$c_3$};%
  \node[main node] (x1) [below left=0.7cm and 1.15cm of c1] {$x_1$};%
  \node[main node] (x2) [right=of x1] {$x_2$};%
  \node[main node] (x3) [right=of x2] {$x_3$};%
  \node[main node] (x4) [right=of x3] {$x_4$};%
  \node[main node] (x5) [right=of x4] {$x_5$};%
  \node[main node] (x6) [right=of x5] {$x_6$};
  
  \node[draw, fit=(c1) (c3), inner sep=0.15cm, dashed, gray!10,
    label={right:$|V_C|=|\mathcal{C}|=m$}] {};
  \node[draw, fit=(x1) (x6), inner sep=0.15cm, dashed, gray!10,
    label={right:$|V_X|=|X|=n$}] {};

  \draw [->] (s) -- (c1); 
  \draw [->] (s) -- (c2); 
  \draw [->] (s) -- (c3) node [midway, right] {$\quad p=0.5,p'=1$};%
  \draw [red, ->] (c1) -- (x1);
  \draw [red, ->] (c1) -- (x2); 
  \draw [red, ->] (c1) -- (x3); 
  \draw [->] (c2) -- (x2); 
  \draw [->] (c2) -- (x3); 
  \draw [->] (c2) -- (x4); 
  \draw [blue, ->] (c3) -- (x4); 
  \draw [blue, ->] (c3) -- (x5); 
  \draw [blue, ->] (c3) -- (x6) node [black, midway, right] {$\quad p=p'=1$};

\end{tikzpicture}
\caption{Illustration of the NP-hardness graph construction:
  $X=\{x_1,x_2,\ldots,x_6\}$, $\mathcal{C}=\{C_1,C_2,C_3\}$,
  $C_1=\{x_1,x_2,x_3\}$, $C_2=\{x_2,x_3,x_4\}$,
  $C_3=\{x_4,x_5,x_6\}$, with $m=3$, $n=6$.}\label{fig:np_hard}
\end{figure}
}

\begin{proof}[\unskip Proof]
We prove Lemma~\ref{lemma:np_hard} by a reduction from the NP-complete
  \textit{Set Cover} problem~\cite{karp1972reducibility}.
The \textit{Set Cover} problem is as follows:
Given a ground set $X=\{e_1,e_1,\ldots,e_n\}$
  and a collection $\mathcal{C}$ of subsets $\mathcal{C}=\{C_1, C_2,\ldots,C_m\}$ of $X$,
  we want to know whether there exist $k$ subsets in $\mathcal{C}$ so that their
  union is $X$.
We assume that every element in $X$ is covered by at least one set
  in $\mathcal{C}$.
We reduce the \textit{Set Cover} problem to the $k$-boosting problem as follows.

Given an arbitrary instance of the \textit{Set Cover} problem.
We define a corresponding directed tripartite graph $G$ with $1+m+n$ nodes.
\onlytech{
Figure~\ref{fig:np_hard} shows how we construct the graph $G$.
}
Node $s$ is a seed node.
Node set $V_C=\{c_1,c_2,\ldots,c_m\}$ contains $m$ nodes,
  where node $c_i$ corresponds to the set $C_i$ in $\mathcal{C}$.
Node set $V_X=\{x_1,x_2,\ldots,x_n\}$ contains $n$ nodes,
  where node $x_i$ corresponds to the element $e_i$ in $X$.
For every node $c_i\in V_C$,
  there is a directed edge from $s$ to $c_i$ with
  an influence probability of $0.5$
  and a boosted influence probability of $1$.
Moreover, if a set $C_i$ contains an element $e_j$ in $X$,
  we add a directed edge from $c_i$ to $x_j$ with
  both the influence probability and the boosted influence probability 
  on that edge being $1$.
Denote the degree of the node $x_i\in V_X$ by $d_{x_i}$.
When we do not boost any nodes in $G$ (i.e., $B=\emptyset$),
  the expected influence spread of $S=\{s\}$ in $G$ can be computed as 
$\sigma_{S}(\emptyset)=1+m+\sum_{x_i\in V_X}\Big(1-{0.5}^{d_{x_i}}\Big)$.
The \textit{Set Cover} problem is equivalent to deciding if there is a set $B$
of $k$ boosted nodes in graph $G$ so that
$\sigma_{S}(B)=\sigma_{S}(\emptyset)+\Delta_{S}(B)=1+n+m$.
Because the \textit{Set Cover} problem~\cite{karp1972reducibility} is NP-complete,
  the $k$-boosting problem is NP-hard.
\end{proof}

\begin{lemma}\label{lemma:sharp_p_hard}
Computing $\Delta_{S}(B)$ given $S$ and $B$ is \#P-hard.
\end{lemma}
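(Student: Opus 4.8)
The plan is to establish \#P-hardness by a polynomial-time Turing reduction from the counting version of directed $s$-$t$ connectedness, which is \#P-complete~\cite{valiant1979complexity}: given a directed graph $H=(V_H,E_H)$ with a source $s$ (no incoming edges) and a sink $t$, where each edge is independently \emph{live} with probability $1/2$, compute $R=\Pr[t\text{ is reachable from }s]$; the number of live-edge subgraphs connecting $s$ to $t$ is then $2^{|E_H|}\cdot R$. The central idea is that \emph{boosting} a source-like node can be used to ``switch on'' the diffusion from $s$, so that a difference of two boost values exposes $R$.

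First I would build a boosting instance on a graph $G$ obtained from $H$ as follows. I add a seed node $\hat s$ (so $S=\{\hat s\}$) together with an edge $e_{\hat s s}$ having $p_{\hat s s}=0$ and $p'_{\hat s s}=1$; I keep every edge $e_{uv}$ of $H$ with $p_{uv}=1/2$ (and set $p'_{uv}=1$, which is irrelevant since these targets are never boosted); and I append one fresh sink $w$ with an edge $e_{tw}$ having $p_{tw}=1/2$ and $p'_{tw}=1$. Because $\hat s$ is the only seed and $e_{\hat s s}$ is dead unless $s$ is boosted, we have $\sigma_S(\emptyset)=1$. When $s$ is boosted, $e_{\hat s s}$ becomes live with certainty, $s$ is activated, and the remaining diffusion inside $H$ is exactly IC propagation with edge-liveness probability $1/2$. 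Writing $a_v$ for the probability that $v$ is reached from $s$ in the random $1/2$-subgraph of $H$, I would verify that $a_t=R$.

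The key step is to isolate $a_t$ by taking a difference of two boost computations. Boosting only $s$ leaves $w$ unboosted, so $e_{tw}$ is live with probability $1/2$ and $a_w=\tfrac12 a_t$, giving $\Delta_S(\{s\})=\sum_{v\in V_H}a_v+\tfrac12 a_t$. Boosting both $s$ and $w$ makes $e_{tw}$ live with probability $1$, so $a_w=a_t$ while the values $a_v$ for $v\in V_H$ are unchanged (as $w$ is a sink), giving $\Delta_S(\{s,w\})=\sum_{v\in V_H}a_v+a_t$. Subtracting cancels the common $\sum_{v\in V_H}a_v$ term and yields $\Delta_S(\{s,w\})-\Delta_S(\{s\})=\tfrac12 a_t=\tfrac12 R$. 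Hence two evaluations of $\Delta_S$ determine $R=2\big(\Delta_S(\{s,w\})-\Delta_S(\{s\})\big)$ and therefore the count $2^{|E_H|}R$; since the construction is of polynomial size, a polynomial-time algorithm for $\Delta_S$ would solve a \#P-complete problem.

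The main obstacle I expect is the modeling bookkeeping rather than any deep argument: I must argue carefully that boosting a node $v$ affects only the incoming edges of $v$ (so that boosting $s$ acts purely as an on-switch on $e_{\hat s s}$ and does not perturb probabilities inside $H$), respect the strict constraint $p'_{uv}>p_{uv}$ everywhere (handled by placing the only edges whose \emph{effective} probability is $1$ at boosted endpoints, while leaving $p'$ free on edges whose target is never boosted), and confirm that IC activation from $s$ coincides exactly with reachability in the independent $1/2$-subgraph so that $a_t=R$. Once these points are checked, the difference identity and the reduction are immediate.
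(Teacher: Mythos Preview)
Your reduction is correct, but it is more elaborate than necessary and differs from the paper's in two respects.

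First, the paper dispenses with your ``on-switch'' gadget $\hat s\to s$ entirely: it simply takes $s$ itself as the seed. Second, and consequently, the paper needs only a \emph{single} evaluation of $\Delta_S$ rather than a two-query difference. Concretely, the paper adds just one fresh sink $t'$ with edge $e_{tt'}$ (your $w$), sets $p=0.5$, $p'=1$ on every edge, and boosts only $B=\{t'\}$. Since boosting $t'$ affects only the incoming edge $e_{tt'}$ and $t'$ has no outgoing edges, the sole change between $\sigma_S(\{t'\})$ and $\sigma_S(\emptyset)$ is at $t'$ itself, giving directly $\Delta_S(\{t'\})=(p'_{tt'}-p_{tt'})\cdot\Pr[s\text{ reaches }t]=R/2$. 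Your two-query subtraction $\Delta_S(\{s,w\})-\Delta_S(\{s\})$ is doing exactly the same cancellation of the $\sum_{v\in V_H}a_v$ term that the single-query version gets for free by never perturbing those nodes in the first place. Both arguments are valid Turing reductions; the paper's is just tighter. Your additional care about $s$ being a source and the strict inequality $p'>p$ is fine but also not needed in the simpler construction, since there $s$ is a seed (so its incoming edges are irrelevant) and $1>0.5$ holds uniformly.
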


\onlytech{
\begin{figure}[htb]
\centering
\begin{tikzpicture}[thick, node distance = 1cm and 2cm, graph
  node/.style={draw, ellipse, minimum height=1.5cm,
    minimum width=3cm,fill=gray!15, align=center}]

  \node[graph node] (G) {$G_1$\\$p=0.5$};
  \node[seed node] (s) [left= -8mm of G] {$s$};
  \node[main node] (t) [right= -8mm of G] {$t$};
  \node[main node] (tt) [right= 2.2cm of t] {$t'$};
  \node[draw, fit=(s) (tt), inner sep=0.7cm, dashed, 
  label={[xshift=2.4cm, yshift=-0.5cm]{$G_2$}}] {};

  \draw [->] (t) -- (tt)
    node [midway, above] {$p_{tt'}=0.5$}
    node [midway, below] {$p_{tt'}'=1$};%

\end{tikzpicture}
\caption{Illustration of the \#P-hardness graph
  construction.}\label{fig:sharp_p_hard}
\end{figure}
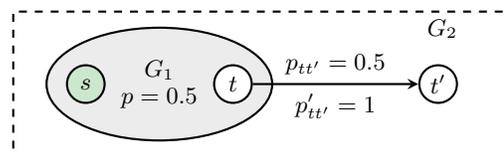
}

\begin{proof}[\unskip Proof]
We prove Lemma~\ref{lemma:sharp_p_hard} by a reduction from the
  \#P-complete counting problem of $s$-$t$
  connectedness in a directed graph~\cite{valiant1979complexity}.
An instance of the $s$-$t$ connectedness is a directed graph $G_1=(V,E)$ and
  two nodes $s$ and $t$ in the graph.
The problem is to count the number of subgraphs of $G_1$ in which $s$
  is connected to $t$.
This problem is equivalent to computing the probability that $s$ is
  connected to $t$ when each edge in $G_1$ has an independent probability
  of $0.5$ to be connected,
  and another $0.5$ to be disconnected.

We reduce this problem to the computation of the boost of influence spread.
Let $G_2=(V\cup\{t'\},E\cup\{e_{tt'}\})$ be a directed graph.
\onlytech{
Figure~\ref{fig:sharp_p_hard} shows the construction of $G_2$.
}
In $G_2$, let $S=\{s\}$, $B=\{t'\}$.
Moreover, let $p_{uv}=0.5$ and $p'_{uv}=1$ for every edge $e_{uv}$ in $G_2$.
We compute $\Delta_{S}(B)$ for graph $G_2$.
Then, $\Delta_{S}(B)/(p_{tt'}'-p_{tt'})=2\Delta_{S}(B)$ is
  the probability that $s$ is connected to $t$ in $G_1$,
  when each edge in $G_1$ has an independent probability of $0.5$ to be connected.
Thus, we solve the $s$-$t$ connectedness counting problem.
Because the $s$-$t$ connectedness problem is \#P-complete,
  the computation of the boost of influence spread
  (i.e., $\Delta_{S}(B)$) is \#P-hard.
\end{proof}

\section{DP-Boost for General Bidirected Trees}
In this section, we extend \texttt{DP-Boost} in \Cref{sec:bidirected_fptas}
  to tackle the $k$-boosting problem on general bidirected trees.
On general bidirected trees,
  there is no restriction of the number of children of nodes.
For the exact dynamic programming,
  the description in \Cref{sec:bidirected_fptas} naturally
  works for general bidirected trees.
However, the generalization of \texttt{DP-Boost} is
  non-trivial and the definition of \textit{consistent
  subproblem} is much more involved.
Formally, the general \texttt{DP-Boost} is as follows.

\begin{definition}[General \texttt{DP-Boost}]\label{def:dpboost_general}
  Let $v$ be a node. Denote the parent node of $v$ by $u$.
  \begin{itemize}[leftmargin=*]
  \item \textit{Base case.} Suppose $v$ is a leaf node.
  If $c\neq \indicator{v\in S}$, let $g'(v,\kappa,c,f)=-\infty$;
  otherwise, let 
  \begin{align*}
    g'(v,\kappa,c,f)=
      \max \big\{1{-}(1{-}c)(1{-}f\cdot \edgep{u}{v}{\indicator{\kappa>0}}){-}ap_{\emptyset}(v),0\big\}.
  \end{align*}
  
  \item \textit{Recurrence formula.} Suppose $v$ is an internal node.
  If $v$ is a seed node, we let $g'(v,\kappa,c,f)=-\infty$ for $c\neq 1$, and let
  \begin{align*} 
    g'(v,\kappa,1,f) = \max_{\kappa=\sum\kappa_{v_i}} \sum_i g'(v_i,\kappa_{v_i},c_{v_i},1).
  \end{align*}
  If $v$ is a non-seed node with $d\geq 1$ children.
  We use $C'(v,\kappa,c,f)$ to denote the set of
  \textit{consistent subproblems} of $g'(v,\kappa,c,f)$.
  For $1\leq i\leq d$, define
  \begin{align*}
    \delta_v(i)=
    \begin{cases}
    \frac{\delta}{d-2}& 1<i<d,\\
    0& \text{otherwise}.
    \end{cases}
  \end{align*}
  Subproblems $(\kappa_{v_i},c_{v_i},f_{v_i},\forall i)$
  are consistent with $g'(v,\kappa,c,f)$ if they satisfy the following conditions.
  \begin{itemize}
  \item \textit{About $\kappa$ and $\kappa_{v_i}$}:
  $\kappa = \sum_{v_i} \kappa_{v_i} + b$ where $b\in \{0,1\}$.
  \item \textit{About $c$ and $c_{v_i}$}:
  $c=\rdelta{x_d}$, where $x_0=0$ and $x_i= \rdeltav{1-(1-x_{i-1}) (1-c_{v_i}\cdot
    \edgep{v_i}{v}{b})}{\delta_v(i)}$ for $1\leq i\leq d$.
  \item \textit{About $f$ and $f_{v_i}$}:
  $f_{v_i} = \bigrdelta{1-(1-x_{i-1})(1-y_i)}$ for $1\leq i\leq d$, where
  $y_d = f\cdot \edgep{u}{v}{b}$ and
  $y_{i} = \rdeltav{1-(1-y_{i+1})
    (1-c_{v_{i+1}}\cdot\edgep{v_{i+1}}{v}{b})}{\delta_v(i)}$ for $1\leq i\leq d$.
  \end{itemize}

  If $C'(v,\kappa,c,f)=\emptyset$, let $g'(v,\kappa,c,f)=\!-\!\infty$;
  otherwise, let 
  \begin{align*} 
    g'(v,\kappa,c,f) = \!\!\!\!\!\!\!\max_{\substack{(\kappa_{v_i},f_{v_i},c_{v_i},\forall i)\\
    \in C'(v,\kappa,c,f),\\b=k-\sum_i \kappa_{v_i}}}
    \left(\substack{\sum_i g'(v_i,\kappa_{v_i},c_{v_i},f_{v_i})+ \\
    \max\{1{-}(1{-}c)(1{-}f\cdot \edgep{u}{v}{b}){-}ap_{\emptyset}(v),0\}} \right).
  \end{align*}
  \end{itemize}
\end{definition}

For a bidirected tree where every node has at most two children,
  Definition~\ref{def:dpboost_general} degenerates to
  Definition~\ref{def:dpboost}.
Definition~\ref{def:dpboost_general} defines consistent subproblems
  of $g'(v,\kappa,c,f)$,
  with some intermediate variables $x_i$ and $y_i$.
Intuitively, when there is no rounding,
  $x_i$ is the probability that $v$ is activated in $T_v\backslash(\cup_{j>i}T_{v_j})$,
  and $y_i$ is the probability that $v$ is activated in $G\backslash(\cup_{j\leq
  i}T_{v_j})$.
To prevent the number of possible values of $x_i$ and
  $y_i$ from growing exponentially with $d$,
  we also round $x_i$ and $y_i$ for all $1<i<d$.

\Cref{algo:rounded_dp_general} depicts the framework of the general
  \texttt{DP-Boost}.
First, Lines~\ref{algo:dpboost_delta_start}-\ref{algo:dpboost_delta_end}
  determine the rounding parameter $\delta$.
With the rounding parameter $\delta$, we compute the values of $g'(\ldots)$ bottom-up.
There are four subroutines that compute the values of $g'(v,\ldots)$
  for different types of nodes.
We will describe each subroutine in detail.
For notational convenience, in the remaining of this section,
  we use $u$ to denote the parent of node $v$ when the context is clear.

\begin{algorithm}[t]
\SetInd{0.5em}{0.5em}
\SetKw{KwAnd}{and}%
\SetKwFunction{algo}{Rounded\_DP}%
\SetKwFunction{subleaf}{Leaf}%
\SetKwFunction{subseed}{InternalSeed}%
\SetKwFunction{subnonseedchild}{NonseedWithChild}%
\SetKwFunction{subnonseedchildren}{NonseedWithChildren}%
\SetKwProg{myalgorithm}{Algorithm}{}{}%
$B_{greedy}=\textit{Greedy-Boost}(G,S,k)$\label{algo:dpboost_delta_start}\;
$\delta = \frac{\epsilon\cdot \max(\Delta_{S}(B_{greedy}),1)}
{2\sum_{v\in V}\sum_{x\in V}p^{(k)}(x\rightsquigarrow v)}$\tcp*{rounding parameter}\label{algo:dpboost_delta_end}
\For(\tcp*[f]{compute $g'(\dots)$}){nodes $v$ from leaf to root} {%
  \lIf{$v$ is a leaf} {\subleaf{$v$}}%
  \lElseIf{$v$ is a seed} {\subseed{$v$}}%
  \lElseIf{$v$ has non child} {\subnonseedchild{$v$}}%
  \lElse{\subnonseedchildren{$v$}}%
}%
\KwRet{node set $B$ corresponding to $\max_c g'(r,k,c,0)$}\;
\caption{General DP-Boost $(G,S,k,\epsilon)$}\label{algo:rounded_dp_general}
\end{algorithm}

\mynoindent\textbf{\texttt{Leaf($v$)}}:
Suppose node $v$ is a leaf node, we assign $g'(v,\kappa,c,f)$ for all $\kappa$,
  and rounded value of $c$ and $f$ by \Cref{def:dpboost_general}.
There are $O(k/\delta^2)$ entries, assigning each entry takes $O(1)$ time.
Thus, the complexity of this subroutine is $O(k/\delta^2)$.

\mynoindent\textbf{\texttt{InternalSeed($v$)}}: %
Suppose node $v$ is a internal seed node with $d$ children ($d\geq 1$). 
We first initialize $g'(v,\kappa,c,f)=-\infty$ for all $\kappa$ and rounded
  values of $c$ and $f$.
Because $v$ is a seed node, node $v$ is activated in $T_v$
  with probability $c=1$ and we must have $f_{v_i}=1$
  for the $i$-th child $v_i$ of $v$.
However, the number of assignments of 
  $(\kappa_{v_i},c_{v_i},f_{v_i}=1,\forall i)$ can still
  grow exponentially with the number of children,
  because $c_{v_i}$ for each children $v_i$ could have up to $1/\delta$ possible values.
To tackle this problem, we define a \textit{helper function}
  $h(i,\kappa)$ for $1\leq i\leq d$ and $0\leq \kappa\leq k$.
When there is no rounding,
  the value of $h(i,\kappa)$ is the maximum increase of the expected number of
  activated nodes in $\cup_{j\leq i}T_{v_j}$ (i.e., the first $i$ subtrees of $v$),
  given that we boost at most $\kappa$ nodes in the first $i$ subtrees of $v$.
Formally, $h(i,\kappa)$ is defined as follows:
\begin{align*}
  h(i,\kappa) = &\max_{\kappa = \sum_{j=1}^i \kappa_{j}}
                \left\{\sum_{j=1}^i 
                  \max_{c_{v_j}}\big\{g'(v_{j},\kappa_{v_j},c_{v_j},1)\big\}
                \right\}.
\end{align*}
Assuming $h(0,\kappa)=0$ for all $\kappa$.
The value of $h(i,\kappa)$ for $i>1$ can be efficiently computed as follows:
\begin{align}
  h(i,\kappa) = &\max_{0\leq \kappa_{v_i}\leq \kappa} \left\{ h(i-1, \kappa-\kappa_{v_i}) 
                + \max_{c_{v_i}}g'(v_i,\kappa_{v_i}, c_{v_i}, 1)\right\}.
\end{align}
\Cref{algo:subroutine_seed} depicts \texttt{InternalSeed($v$)}.
The complexity of this subroutine is $O(d\cdot k\cdot (k+1/\delta))$.

\begin{algorithm}[t]
\SetKw{KwAnd}{and}%
Initialize $g'(v,\kappa,c,f)\leftarrow -\infty$ for all $0\leq\kappa\leq k$ and rounded $c$ and $f$\;%
Initialize $h(i,\kappa)\leftarrow 0$, for $0\leq i\leq d$ and $0\leq \kappa\leq k$\;%
\For(\tcp*[f]{$d$ is the number of children of $v$}){$i\leftarrow 1$ \KwTo $d$}{%
  \For{$\kappa_{v_i}\leftarrow 0$ \KwTo $k$} {%
    $\text{max}_g\leftarrow \max_{c_{v_i}}g'(v_i,\kappa_{v_i},c_{v_i},1)$%
    \tcp*{$f_{v_i}=1$}%
    \For{$\kappa\leftarrow \kappa_{v_i}$ \KwTo $k$} {%
      $h(i,\kappa) = \max\big\{h(i,\kappa), h(i-1,\kappa-\kappa_{v_i}) +
      \text{max}_g\big\}$\;%
    }%
  }%
}%
\ForAll{$0\leq\kappa\leq k$ \KwAnd rounded $f$}{%
  $g'(v,\kappa,1,f) \leftarrow h(d,\kappa)$%
  \tcp*{boost $\kappa$ nodes among the first $d$ subtrees}%
}%
\caption{\texttt{InternalSeed($v$)}}\label{algo:subroutine_seed}%
\end{algorithm}

\mynoindent\textbf{\texttt{NonseedWithChild($v$)}}: %
Suppose $v$ is a non-seed node with one child.
We first initialize $g'(v,\kappa,c,f)=-\infty$ for all $\kappa$ and rounded
  values of $c$ and $f$.
Then, we compute $g'(v,\kappa,c,f)$ by Definition~\ref{def:dpboost_general}.
\Cref{algo:subroutine_child} depicts this subroutine.
The complexity of this subroutine is $O(k/\delta^2)$.

\begin{algorithm}[t]
Initialize $g'(v,\kappa,c,f)\leftarrow -\infty$ for all $0\leq\kappa\leq k$ and rounded $c$ and $f$\;%
\ForAll{$b\in\{0,1\}$, rounded $c_{v_1},f$}{%
  $c\leftarrow \rdelta{c_{v_1}\cdot\edgep{v_1}{v}{b}}$, %
  $f_{v_1}\leftarrow \rdelta{f\cdot\edgep{u}{v}{b}}$\; %
  $boost_v\leftarrow \max\big\{1-(1-c)(1-f\cdot \edgep{u}{v}{b})-ap_\emptyset(v),0\big\}$\;
  \For{$\kappa=b$ \KwTo $k$}{%
    $g'(v,\kappa,c,f)\leftarrow \max\big\{g'(v,\kappa,c,f),
    g'(v_1,\kappa\!-\!b,c_{v_1},f_{v_1})+boost_v\big\}$\;%
  }%
}%
\caption{\texttt{NonseedWithChild($v$)}}\label{algo:subroutine_child}%
\end{algorithm}

\mynoindent\textbf{\texttt{NonseedWithChildren($v$)}}: %
Suppose node $v$ is a non-seed node with $d\geq 2$ children.
Similar to the subroutine \texttt{InternalSeed($\cdot$)},
  we need a \textit{helper function}.
Let $h(b,i,\kappa,x_i,z_i)$ be the \textit{helper function}, where
  $b\in\{0,1\}$, $2\leq i\leq d$.
Moreover, for $h(b,i,\kappa,x_i,z_i)$, 
  we only consider values of $x_i$ and $z_i$ that are multiples of $\delta_v(i)$.
The helper function $h(b,i,\kappa,x_i,y)$ is formally defined as follows.
\begin{align*}
  h(b,i,&\kappa,x_i,z_i) =\max \quad \sum_{j=1}^{i} g(v_j,\kappa_{v_j},c_{v_j},f_{v_j}) \\
  s.t.\quad &\kappa = \sum_{j=1}^i \kappa_{v_j} + b,  x_0=0, \\
  & x_j= \rdeltav{1{-}(1{-}x_{j{-}1})
    (1{-}c_{v_j}\cdot \edgep{v_j}{v}{b})}{\delta_v(j)}\ (1\leq j\leq i),\\
  &y_i = z_i\cdot \edgep{u}{v}{b} \text{ if } i=d \text{ and } y_i=z_i \text{ otherwise}\\
  &y_{j} = \rdeltav{1{-}(1{-}y_{j+1})
    (1{-}c_{v_{j+1}}\cdot\edgep{v_{j+1}}{v}{b})}{\delta_v(j)} (1\leq j<i)\\
  &f_{v_j} = \rdelta{1{-}(1{-}x_{j{-}1})(1{-}y_j)}\quad (1\leq j\leq i)\\
\end{align*}
When there is no rounding, $h(b,i,\kappa,x,y)$ is the maximum boost of first $i$
  subtrees of $v$ given that
  (1) $b$ indicates whether we boost node $v$;
  (2) $\kappa-b$ nodes are boosted in $\cup_{j\leq i}T_{v_j}$;
  (3) $x$ is the probability that $v$ is activated in
  $T_v\backslash(\cup_{j>i}T_{v_j})$; and
  (4) $y$ is the probability that $v$ is activated in $G\backslash(\cup_{j\leq
  i}T_{v_j})$.
Comparing the above definition of the helper function to 
  Definition~\ref{def:dpboost_general}, we know 
\begin{align*}
  g'(v,\kappa,c,f) = &\max_{0\leq b\leq \indicator{\kappa>0}} \left(\substack{h(b,d,\kappa-b,c,f) \\
  + \max\{1{-}(1{-}c)(1{-}f\cdot\edgep{u}{v}{b}){-}ap_\emptyset(v),0\}}\right)
\end{align*}
For the boundary case where $i=2$, the helper function is computed by its definition.
When $2<i\leq d$, the helper function is computed efficiently as follows.
\begin{align*}
  h(b,i,&\kappa,x_i,z_i) = \max \left( \substack{h(b,i-1,\kappa - \kappa_{v_i},x_{i-1},z_{i-1})\\+g'(v_i,\kappa_{v_i},c_{v_i},f_{v_i})} \right)\\
  s.t.\quad & 0\leq \kappa_{v_i}\leq \kappa, \ %
              x_i = \rdeltav{1-(1-x_{i-1})(1-c_{v_i}\cdot \edgep{v_i}{v}{b})}{\delta_v(i)}\\
          & y_i = z_i\cdot \edgep{u}{v}{b} \text{ if } i=d 
            \text{ and } y_i=z_i \text{ otherwise}\\
          & z_{i-1} = \rdeltav{1-(1-c_{v_i}\cdot\edgep{v_i}{v}{b})(1-y_i)}{\delta_v(i-1)}\\
          & f_{v_i} = \rdelta{1-(1-x_{i-1})(1-y_i)}
\end{align*}
\Cref{algo:subroutine_children} depicts this subroutine.
The complexity of initializing $g'(v,\dots)$ is $O(k/\delta^2)$.
The complexity of
  Lines~\mbox{\ref{algo:subroutine_children_dp_s}-\ref{algo:subroutine_children_dp_e}}
  is $O(d\cdot k^2\cdot \frac{1}{\delta}\cdot {\left(
  \frac{d-2}{\delta} \right)}^2)=O(k^2d^3/\delta^3)$.
The complexity of
  Lines~\ref{algo:subroutine_children_g_s}-\ref{algo:subroutine_children_g_e}
  is $O(k/\delta^2)$.
Therefore, the complexity of \Cref{algo:subroutine_children}
  is $O(k^2d^3/\delta^3)$.

\begin{algorithm}[t]
Initialize $g'(v,\kappa,c,f)\leftarrow -\infty$ for all $0\leq\kappa\leq k$ and rounded $c$ and $f$\;%
\For{$b\leftarrow 0$ \KwTo $1$}{%
  Initialize $h(b,i,\kappa,x_i,z_i)\leftarrow 0$, $\forall
  i,\kappa,x_i,z_i$\;%
  \ForAll(\tcp*[f]{boundary case})%
  {rounded $c_{v_1}, c_{v_2}, z_{2}$}{\label{algo:subroutine_children_dp_s}%
    $x_2\leftarrow \rdeltav{1-(1-c_{v_1}\cdot\edgep{v_1}{v}{b})
      (1-c_{v_2}\cdot\edgep{v_2}{v}{b})}{\delta_v(2)}$\;%
    \leIf{$d=2$}{$y_2\leftarrow z_{2}\cdot\edgep{u}{v}{b}$}{$y_2\leftarrow
      z_{2}$}%
    $f_{v_1} \leftarrow \rdelta{1-(1-c_{v_2}\cdot \edgep{v_2}{v}{b})(1-y_2)}$,\ %
    $f_{v_2} \leftarrow \rdelta{1-(1-c_{v_1}\cdot \edgep{v_1}{v}{b})(1-y_2)}$\;%
    \ForAll{$\kappa_{v_1}+\kappa_{v_2}+b\leq k$}{%
      $\kappa \leftarrow \kappa_{v_1} + \kappa_{v_2} + b$\;%
      $h(b,2,\kappa,x_2,z_2)\leftarrow \max\big\{h(b,2,\kappa,x_2,z_2),
      g'(v_1,\kappa_{v_1},c_{v_1},f_{v_1})+g'(v_2,\kappa_{v_2},c_{v_2},f_{v_2})\big\}$\;%
    }%
  }%
  \For(\tcp*[f]{helper function for $2< i\leq d$}){$i=3$ \KwTo $d$}{%
    \ForAll{rounded $x_{i-1},c_{v_i},z_i\in[0,1]$}{%
      $x_i\leftarrow
      \rdeltav{1-(1-x_{i-1})(1-c_{v_i}\cdot\edgep{v_i}{v}{b})}{\delta_v(i)}$\;%
      \leIf{$i=d$}{$y_i\leftarrow z_i\cdot\edgep{u}{v}{b}$}%
      {$y_i\leftarrow z_i$}%
      $z_{i-1} \leftarrow
      \rdeltav{1-(1-c_{v_i}\cdot\edgep{v_i}{v}{b})(1-y_i)}{\delta_v(i-1)}$\;%
      $f_{v_i} \leftarrow \rdelta{1-(1-x_{i-1})(1-y_i)}$\;%
      \ForAll{$\kappa$ \KwAnd $0\leq\kappa_{v_i}\leq \kappa$}{%
        $h(b,i,\kappa,x_i,z_i)\leftarrow\max\big\{h(b,i,\kappa,x_i,z_i),
        h(b,i-1,\kappa-\kappa_{v_i},x_{i-1},z_{i-1})+g'(v_i,\kappa_{v_i},c_{v_i},f_{v_i})\big\}$
      }%
    }%
  }\label{algo:subroutine_children_dp_e}%
  \ForAll(\tcp*[f]{$g'(v,\ldots)$}){$b\leq \kappa\leq k$ and rounded $c,f$}%
  {\label{algo:subroutine_children_g_s}%
    $boost_v \leftarrow \{1-(1-c)(1-f\cdot\edgep{u}{v}{b})-ap_\emptyset(v),0\}$\;%
    $g'(v,\kappa,c,f)\leftarrow \max\big\{g'(v,\kappa,c,f),
    h(b,d,\kappa,c,f)+boost_v\big\}$\label{algo:subroutine_children_g_e}\;%
  }
}%
\caption{\texttt{NonseedWithChildren($v$)}}\label{algo:subroutine_children}%
\end{algorithm}

\mynoindent\textbf{Complexity of the general \texttt{DP-Boost}:}
In the general \texttt{DP-Boost},
  we first determine the rounding parameter $\delta$.
The time complexity is $O (kn^2)$, as in \Cref{sec:bidirected_fptas}.
With the rounding parameter,
  we compute the values of $g' (\cdot)$ bottom-up.
The most expensive subroutine is the
  \texttt{NonseedWithChildren} subroutine, which
  runs in $O (k^2d^3/\delta^3)$.
Therefore, the total complexity of the general \texttt{DP-Boost}
  is $O (k^2/\delta^3 \cdot \sum_v d_v^3)$, where $d_v$ is the number of
  children of node $v$.
In the worst case,
  we have $O (1/\delta) = O (n^2/\epsilon)$ and $O (\sum_v{d_v}^3)=O ({n}^3)$.
Therefore, the complexity for the general \texttt{DP-Boost} is
  $O (k^2n^9/\epsilon^3)$.
For the special case where the number of children for
  every node is bounded by a constant (e.g., two),
  we have $O(\sum_v d_v^3) = O (n)$
  and the complexity of the general \texttt{DP-Boost} is $O (k^2n^7/\epsilon^3)$.
To conclude, we have the following theorem.

\begin{theorem}\label{th:boostdp_general}
Assuming the optimal boost of influence is at least one,
  the general \texttt{DP-Boost} is a fully-polynomial time approximation scheme,
  it returns a $(1-\epsilon)$-approximate solution in $O(k^2n^9/\epsilon^3)$.
For bidirected trees where the number of children of nodes is upper bounded by a constant,
  the general \texttt{DP-Boost} runs in $O (k^2n^7/\epsilon^3)$.
\end{theorem}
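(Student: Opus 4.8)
The plan is to mirror the proof of \Cref{th:tree_dp}, since the general \texttt{DP-Boost} of \Cref{def:dpboost_general} reduces to its two-children counterpart when each node has at most two children. The two complexity bounds $O(k^2n^9/\epsilon^3)$ and $O(k^2n^7/\epsilon^3)$ have already been established by the subroutine-level analysis preceding the theorem (the dominant subroutine being \texttt{NonseedWithChildren}, costing $O(k^2d^3/\delta^3)$ per node, summed to $O(k^2/\delta^3\cdot\sum_v d_v^3)$ with the worst-case $O(1/\delta)=O(n^2/\epsilon)$). Hence the substantive task is the approximation guarantee: assuming $\Delta_{S}(B^*)\geq 1$, the returned set $\tilde{B}$ must satisfy $\Delta_{S}(\tilde{B})\geq(1-\epsilon)\Delta_{S}(B^*)$.

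First I would fix the optimal set $B^*$ and extract the true quantities $\kappa^*_v,c^*_v,f^*_v$ at each node $v$, together with the true intermediate probabilities $x^*_i,y^*_i$ arising in the unrounded recurrence of \Cref{def:dpboost_general}. I would then build rounded surrogates $\tilde{c}_v,\tilde{f}_v$ (and the associated $\tilde{x}_i,\tilde{y}_i$) bottom-up for $c$ and top-down for $f$, so that at every internal node $(\kappa^*_{v_i},\tilde{c}_{v_i},\tilde{f}_{v_i},\forall i)$ is a consistent subproblem of $g'(v,\kappa^*_v,\tilde{c}_v,\tilde{f}_v)$. The invariants to maintain are exactly those of \Cref{th:tree_dp}:
\begin{align*}
  c^*_v-\tilde{c}_v \leq \delta\sum_{x\in T_v}p^*(x\rightsquigarrow v),\qquad
  f^*_v-\tilde{f}_v \leq \delta\sum_{x\in G\backslash T_v}p^*(x\rightsquigarrow u),
\end{align*}
where $u$ denotes the parent of $v$.

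The main obstacle, and the genuinely new ingredient over \Cref{th:tree_dp}, is controlling the \emph{additional} rounding that \texttt{DP-Boost} applies to the intermediate variables at granularity $\delta_v(i)=\delta/(d-2)$ for $1<i<d$. Proving the $c$-invariant now demands an inner induction along the chain $x_0,\dots,x_d$: at each step I apply \Cref{lemma:product_diff_corollary} to obtain
\begin{align*}
  x^*_i-\tilde{x}_i\leq \delta_v(i)+(x^*_{i-1}-\tilde{x}_{i-1})+(c^*_{v_i}-\tilde{c}_{v_i})\cdot\edgep{v_i}{v}{b},
\end{align*}
and telescope. The key observation is that the accumulated intermediate rounding is $\sum_{1<i<d}\delta_v(i)=\delta$, \emph{independent of the degree} $d$, so the total rounding charged at node $v$ is at most $2\delta$ (one $\delta$ from the chain and one from the final $\rdelta{\cdot}$ producing $c$ and each $f_{v_i}$). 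An analogous top-down induction along $y_d,\dots,y_1$ yields the $f$-invariant. This extra factor of two is precisely why the rounding parameter in \Cref{algo:rounded_dp_general} carries a factor $2$ in its denominator relative to the two-children algorithm.

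With both invariants in place, I would finish as in \Cref{th:tree_dp}: because $g'$ only ever rounds down, $\Delta_{S}(\tilde{B})$ dominates the sum of per-node boosts evaluated at $(\tilde{c}_v,\tilde{f}_v)$, and telescoping the per-node losses gives
\begin{align*}
  \Delta_{S}(B^*)-\Delta_{S}(\tilde{B})\leq 2\delta\sum_{v\in V}\sum_{x\in V}p^{(k)}(x\rightsquigarrow v).
\end{align*}
Substituting $\delta=\tfrac{\epsilon\cdot\max(LB,1)}{2\sum_v\sum_x p^{(k)}(x\rightsquigarrow v)}$ and using $LB\leq\Delta_{S}(B^*)$ together with $\Delta_{S}(B^*)\geq 1$ yields $\Delta_{S}(B^*)-\Delta_{S}(\tilde{B})\leq\epsilon\,\Delta_{S}(B^*)$, establishing the claim and hence that the general \texttt{DP-Boost} is an FPTAS.
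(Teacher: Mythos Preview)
Your proposal is correct and follows essentially the same route as the paper's proof: fix $B^*$, build rounded surrogates $\tilde c_v$ bottom-up and $\tilde f_v$ top-down so that consistency in \Cref{def:dpboost_general} is preserved, telescope the chain $\tilde x_0,\dots,\tilde x_d$ (resp.\ $\tilde y_d,\dots,\tilde y_1$) using \Cref{lemma:product_diff_corollary}, observe that the intermediate roundings sum to $\delta$ per node, and conclude via the halved rounding parameter.

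One small correction: the inductive invariants you display with coefficient $\delta$ (the ones ``exactly those of \Cref{th:tree_dp}'') will not close under the recursion here. Because each node now incurs $2\delta$ of local rounding (the chain plus the final $\lfloor\cdot\rfloor_\delta$), the invariants you must carry are $c^*_v-\tilde c_v\le 2\delta\sum_{x\in T_v}p^*(x\rightsquigarrow v)$ and $f^*_v-\tilde f_v\le 2\delta\sum_{x\in G\setminus T_v}p^*(x\rightsquigarrow u)$, exactly as the paper states; the factor $2$ lives in the invariant itself, not only in the final telescoped bound. With that adjustment your argument matches the paper's.
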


We have analyzed the complexity of the general \texttt{DP-Boost}. 
To prove Theorem~\ref{th:boostdp_general}, we only need to prove the following lemma
  about the approximation ratio.

\begin{lemma}\label{lemma:boostdp_general}
Let $B^*$ be the optimal solution of the $k$-boosting problem,
  and assume $\Delta_{S}(B^*)\geq 1$.
Let $\tilde{B}$ be the solution returned by the general \texttt{DP-Boost},
  we have 
\begin{align}
  \Delta_{S}(\tilde{B})\geq (1-\epsilon)\Delta_{S}(B^*).
\end{align}
\end{lemma}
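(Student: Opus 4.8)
The plan is to prove \Cref{lemma:boostdp_general} by lifting the argument behind \Cref{th:tree_dp} to trees of unbounded degree, since the general \texttt{DP-Boost} of \Cref{def:dpboost_general} collapses to the binary-children version when every node has at most two children. First I would fix the optimal boost set $B^*$ and, for every node $v$, record the optimal quantities $\kappa^*_v=|B^*\cap V_{T_v}|$, the probability $c^*_v$ that $v$ is activated within its subtree, and the probability $f^*_v$ that the parent of $v$ is activated when $V_{T_v}$ is removed from $G$ (with $f^*_r=0$ at the root). These satisfy the per-node decomposition $\Delta_{S}(B^*)=\sum_{v\in V}\big(1-(1-f^*_v\edgep{u}{v}{B^*})(1-c^*_v)-ap_{\emptyset}(v)\big)$, writing $u$ for the parent of $v$. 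The goal is to construct \emph{rounded} values $\tilde{c}_v,\tilde{f}_v$ so that for every internal node $v$ the tuple $(\kappa^*_{v_i},\tilde{c}_{v_i},\tilde{f}_{v_i},\forall i)$ is a \emph{consistent subproblem} of $g'(v,\kappa^*_v,\tilde{c}_v,\tilde{f}_v)$ in the sense of \Cref{def:dpboost_general}; feasibility of this tuple then forces $g'(r,k,\cdot,0)$, and hence $\Delta_{S}(\tilde{B})$, to be at least the $B^*$-contribution evaluated at the rounded probabilities.

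The core will be two error recursions controlling $c^*_v-\tilde{c}_v$ and $f^*_v-\tilde{f}_v$, proved by induction (bottom-up for $c$, top-down for $f$) via the elementary product-difference inequality $\prod_i a_i-\prod_i b_i\le\sum_i(a_i-b_i)$, valid for $a_i,b_i\in[0,1]$. The new feature relative to \Cref{th:tree_dp} is that the relation between a node and its children is no longer a single rounded product but the \emph{chain} of intermediate accumulators $x_1,\dots,x_d$ (for $c$) and $y_d,\dots,y_1$ (for $f$) used by the helper function $h$, each rounded separately. Telescoping the product-difference inequality along a chain keeps the propagated child error equal to $\sum_i (c^*_{v_i}-\tilde{c}_{v_i})\edgep{v_i}{v}{b}$ (and symmetrically for $f$), exactly as in the bounded-degree case, while the \emph{fresh} rounding error acquired at $v$ is the sum of the per-link roundings along the chain plus the single final rounding of $c$ (respectively of each $f_{v_i}$).

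The hard part, and the reason for the modified granularity, will be bounding this accumulated chain-rounding error uniformly in the degree $d$. Each interior accumulator $x_i,y_i$ with $1<i<d$ is rounded with the \emph{finer} step $\delta_v(i)=\delta/(d-2)$, so the $d-2$ interior roundings contribute in total at most $(d-2)\cdot\delta/(d-2)=\delta$, independent of $d$; adding the one final rounding of granularity $\delta$ gives a per-node overhead of at most $2\delta$. I therefore expect the recursions to sharpen to $c^*_v-\tilde{c}_v\le 2\delta\sum_{x\in V_{T_v}}p^*(x\rightsquigarrow v)$ and $f^*_v-\tilde{f}_v\le 2\delta\sum_{x\in V\backslash V_{T_v}}p^*(x\rightsquigarrow u)$, where $p^*$ is the pairwise influence probability under $B^*$. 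Summing the per-node loss $(f^*_v-\tilde{f}_v)\edgep{u}{v}{B^*}+(c^*_v-\tilde{c}_v)$ over all $v$ then yields $\Delta_{S}(B^*)-\Delta_{S}(\tilde{B})\le 2\delta\sum_{v\in V}\sum_{x\in V}p^{(k)}(x\rightsquigarrow v)$. The extra factor $2$ is precisely why the general algorithm sets $\delta=\frac{\epsilon\cdot\max(\Delta_{S}(B_{greedy}),1)}{2\sum_{v\in V}\sum_{x\in V}p^{(k)}(x\rightsquigarrow v)}$; substituting this value cancels the $2$ and delivers $\Delta_{S}(\tilde{B})\ge(1-\epsilon)\Delta_{S}(B^*)$. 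The remaining verifications are routine: checking that the relaxation built into \Cref{def:dpboost_general} keeps $g'\le g$, so that $\tilde{B}$ is a genuine boost set of size at most $k$, and confirming the base cases (leaf and internal-seed nodes) of both recursions.
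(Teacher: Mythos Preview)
Your proposal is correct and follows essentially the same approach as the paper's own proof: fixing $B^*$, constructing rounded values $\tilde{c}_v,\tilde{f}_v$ so that the optimal allocation yields a chain of consistent subproblems of $g'$, and bounding $c^*_v-\tilde{c}_v$ and $f^*_v-\tilde{f}_v$ by $2\delta$ times the appropriate sum of $p^*(x\rightsquigarrow\cdot)$ via the product-difference inequality telescoped along the accumulator chain. Your identification of why the per-node overhead is $2\delta$ (the $d-2$ interior roundings at step $\delta/(d-2)$ plus one final rounding at step $\delta$) and hence why $\delta$ carries the extra factor $2$ in its denominator matches the paper exactly.
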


Proving \Cref{lemma:boostdp_general} relies on the following lemma,
  which can be proved by induction.
\Cref{lemma:product_diff_corollary_general} is also a direct corollary of
  Lemma~4 in~\cite{bharathi2007competitive}.
\begin{lemma}\label{lemma:product_diff_corollary_general}
For any $a_{1},\ldots,a_n$ and $b_{1},\ldots,b_n$,
  where $0\leq a_{i}\leq 1$, $0\leq b_{i}\leq 1$, we have
  $\prod_{i=1}^n a_{i} - \prod_{i=1}^n b_{i} \leq \sum_{i=1}^n (a_{i}-b_{i})$.
\end{lemma}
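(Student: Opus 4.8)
The plan is to prove \Cref{lemma:product_diff_corollary_general} by induction on the number of factors $n$, using the two-factor inequality of \Cref{lemma:product_diff_corollary} as the engine of the inductive step. The base case $n=1$ is the trivial identity $a_1 - b_1 \le a_1 - b_1$, and $n=2$ is exactly \Cref{lemma:product_diff_corollary}, so all of the real work sits in the inductive step, which I expect to be short and routine.

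For the step, I would assume the claim for $n-1$ factors and write $P = \prod_{i=1}^{n-1} a_i$ and $Q = \prod_{i=1}^{n-1} b_i$. The first thing I would check is that $P, Q \in [0,1]$, which is immediate because each factor lies in $[0,1]$; this is precisely what licenses feeding the pair $(P,Q)$ into the two-factor lemma. Applying \Cref{lemma:product_diff_corollary} to the two pairs $(P,Q)$ and $(a_n, b_n)$ then gives
\begin{align*}
  \prod_{i=1}^{n} a_i - \prod_{i=1}^{n} b_i = P\,a_n - Q\,b_n \le (P - Q) + (a_n - b_n).
\end{align*}
Finally I would bound the residual $P - Q = \prod_{i=1}^{n-1} a_i - \prod_{i=1}^{n-1} b_i \le \sum_{i=1}^{n-1}(a_i - b_i)$ by the induction hypothesis, and combining the two estimates yields $\prod_{i=1}^{n} a_i - \prod_{i=1}^{n} b_i \le \sum_{i=1}^{n}(a_i - b_i)$, closing the induction.

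An equivalent non-inductive route I would keep in reserve is the telescoping identity
\begin{align*}
  \prod_{i=1}^n a_i - \prod_{i=1}^n b_i = \sum_{j=1}^n \Big(\prod_{i<j} b_i\Big)(a_j - b_j)\Big(\prod_{i>j} a_i\Big),
\end{align*}
after which one observes that each coefficient $\big(\prod_{i<j} b_i\big)\big(\prod_{i>j} a_i\big)$ lies in $[0,1]$ and dominates the corresponding term by $a_j - b_j$. The only genuine subtlety — and the single place I would be careful — is the sign regime in which the comparison is applied: both the telescoping bound and the two-factor lemma are used where the differences $a_i - b_i$ are nonnegative, which is exactly the situation in the proof of \Cref{th:tree_dp}, since there the tilde-quantities come from rounding down, so $\tilde{c}\le c^*$ and $\tilde{f}\le f^*$ give factors ordered in the right direction. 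Once the partial products are seen to stay in the unit interval and the sign of each $a_i - b_i$ is fixed this way, every coefficient being at most one makes the term-by-term comparison go through, recovering precisely the corollary of Lemma~4 in \cite{bharathi2007competitive} cited in the statement.
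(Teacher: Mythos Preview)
Your inductive argument is correct and matches the paper's approach exactly: the paper merely states that the lemma ``can be proved by induction'' and cites it as a corollary of Lemma~4 in~\cite{bharathi2007competitive}, so you have supplied precisely the details it omits. Your caveat about the sign regime is well placed, since as written both this lemma and \Cref{lemma:product_diff_corollary} actually need $a_i\ge b_i$ (otherwise the ``coefficient in $[0,1]$'' step reverses), and that ordering indeed holds in every application the paper makes.
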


Now, we prove \Cref{lemma:boostdp_general}.
\begin{proof}
Let $B^*$ be the optimal solution for the $k$-boosting problem.
Suppose we boost nodes in $B^*$, for every node $v$,
  let $\kappa_v^*=|B\cap T_v|$ be the number of nodes boosted in $T_v$,
  let $c_v^*$ be the probability that node $v$ is activated in $T_v$,
  and let $f_v^*$ be the probability that node $v$'s parent is activated
  in $G\backslash T_v$.
For the root node $r$, let $f_r^*=0$.
For every node $v$, denote its parent by $u$, we have
\begin{align}
  & g(v,\kappa^*_v,c^*_v,f^*_v)= \Big( \sum_{v_i} g(v_i,\kappa^*_{v_i},c^*_{v_i},f^*_{v_i}) \Big) \nonumber\\
  & \quad\quad\quad + 1{-}(1{-}f^*_v\cdot \edgep{u}{v}{B^*})(1{-}c^*_v){-}ap_\emptyset(v),\\
  &\Delta_{S}(B^*)=
  \sum_{v\in V} \Big(1{-}(1{-}f^*_v\cdot \edgep{u}{v}{B^*})(1{-}c^*_v){-}ap_\emptyset(v)\Big).
\end{align}
Define $p^*(x\rightsquigarrow y)$ as the probability that node $x$ can
  influence node $y$ when we boost nodes in $B^*$. 
If $x=y$, define $p^*(x\rightsquigarrow y)=1$.

Now, we assign rounded values $\tilde{c}_{v}$ and $\tilde{f}_{v}$ for all
  node $v$.
The assignment guarantees that for every internal node $v$,
$(\kappa^*_{v_i},\tilde{c}_{v_i},\tilde{f}_{v_i},\forall i)$ is a consistent subproblem of
$g'(v,\kappa^*_v,\tilde{c}_v,\tilde{f}_v)$.

We first assign values $\tilde{c}_v$ for every node $v$ in $G$.
For every node $v$, we will show that our assignment of $\tilde{c}_v$ satisfies
\begin{align}\label[ineq]{eq:proof_fptas_general_c}
  c_v^*-\tilde{c}_v\leq 2\delta \cdot \sum_{x\in T_v}p^*(x\rightsquigarrow v).
\end{align} 
We assign values of $\tilde{c}$ from leaf nodes to the root node.
For every leaf node $v$,
  let $\tilde{c}_v=\rdelta{c^*_v}$,
  then \Cref{eq:proof_fptas_general_c} holds.
For an internal seed node $v$,
  let $\tilde{c}_v=1$,
  then \Cref{eq:proof_fptas_general_c} holds because $c^*_v=1$.
For an internal non-seed node $v$ with $d$ children,
  we compute $\tilde{c}_v$ as follows.
First, let $\tilde{x}_0=0$ and compute 
  and $\tilde{x}_{i}= \rdeltav{1-(1-\tilde{x}_{i-1})(1-\tilde{c}_{v_i}\cdot
  \edgep{v_i}{v}{B^*})}{\delta_v(i)}$ for $1\leq i\leq d$.
Then, let $\tilde{c}_v=\rdelta{\tilde{x}_d}$.
\Cref{eq:proof_fptas_general_c} can be verified as follows.
Define $x^*_0=0$ and $x^*_i=1-\prod_{j=1}^i(1-c^*_{v_j}\cdot\edgep{v_j}{v}{B^*})$ for $1\leq i\leq d$.
We have $x^*_{0}=\tilde{x}_{0}$.
Moreover, for $1\leq i\leq d$, we have
\begin{align*}
  &x^*_i-\tilde{x}_{i} %
  \leq (x^*_{i-1}-\tilde{x}_{i-1}) + (c^*_{v_i}-\tilde{c}_{v_i})\cdot\edgep{v_i}{v}{B^*} + \delta_v(i) \\
  &\leq \sum_{j=1}^i (c^*_{v_j}-\tilde{c}_{v_j})\cdot\edgep{v_j}{v}{B^*} +\sum_{j=1}^i \delta_v(j). 
\end{align*}
The first inequality holds from \Cref{lemma:product_diff_corollary_general},
  and the second inequality holds by induction.
Then, the difference between $c^*_v$ and $\tilde{c}_v$ can be bounded as follows.
\begin{align*}
  & c^*_v-\tilde{c}_v = x^*_d-\rdelta{\tilde{x}_d}
  \leq \sum_{i=1}^d(c^*_{v_i}-\tilde{c}_{v_i})\cdot \edgep{v_i}{v}{B^*}
         + \sum_{i=2}^d \delta_v(i) + \delta \\
  & \leq 2\delta\cdot\sum_{v_i}\cdot \Big(
         \sum_{x\in T_{v_i}}p^*(x\rightsquigarrow v_i)\cdot\edgep{v_i}{v}{B^*}
         \Big) + 2\delta  \\
  & \leq 2\delta\cdot\sum_{x\in T_v}p^*(x\rightsquigarrow v) 
\end{align*}

Now, we assign values $\tilde{f}_v$ for every node $v$ from root to leaf.
For every non-root node $v$, denote its parent by $u$, we will show that our
  assignment satisfies
\begin{align}\label[ineq]{eq:proof_fptas_general_f}
  f_{v}^*-\tilde{f}_{v}\leq 2\delta
  \cdot \sum_{x\in G\backslash T_{v}}p^*(x\rightsquigarrow u).
\end{align} 
For the root node $r$, we have $f^*_r=0$, and we let $\tilde{f}_r=0$.
Suppose $v$ is an internal seed node,
  for every child $v_i$, let $\tilde{f}_{v_i}=1$ and
  \Cref{eq:proof_fptas_general_f}
  holds for $v_i$ because $f^*_{v_i}=1$.
Now, suppose $v$ is an internal non-seed node with $d$ children,
  and denote the parent of $v$ by $u$.
We compute $\tilde{f}_{v_i}$ for child $v_i$ as follows.
First, let $\tilde{y}_d = \tilde{f_v}\cdot \edgep{u}{v}{B^*}$ and
  $\tilde{y}_{i} = \rdeltav{1-(1-\tilde{c}_{v_{i+1}}\cdot
  \edgep{v_{i+1}}{v}{B^*})(1-\tilde{y}_{i+1})}{\delta_v(i)}$ for $1\leq i< d$.
Then, let $\tilde{f}_{v_i}=\rdelta{1-(1-\tilde{x}_{i-1})(1-\tilde{y}_{i})}$ for $1\leq i\leq d$.
\Cref{eq:proof_fptas_general_f} can be verified as follows.
Define $y^*_d = \tilde{f_v}\cdot \edgep{u}{v}{B^*}$ and
  $y^*_{i} = 1-(1-\tilde{c}_{v_{i+1}}\cdot
  \edgep{v_{i+1}}{v}{B^*})(1-y^*_{i+1})$ for $1\leq i< d$.
For $i=d$, we have $y^*_{i}-\tilde{y}_{i}=(f^*_v-\tilde{f}_v)\cdot\edgep{u}{v}{B^*}$.
For $1\leq i<d$, the difference between $y^*_{i}$ and $\tilde{y}_{i}$ 
  can be bounded as follows.
\begin{align*}
  & y^*_{i}-\tilde{y}_{i} \leq (c^*_{v_{i+1}}-\tilde{c}_{v_{i+1}})\cdot\edgep{v_{i+1}}{v}{B^*} 
                               + (y^*_{i+1}-\tilde{y}_{i+1}) + \delta_v(i) \\
  & \leq (f^*_v-\tilde{f}_v)\cdot\edgep{u}{v}{B^*}
         +\sum_{j=i+1}^d(c^*_{v_j}-\tilde{c}_{v_j})\cdot\edgep{v_j}{v}{B^*}
         + \sum_{j=i}^d \delta_v(i). 
\end{align*}
The first inequality holds from \Cref{lemma:product_diff_corollary_general},
  and the second inequality holds by induction.
Then, for $1\leq i\leq d$, the difference between $f^*_{v_i}$ and $\tilde{f}_{v_i}$ can be
bounded as follows.
\begin{align*}
  & f^*_{v_i}-\tilde{f}_{v_i} \leq \Big( 1-(1-x^*_{i-1})(1-y^*_{i})\Big) -\tilde{f}_{v_i} \\
  &\leq (x^*_{i-1}-\tilde{x}_{i-1}) + (y^*_{i} - \tilde{y}_{i}) + \delta \\ 
  &\leq (f^*_v-\tilde{f}_v)\cdot\edgep{u}{v}{B^*}
         +\sum_{j\neq i}(c^*_{v_j}-\tilde{c}_{v_j})\cdot\edgep{v_j}{v}{B^*} 
         +\sum_{j\neq i}\delta_v(i) + \delta \\
  & \leq 2\delta\cdot\sum_{x\in G\backslash T_v}p^*(x\rightsquigarrow v) 
    + 2\delta\cdot\sum_{j\neq i} \sum_{x\in T_{v_j}}p^*(x\rightsquigarrow v) + 2\delta \\
  & \leq 2\delta\cdot\sum_{x\in G\backslash T_{v_i}}p^*(x\rightsquigarrow v)
\end{align*}

For every internal node $v$,
  from how we assign the values of $\tilde{c}_{v}$ and $\tilde{f}_{v}$,
  we can conclude that $(\kappa^*_{v_i},\tilde{c}_{v_i},\tilde{f}_{v_i},\forall i)$
  is a consistent subproblem of $g'(v,\kappa^*_v,\tilde{c}_v,\tilde{f}_v)$.

Let $\tilde{B}$ be the set of nodes returned by the general \texttt{DP-Boost}, we have
\begin{align*}
  \Delta_{S}(\tilde{B}) \geq \sum_{v\in V} \max
  \Big\{1-(1-\tilde{f}_v\cdot \edgep{u}{v}{B^*})(1-\tilde{c}_v)-ap_\emptyset(v),0\Big\},
\end{align*}
where we use $u$ to denote the parent of $v$.
Moreover, we have
\begin{align*}
  & \Delta_{S}(B^*)-\Delta_{S}(\tilde{B}) %
    \leq \sum_{v\in V} \Big( (f^*_v-\tilde{f}_v)\cdot \edgep{u}{v}{B^*} +(c^*_v-\tilde{c}_v)\Big)\\
  & \leq 2\delta\cdot\sum_{v\in V}\Big(\sum_{x\in G\backslash T_v}p^*(x\rightsquigarrow v) 
         + \sum_{x\in T_v} p^*(x\rightsquigarrow v)  \Big) \\
  & \leq 2\delta\cdot\sum_{v\in V}\sum_{x\in V}p^{(k)}(x\rightsquigarrow v).
\end{align*}
Finally, recall that the rounding parameter $\delta$ is
  $\delta = \frac{\epsilon\cdot \max(LB,1)}
  {2\sum_{v\in V}\sum_{x\in V}p^{(k)}(x\rightsquigarrow v)}$,
where $LB$ is a lower bound of $\Delta_S(B^*)$,
we can conclude that
  $\Delta_{S}(\tilde{B})\geq (1-\epsilon)\cdot \Delta_{S}(B^*)$.
\end{proof}

\mynoindent\textbf{Refinements.}
In the implementation of the general \texttt{DP-Boost},
  we also apply the refinements that we have discussed
  in \Cref{sec:bidirected_fptas}.
Recall that in \texttt{NonseedWithChildren}$(v)$ that
  computes $g'(v,\ldots)$ for a non-seed internal node $v$ with multiple children,
  we have to compute a helper function $h(b,i,\kappa,x,z)$.
In our implementation of \texttt{NonseedWithChildren}$(v)$,
  we also compute lower and upper bounds for the values of
  $x$ and $z$ for $h(b,i,\kappa,x,z)$.
The lower bound is computed assuming that we do not boost any node.
The upper bound is computed assuming that all nodes are boosted.
\fi


\end{document}